\documentclass[11pt]{article}

\clubpenalty=10000 
\widowpenalty = 10000

\usepackage{algorithm}
\usepackage{mathrsfs}
\usepackage{enumitem}
\usepackage{bbm}
\usepackage{tikz}
\usepackage{float}
\usetikzlibrary{backgrounds}
\usepackage{color}
\usepackage{graphicx}
\usepackage{latexsym}
\usepackage{amsfonts}
\usepackage{pifont,xspace,fullpage,epsfig, wrapfig}
\usepackage{amsmath, amssymb, amsthm} 
\usepackage{multirow}
\usepackage{dsfont}
\usepackage{array}

\DeclareSymbolFont{AMSb}{U}{msb}{m}{n}
\DeclareMathSymbol{\N}{\mathbin}{AMSb}{"4E}
\DeclareMathSymbol{\Z}{\mathbin}{AMSb}{"5A}
\DeclareMathSymbol{\R}{\mathbin}{AMSb}{"52}
\DeclareMathSymbol{\Q}{\mathbin}{AMSb}{"51}
\DeclareMathSymbol{\erert}{\mathbin}{AMSb}{"50}
\DeclareMathSymbol{\I}{\mathbin}{AMSb}{"49}
\DeclareMathSymbol{\C}{\mathbin}{AMSb}{"43}

\newcommand{\mynote}[2]{{\textcolor{#1}{ #2}}}

\definecolor{gray}{gray}{0.4}
\newcommand{\gray}[1]{\mynote{gray}{{\footnotesize #1}}}

\newcommand{\remove}[1]{}

\newtheorem{theorem}{Theorem}[section]
\newtheorem{lemma}[theorem]{Lemma}
\newtheorem{definition}[theorem]{Definition}
\newtheorem{remark}[theorem]{Remark}

\newtheorem{corollary}[theorem]{Corollary}

\newtheorem{observation}[theorem]{Observation}

\newcommand{\AAA}{\mathcal A}
\newcommand{\BBB}{\mathcal B}

\newcommand{\DDD}{\mathcal D}

\newcommand{\III}{\mathcal I}
\newcommand{\JJJ}{\mathcal J}

\newcommand{\NNN}{\mathcal N}

\newcommand{\eps}{\epsilon}

\newcommand{\Lap}{\operatorname{\rm Lap}}

\newcommand{\polylog}{\mathop{\rm polylog}}
\newcommand{\poly}{\mathop{\rm poly}}
\newcommand{\tower}{\mathop{\rm tower}}

\def\Q{\operatorname*{\mathbb{Q}}}
\def\poly{\mathop{\rm{poly}}\nolimits}
\def\Lap{\mathop{\rm{Lap}}\nolimits}

\makeatletter
\newcommand{\thickhline}{%
    \noalign {\ifnum 0=`}\fi \hrule height 1pt
    \futurelet \reserved@a \@xhline
}
\newcolumntype{"}{@{\hskip\tabcolsep\vrule width 1pt\hskip\tabcolsep}}
\makeatother

\newcommand{\specialcell}[2][c]{%
  \begin{tabular}[#1]{@{}c@{}}#2\end{tabular}}

\begin{document}

\begin{titlepage}
 
\title{Locating a Small Cluster Privately{\let\thefootnote\relax\footnotetext{\hspace{-1.8em}DOI: http://dx.doi.org/10.1145/2902251.2902296\vspace{5px}}}\thanks{
The presentation and analysis of Algorithm \texttt{GoodCenter} (Section~\ref{sec:GoodCenterAnalysis}) includes a step that was missing in earlier versions of this paper.
}}

\author{
\makebox[1.5in]{\hfill Kobbi Nissim\thanks{Dept.\ of Computer Science, Ben-Gurion University {\em and} CRCS, Harvard University. Supported by NSF grant CNS-1237235, a grant from the Sloan Foundation, a Simons Investigator grant to Salil Vadhan, and ISF grant 276/12.} \hfill}\\
{\small Ben-Gurion University} \\
{\small {\em and} Harvard University}\\
{\small \tt kobbi@cs.bgu.ac.il}
\and \makebox[1.5in]{\hfill Uri Stemmer\thanks{Dept.\ of Computer Science, Ben-Gurion University. Supported by the Ministry of Science and Technology, Israel.}\hfill}\\
{\small Ben-Gurion University}\\
{\small \tt stemmer@cs.bgu.ac.il}
\and \makebox[1.5in]{\hfill Salil Vadhan\thanks{Center for Research on Computation \& Society, John A. Paulson School of Engineering \& Applied Sciences, Harvard University. \texttt{http://seas.harvard.edu/\textasciitilde salil}.
Supported by NSF grant CNS-1237235, a grant from the Sloan Foundation, and a Simons Investigator grant.  Work done in part when visiting the Shing-Tung Yau Center and the Department of Applied Mathematics at National Chiao-Tung University in Hsinchu, Taiwan.}\hfill}\\
{\small Harvard University}\\
{\small \tt salil@seas.harvard.edu}
}

\remove{

\numberofauthors{3} 
\author{
\alignauthor
Kobbi Nissim\thanks{Dept.\ of Computer Science, Ben-Gurion University {\em and} CRCS, Harvard University. Supported by NSF grant CNS-1237235, a grant from the Sloan Foundation, a Simons Investigator grant to Salil Vadhan, and ISF grant 276/12.}\\
       \affaddr{Ben-Gurion University {\em and} Harvard University}\\
       \email{kobbi@cs.bgu.ac.il}
\alignauthor
Uri Stemmer\titlenote{Dept.\ of Computer Science, Ben-Gurion University. Supported by the Ministry of Science and Technology (Israel), and by the Frankel Center for Computer Science.}\\
       \affaddr{Ben-Gurion University}\\
       \email{stemmer@cs.bgu.ac.il}
\alignauthor Salil Vadhan\titlenote{Center for Research on Computation \& Society, John A. Paulson School of Engineering \& Applied Sciences, Harvard University. \texttt{http://seas.harvard.edu/\textasciitilde salil}.
Supported by NSF grant CNS-1237235, a grant from the Sloan Foundation, and a Simons Investigator grant.  Work done in part when visiting the Shing-Tung Yau Center and the Department of Applied Mathematics at National Chiao-Tung University in Hsinchu, Taiwan.}\\
       \affaddr{Harvard University}\\
       \email{salil@seas.harvard.edu}
}

}

\date{\today}
\maketitle
\setcounter{page}{0} \thispagestyle{empty}

\begin{abstract}
We present a new algorithm for locating a small cluster of points with differential privacy~[Dwork, McSherry, Nissim, and Smith, 2006]. Our algorithm has implications to private data exploration, clustering, and removal of outliers. Furthermore, we use it to significantly relax the requirements of the sample and aggregate technique~[Nissim, Raskhodnikova, and Smith, 2007], which allows compiling of ``off the shelf'' (non-private) analyses into analyses that preserve differential privacy.
\end{abstract}

\end{titlepage}

\section{Introduction} 
Clustering -- the task of grouping data points by their similarity -- is one of the most commonly used techniques for exploring data, for identifying structure in uncategorized data, and for performing a variety of machine learning and optimization tasks. 
We present a new differentially private algorithm for a clustering-related task:
Given a collection $S$ of $n$ points in the $d$-dimensional Euclidean space $\R^d$ and a parameter $t$ reflecting a target number of points, our goal is to find a smallest ball containing at least $t$ of the input points, while preserving differential privacy.

\subsection{Problem and motivation}

We recall the definition of differential privacy.  We think of a dataset as consisting of $n$ rows from a data universe $U$, where each row corresponds to one individual.  Differential privacy requires that no individual's data has a significant effect on the distribution of what we output.

\begin{definition}
A randomized algorithm $M : U^n\rightarrow Y$ is $(\eps,\delta)$ {\em differentially private} if for every two datasets $S,S'\in U^n$ that differ on one row, and every set $T\subseteq Y$, we have
$$\Pr[M(S)\in T]\leq e^{\eps}\cdot \Pr[M(S')\in T]+\delta.$$
\end{definition}

The common setting of parameters is to take $\eps$ to be a small constant and $\delta$ to be negligible in $n$, e.g., $\delta=1/n^{\log n}$. For the introduction, we will assume that that $\delta<1/(nd)$.

In this work we study the following problem under differential privacy:

\begin{definition}
A {\em 1-cluster problem} $\mathscr{C}=(X^d,n,t)$ consists of a $d$-dimensional domain $X^d$ (where $X\subseteq\R$ is finite and totally ordered), and parameters $n\geq t$.
We say that an algorithm $\cal M$ solves $(X^d,n,t)$ with parameters $(\Delta,w)$ if for every input database $S\in (X^d)^n$, algorithm ${\cal M}(S)$ outputs a center $c$ and a radius $r$ s.t.\ the following holds with high probability:
\begin{enumerate}
	\item The ball of radius $r$ around $c$ contains at least $t-\Delta$ input points (from $S$).
	\item Let $r_{opt}$ be the radius of the smallest ball in $X^d$ containing at least $t$ input points. Then $r\leq w\cdot r_{opt}$.
\end{enumerate}
\end{definition}

The 1-cluster problem is very natural on its own, and furthermore, an algorithm for solving the 1-cluster problem can be used as a building block in other applications:\\

\noindent{\bf Data exploration.}\; The 1-cluster problem has direct implications to performing data exploration privately, and, specifically to clustering. For example, one can think of an application involving map searches where one is interested in privately locating areas of certain ``types'' or ``classes'' of a given population to gain some insight of their concentration over different geographical areas.\\

\noindent{\bf Outlier detection.}\; Consider using a solution to the 1-cluster problem to locate a small ball containing, say, 90\% of the input points. This can be used as a basic private identification of outliers: The outcome of the algorithm can be viewed as defining a predicate $h$ that evaluates to one inside the found ball and to zero otherwise. $h$ can hence be useful for screening the inputs to a private analysis of the set of outlier points in the data.
Outliers can skew and mislead the training of classification and regression algorithms, and hence, excluding them from further (privacy preserving) analysis can increase accuracy.

Furthermore, outlier detection can help in reducing the noise level required for the differentially private analysis itself, which in many cases would result in a dramatic improvement in accuracy. To see how this would happen, recall that the most basic construction of differentially private algorithms is via the framework of {\em global sensitivity}~\cite{DMNS06} (see also Section~\ref{sec:prelim} below). Noise is added to the outcome of a computation, and the noise magnitude is scaled to the {\em sensitivity} of the computation, i.e., the worst-case difference that a change of a single entry of the database may incur. Restricting the input space to a ball of (hopefully) a small diameter typically results in a smaller global sensitivity, and hence also significantly less noise.\\

\noindent{\bf Sample and aggregate.}\;
Maybe most importantly, an algorithm for the 1-cluster problem can be used in the Sample and Aggregate technique~\cite{NRS07}. This generic technique allows using ``off the shelf'', {\em non-privacy preserving}, analyses and transforms their outcome so as to preserve differential privacy. 

Consider a (non-private) analysis $f$ mapping databases to (a finite subset of) $\R^d$, and assume that $f$ can be well approximated by evaluating $f$ on a random subsample taken from the database. In the sample and aggregate framework, instead of applying the analysis $f$ on the entire dataset, it is applied on several (say $k$) random sub-samples of the input dataset, obtaining $k$ outputs $S=\{x_1,x_2,\dots,x_k\}$ in $\R^d$. The outputs are then {\em aggregated} to give a privacy-preserving result $z$ that is ``close'' to some of the points in $S$. If $f$ has the property that results that are in the vicinity of ``good'' results are also ``good'', then $z$ will also be a ``good'' result. Furthermore, it suffices that (only) the aggregation procedure would be differentially private to guarantee that the entire construction satisfies differential privacy.

Using their aggregation function (discussed below), Nissim et al.\ constructed differentially private algorithms for $k$-means clustering and for learning mixtures of Gaussians~\cite{NRS07}. Smith used the paradigm in dimension $d=1$ to construct private statistical estimators~\cite{Smith11}. One of the most appealing features of the paradigm is it that allows transforming programs that were not built with privacy in mind into differentially private analyses. For example, GUPT~\cite{MTSSC12} is an implementation of differential privacy that uses differentially private averaging for aggregation. The development of better aggregators enables making the sample and aggregate paradigm more effective.

\subsection{Existing techniques}
As we will show (by reduction to a lower bound of Bun et al.~\cite{BNSV15}), solving the 1-cluster problem on infinite domains is impossible under differential privacy (for reasonable choices of parameters), so any private solution must assume a finite universe $X^d\subseteq \R^d$. We will consider the case that $X^d$ is a discrete grid, identified with the real $d$ dimensional unit cube quantized with grid step $1/(|X|-1)$.

\begin{table*}[ht!]
\begin{small}
\begin{center}
\begin{tabular}{|c"c|c|c|}
\hline
   & \specialcell{Needed cluster size -- $t$\\ Additive loss in cluster size -- $\Delta$} & \specialcell{Approximation factor\\ in radius -- $w$} & Running time\\[2ex]
\thickhline
\specialcell{Private\\aggregation~\cite{NRS07}}  & \specialcell{\vspace{-7px}\\$t \geq \max\left\{0.51n,O(\frac{d^2}{\epsilon^2}\log^2|X|)\right\}$ \\ $\Delta=0$\;\;\;\;\;\;\;\;\;\;\;\;\;\;\;\;\;\;\;\;\;\;\;\;\;\;\;\;\;\;\;\;\;\;\;\;\;\;\;\;\;\;\;} & $w=O(\sqrt{d}/\epsilon)$ & $\poly(n,d,\log|X|)$ \\[3ex]
\hline
\specialcell{\vspace{-7px}\\Exponential\\ mechanism~\cite{MT07}} & $t\geq\Delta= \tilde{O}(d)\cdot\log^2(|X|)/\epsilon$ & $w=1$ & $\poly(n,|X^d|)$\\[2ex]
\hline
\specialcell{\vspace{-7px}\\Query release\\ for threshold\\ functions~\cite{BNS13b,BNSV15}\\ {\bf ({\boldmath$d=1$} only)}} & 
\specialcell{$t\geq\Delta=\frac{1}{\epsilon}\cdot2^{(1+o(1))\log^*|X|}\cdot \log(\frac{1}{\delta})$\\[1ex] (ignoring $\polylog(n)$ factors)} & $w=1$ & $\poly(n,\log|X|)$\\[4ex]
\hline
This work & 
\specialcell{\vspace{-7px}\\
$t\geq \frac{\sqrt{d}}{\epsilon}\log^{1.5}\left(\frac{1}{\delta}\right)\cdot 2^{O(\log^*(|X|d))}$\\[1ex]
$\Delta=\frac{1}{\epsilon}\log\left(\frac{1}{\delta}\right)\cdot2^{O(\log^*(|X|d))}$\;\;\;\;\;\;
} 
& $w=O(\sqrt{\log n})$ & $\poly(n,d,\log|X|)$ \\[4ex]
\hline
\end{tabular}
\caption{\small \label{fig:compare}  Comparing different solutions from past work and our result.}
\end{center}
\end{small}
\end{table*}

We now list a few existing techniques that can be used to solve the 1-cluster problem $(X^d,n,t)$:\\

\noindent{\bf Private aggregation.} 
Nissim, Raskhodnikova, and Smith~\cite{NRS07} introduced an efficient algorithm capable of identifying a ball of radius $O(r_{opt}\cdot\sqrt{d}/\epsilon)$ containing at least $t$ points, provided that $t\geq 0.51n\geq O(\frac{d^2}{\epsilon^2}\log^2|X|)$.\footnote{
The results of~\cite{NRS07} do not assume a finite discrete grid universe. Instead, they allow both a multiplicative and an additive error in the radius of the found ball. The additive error in the radius is eliminated whenever $n\geq O(\frac{d^2}{\epsilon^2}\log^2|X|)$. More specifically, let $z_0$ denote the center of the smallest ball containing $t>0.51n$ input points. The algorithm of~\cite{NRS07} computes a center $z$ s.t.\ that the error vector $(z_0-z)$ has magnitude $O(\frac{r_{opt}}{\epsilon})+\frac{1}{\epsilon}\cdot e^{-\Omega(\epsilon\sqrt{n}/d)}$ in each coordinate.} There are three downsides here: (1)~The error in the radius of the found ball grows with $\sqrt{d}$, which might be unacceptable in high dimensions. (2)~The database size $n$ needs to be as big as $d^2\log^2|X|$. (3)~The algorithm can only identify a majority size cluster. If, e.g., the input points are split between several small balls such that none of them contains a majority of the points, then the algorithm results in an uninformative center $z$ chosen almost at random.\\

\noindent{\bf Exponential mechanism.} One of the first ideas for solving the 1-cluster problem is to use the exponential mechanism of McSherry and Talwar~\cite{MT07} to choose among all balls: Given a radius $r$ s.t.\ there exists a ball of radius $r$ in $X^d$ containing $t$ points, the exponential mechanism is capable of identifying a ball of radius $r$ containing $t-O(\log(|X|^d)/\epsilon)$ points. Privately finding the radius $r$ could be done using a binary search, which would increase the loss in the size of the cluster by a factor of $O(\log(\sqrt{d}|X|))$. 
Overall, this strategy results in a ball of radius $r_{opt}$ containing $t-\tilde{O}(d)\cdot\log^2(|X|)/\epsilon$ input points. 
Thus, the exponential mechanism can identify clusters even when they do not contain a majority of the points.
However, we are seeking for an algorithm with running time $\poly(n,d,\log|X|)$, while the exponential mechanism runs in time $\poly(n,|X^d|)$.\\

\noindent{\bf Query release for threshold functions.}
For the special case where $d=1$, the 1-cluster problem can be solved using algorithms for ``query release for threshold functions'': On input a database $S\in X^n$, a query release mechanism for threshold functions privately computes a database $S'\in X$ such that for every interval $I\subseteq X$ it holds that the number of points in $S'$ differs from the number of points in $S$ that lie in $I$ by at most $\Delta$. Searching for a smallest interval in $S'$ containing $\gtrsim t$ points results in an interval of length $2r_{opt}$ (that is, of radius $r_{opt}$) containing at least $t-O(\Delta)$ input points.
Known algorithms for query release for threshold functions~\cite{BNS13b,BNSV15} achieve $\Delta=\max\left\{2^{(1+o(1))\log^*|X|}\cdot \log(\frac{1}{\delta}),\polylog(n)\right\}/\epsilon$. Note that the dependency on $|X|$ has improved substantially from the $\polylog|X|$ of the above methods. Bun et al.\cite{BNSV15} also showed that $\Delta$ must be at least $\Omega(\log^*|X|)$ and hence this problem is impossible to solve for infinite $X$ (we show a similar lower bound for the 1-cluster problem).

\subsection{Our contributions}

We present an algorithm for the 1-cluster problem that achieves (almost) the best of all the above. Namely, it (a)~handles a minority size cluster, of size only sublinear in $d$ (better than all the above) and sublogarithmic in $|X|$ (as with query release), and loses even less than that in the size of the cluster; and (b)~avoids paying $d^{\Omega(1)}$ factors in the error of the radius (instead paying $O(\sqrt{\log n})$). See Table~\ref{fig:compare} for a comparison with past work.\\  

{
\renewcommand{\thetheorem}{\ref{thm:main}}
\begin{theorem}[Informal]
There exists an efficient $(\epsilon,\delta)$-differentially private algorithm that, given a set $S$ of $n$ points in a discretized $d$-dimensional cube $X^d$, and a parameter $t$, outputs a ball of radius $O\left( \sqrt{\log n} \cdot r_{opt} \right)$ of size at least $t-\frac{1}{\epsilon}\log\left(\frac{1}{\delta}\right)\cdot2^{O(\log^*(|X|d))}$,
provided that
$$t\geq \frac{\sqrt{d}}{\epsilon}\log^{1.5}\left(\frac{1}{\delta}\right)\cdot 2^{O(\log^*(|X|d))}.$$
\end{theorem}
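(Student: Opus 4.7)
My approach reduces the $d$-dimensional 1-cluster problem to a small number of invocations of the 1-dimensional query release algorithm for threshold functions from~\cite{BNSV15}, whose additive error is only $2^{O(\log^* |X|)}\log(1/\delta)/\epsilon$. The $2^{O(\log^*(|X|d))}$ term in the theorem will arise from repeated use of this primitive, while the $\sqrt{d}$ factor in the lower bound on $t$ and the $\sqrt{\log n}$ factor in the output radius will come from a random-projection / concentration argument.

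\textbf{Phase~I: Radius approximation.} I would first privately compute $\hat r$ with $r_{opt} \le \hat r \le O(r_{opt})$. Draw $k = O(\log(1/\delta))$ random Gaussian directions $u_1,\ldots,u_k$ and project the data to $\R$ along each. Since any ball of radius $r_{opt}$ in $\R^d$ projects to an interval of length $\le 2r_{opt}\|u_j\|$, applying the 1D algorithm along each $u_j$ yields an interval length upper bounding $2r_{opt}\|u_j\|$. Aggregating the $k$ lengths (e.g., by taking the median after rescaling) gives, via standard Gaussian concentration, an estimate $\hat r = \Theta(r_{opt})$ except with probability $\delta$.

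\textbf{Phase~II: Center location.} Given $\hat r$, I would locate a good center in two steps. First, project along each of the $d$ coordinate axes and invoke the 1D algorithm to obtain an interval $I_i$ of length $O(\hat r)$ containing $\ge t - \Delta$ projected points; the box $B=\prod_i I_i$ is a candidate region. Second, perform a differentially private selection inside $B$: for each input point $x \in S$ (or a poly-sized discretization of $B$), compute a noisy count of input points within distance $\hat r$ of $x$, and select a candidate with near-maximum count using the exponential mechanism over this polynomially sized candidate space, keeping running time $\poly(n,d,\log |X|)$. Finally, output the ball of radius $O(\sqrt{\log n})\hat r$ around the selected point, where the $\sqrt{\log n}$ slack absorbs the deviation of the selected candidate from the true cluster center.

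\textbf{Main obstacle.} The most delicate step will be Phase~II: the per-axis intervals $I_i$ need not correspond to the same cluster, so $B$ might miss all dense regions. Overcoming this likely requires either (i) a single-cluster regularity argument showing that when $t \gtrsim \sqrt{d}\log^{1.5}(1/\delta)/\epsilon \cdot 2^{O(\log^*(|X|d))}$ the dominant cluster projects into $I_i$ along every axis, or (ii) replacing axis-aligned projections by $O(\log n)$ random directions and intersecting the resulting half-spaces, which produces the $\sqrt{\log n}$ blow-up in the output radius through the sub-Gaussian tail of projected coordinates. Advanced composition across the $O(d)$ or $O(\log n)$ projections then yields the $\sqrt{d}\log^{1.5}(1/\delta)$ term in the lower bound on $t$, and consolidating the $O(1)$-many uses of the 1D primitive accounts for the $2^{O(\log^*(|X|d))}$ factor in $\Delta$.
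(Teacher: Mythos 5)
There are two genuine gaps here, and each breaks one of the two properties the theorem promises. First, your Phase~I radius estimate is one-sided in the wrong way: an interval of length $\ell$ in a one-dimensional projection containing $t$ projected points does \emph{not} certify any small dense ball in $\R^d$ — it only certifies a slab. Concretely, take $n=t$ points spread on a sphere of radius $R$ in $\R^d$, so $r_{opt}=\Theta(R)$; along a Gaussian direction $u_j$ (with $\|u_j\|\approx\sqrt{d}$) the points project into an interval of length $O(R\sqrt{\log t})$, so after rescaling by $\|u_j\|$ your estimate is $\hat r=O(R\sqrt{\log t}/\sqrt{d})\ll r_{opt}$, and this happens in \emph{every} direction, so taking a median over $O(\log(1/\delta))$ directions does not help. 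With such an $\hat r$ there is no ball of radius $O(\hat r)$ (or even $O(\hat r\sqrt{\log n})$) containing $\gtrsim t$ points, so Phase~II cannot return a ball satisfying requirement (1) of the problem. The paper avoids this by estimating the radius intrinsically in $\R^d$: it defines the capped, averaged count $L(r,S)$ (average of the $t$ largest ball-counts around input points, capped at $t$), shows it has sensitivity $O(1)$ and is monotone, and runs \texttt{RecConcave} on the quasi-concave quality $\min\{t-L(r/2,S),\,L(r,S)-t+4\Gamma\}$; this simultaneously certifies that a ball of radius $r$ with $\gtrsim t$ points exists and that $r\le 4r_{opt}$, and it is also where the $2^{O(\log^*(|X|d))}$ loss actually comes from.

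Second, your Phase~II is essentially the paper's ``first attempt,'' which it shows fails for exactly the reason you flag (per-axis heavy intervals from different clusters can have empty intersection), and neither of your proposed fixes is carried out; the actual resolution is substantially more structured. The paper JL-projects to $k=O(\log n)$ dimensions, partitions $\R^k$ by a \emph{randomly shifted} grid of side $O(r)$ so that with probability $1/\poly(n)$ one box captures the whole cluster, amplifies by repetitions selected privately via \texttt{AboveThreshold}, picks the heavy box by stability-based selection, and then — crucially — does \emph{not} select a center from a candidate set but returns a Gaussian-mechanism noisy \emph{average} of the points mapped into that box, after a random rotation and a second round of per-axis stability-based intervals whose only purpose is to give a \emph{deterministic} bound on the diameter of the averaged set (so the privacy analysis does not hinge on a probability-$(1-\beta)$ event). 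Your alternative of running the exponential mechanism over candidate centers does not work as stated: taking the candidates to be input points makes the candidate set data-dependent (a privacy problem), while a discretization of the box $B$ fine enough to contain a point within $O(\hat r)$ of the cluster center has $2^{\Omega(d)}$ candidates (a coarser, poly-size grid only gets within $\Omega(\sqrt{d})\hat r$, which reintroduces the $\sqrt{d}$ blow-up the theorem is designed to avoid). Your intuition that advanced composition over the per-axis steps produces the $\sqrt{d}\log^{1.5}(1/\delta)$ threshold on $t$ does match the paper, but the machinery needed to make the center-location step sound is missing.
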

\addtocounter{theorem}{-1}
}

We note that the algorithm of~\cite{NRS07} works in general metric output spaces, whereas ours is restricted to $\R^d$. We leave open the question of extending our construction to more general settings.

\section{Preliminaries}\label{sec:prelim}
\vspace{5px}
\noindent{\bf Notations.}\; Throughout the paper, we use $X$ to denote a finite totally ordered data universe, and use $X^d$ for the corresponding $d$-dimensional domain. We will identify $X^d$ with the real $d$-dimensional unit cube, quantized with grid step $1/(|X|-1)$. 
Datasets are (ordered) collections of elements from some data universe $U$ (e.g., $U=X$ or $U=X^d$). 
Two datasets $S,S'\in U^n$ are called {\em neighboring} if they differ on at most one entry, i.e., $S'=(S_{-i},x'_i)$ for some $1\leq i \leq |S|$ and $x'_i\in U$.\\


We will construct algorithms that use several differentially private mechanisms as subroutines, and analyze the overall privacy using the following composition theorem:

\begin{theorem}[\cite{DKMMN06, DworkLei}]\label{thm:composition1}
A mechanism that permits $k$ adaptive interactions with $(\epsilon,\delta)$-differentially private mechanisms (and does not access the database otherwise) is $(k\epsilon, k\delta)$-differentially private.
\end{theorem}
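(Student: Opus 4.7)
The plan is to induct on $k$. The base case $k=1$ is the definition of $(\epsilon,\delta)$-differential privacy, so the content is in the inductive step. Write the $k$ adaptive interactions as a two-stage composition $M=(M_{<k},M_k)$, where $M_{<k}$ denotes the mechanism running the first $k-1$ rounds and returning their joint transcript $\tau \in Y_1 \times \cdots \times Y_{k-1}$, and $M_k$ takes both the database and the transcript $\tau$ as input; by the inductive hypothesis $M_{<k}$ is $((k-1)\epsilon,(k-1)\delta)$-DP, and by assumption $M_k(\cdot,\tau)$ is $(\epsilon,\delta)$-DP as a function of the database for every fixed transcript $\tau$.

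Fix neighbors $S,S'$ and a measurable set $T \subseteq Y_1 \times \cdots \times Y_k$. The heart of the argument is the decomposition
\[
\Pr[M(S)\in T] \;=\; \E_{\tau\sim M_{<k}(S)}\bigl[\Pr[M_k(S,\tau)\in T_\tau]\bigr], \qquad T_\tau \;=\; \{y : (\tau,y)\in T\}.
\]
I would (i) apply $(\epsilon,\delta)$-DP of $M_k$ pointwise in $\tau$ to replace $\Pr[M_k(S,\tau)\in T_\tau]$ by $e^{\epsilon}\Pr[M_k(S',\tau)\in T_\tau] + \delta$, and then (ii) apply the inductive hypothesis for $M_{<k}$ to the resulting $[0,1]$-valued test function $g(\tau)=\Pr[M_k(S',\tau)\in T_\tau]$. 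The extension of an event-level DP guarantee to $[0,1]$-valued test functions follows from the layer-cake identity $\E[g(Z)]=\int_0^1 \Pr[g(Z)\ge t]\,dt$, integrating the DP bound applied to each level-set event $\{g\ge t\}$. Chaining these two bounds yields an inequality of the form $\Pr[M(S)\in T]\le e^{k\epsilon}\Pr[M(S')\in T] + (\text{sum of }\delta\text{ losses})$.

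The main technical point I anticipate is bookkeeping the additive $\delta$ slack so that it sums to exactly $k\delta$, since naive chaining produces something like $(1+(k-1)e^{\epsilon})\delta$ rather than $k\delta$. The standard way around this is to pass through an equivalent formulation of approximate DP (in the style of Kasiviswanathan--Smith): for any $(\epsilon,\delta)$-DP mechanism and any pair of neighbors $S,S'$, one can exhibit distributions $P_S,P_{S'}$ that are pointwise $e^{\epsilon}$-close (i.e.\ witness pure $(\epsilon,0)$-DP) and whose total variation distances from $M(S),M(S')$ respectively are each at most $\delta$. Composing $k$ adaptive rounds then reduces to (a) pure $(\epsilon,0)$-DP composition on the witness distributions, whose pointwise density ratios telescope multiplicatively to $e^{k\epsilon}$; and (b) a union/triangle bound over the $k$ total variation approximations, which contributes at most $k\delta$ in aggregate. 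This yields the claimed $(k\epsilon,k\delta)$-DP guarantee and completes the induction.
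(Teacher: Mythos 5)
The paper itself does not prove Theorem~\ref{thm:composition1} --- it is imported from \cite{DKMMN06, DworkLei} --- so your argument has to stand on its own, and as written it has a genuine gap exactly at the point you identified as the crux: getting the additive term to be $k\delta$. The first half of your plan (condition on the transcript, apply round-$k$ privacy pointwise, extend the inductive hypothesis to $[0,1]$-valued tests via the layer-cake identity) is sound, and you correctly note that naive chaining gives $(1+(k-1)e^{\epsilon})\delta$. But the repair you propose does not fix this. With \emph{symmetric} witnesses, i.e.\ per-round kernels $P_S,P_{S'}$ with $\mathrm{TV}(M(S),P_S)\le\delta$, $\mathrm{TV}(M(S'),P_{S'})\le\delta$ and $P_S,P_{S'}$ pointwise $e^{\epsilon}$-close, the chain over the composed processes reads
\[
\Pr[M(S)\in T]\;\le\;\Pr[P_S\in T]+k\delta\;\le\;e^{k\epsilon}\Pr[P_{S'}\in T]+k\delta\;\le\;e^{k\epsilon}\Pr[M(S')\in T]+e^{k\epsilon}k\delta+k\delta,
\]
because the approximation error on the $S'$ side must be converted back from $P_{S'}$ to $M(S')$ and is therefore re-inflated by $e^{k\epsilon}$ --- the very slack you set out to eliminate, merely relocated. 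Even the sharper calibration $\mathrm{TV}\le\delta/(1+e^{\epsilon})$ per side leaves a factor $(1+e^{k\epsilon})/(1+e^{\epsilon})>1$ for $k\ge 2$, so this route proves only a weaker statement than the theorem claims.

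There are two standard ways to close the gap. First, use the \emph{one-sided} witness lemma: if $A(T)\le e^{\epsilon}B(T)+\delta$ for all $T$, then there is $\tilde A$ with $\mathrm{TV}(A,\tilde A)\le\delta$ and $\tilde A(T)\le e^{\epsilon}B(T)$ for all $T$. Applying this per round and per transcript prefix, perturbing only the $S$-side kernel, the composed witness process is pointwise $e^{k\epsilon}$-dominated by the \emph{true} composed $M(S')$ (no error on that side at all), while a round-by-round hybrid/coupling argument gives $\mathrm{TV}(M(S),\tilde M(S))\le k\delta$; this yields exactly $e^{k\epsilon}\Pr[M(S')\in T]+k\delta$. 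Second, and more in the spirit of your own induction: truncate at $1$ before invoking the inductive hypothesis. Bound $\Pr[M_k(S,\tau)\in T_\tau]\le\min\{1,e^{\epsilon}g(\tau)\}+\delta$ with $g(\tau)=\Pr[M_k(S',\tau)\in T_\tau]$, pull the constant $\delta$ out of the expectation over $\tau\sim M_{<k}(S)$, apply the inductive hypothesis (in its layer-cake, $[0,1]$-test-function form) to $h=\min\{1,e^{\epsilon}g\}$, and finally drop the truncation to get $\E_{M_{<k}(S')}[h]\le e^{\epsilon}\Pr[M(S')\in T]$. Every additive $\delta$ then sits outside all exponential factors, and the total is exactly $k\delta$. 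Either repair completes your proof; as submitted, the accounting step fails.
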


\subsection{The framework of global sensitivity~\cite{DMNS06}}


\begin{definition}[$L_p$-Sensitivity]
A function $f$ mapping databases to $\R^d$ has {\em $L_p$-sensitivity $k$} if $\|f(S)-f(S')\|_p\leq k$ for all neighboring $S,S'$.
\end{definition}

The most basic constructions of differentially private algorithms are obtained by adding noise calibrated to the global sensitivity of the computation. We will use the Laplace mechanism of~\cite{DMNS06} to obtain noisy estimations to counting queries (e.g., how many points in $S$ have 0 on their first coordinate? Such a query has sensitivity 1 since changing one database element can change the count by at most 1).

\begin{theorem}[Laplace mechanism~\cite{DMNS06}]\label{thm:lap}
A random variable is distributed as $\Lap(\lambda)$ if its probability density function is $f(y)=\frac{1}{2\lambda}\exp(-\frac{|y|}{\lambda})$. 
Let $\epsilon>0$, and let $f:U^* \rightarrow \R^d$ be a function of $L_1$-sensitivity $k$.
The mechanism $\AAA$ that on input $D\in U^*$ 
adds independently generated noise with distribution $\Lap(\frac{k}{\epsilon})$ to each of the $d$ output terms of $f(D)$ preserves $(\epsilon,0)$-differential privacy.
\end{theorem}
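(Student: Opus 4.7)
The plan is to work pointwise with the output density of $\AAA$ and bound, for every $y \in \R^d$, the ratio of the densities under neighboring inputs $D,D'$ by $e^{\epsilon}$; integrating this bound over any measurable $T \subseteq \R^d$ then gives the $(\epsilon,0)$-differential privacy guarantee directly, with no exceptional event and hence $\delta=0$.

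First, because the $d$ Laplace noises are drawn independently, the density of $\AAA(D)$ at $y$ factorizes as
$$\mu_D(y) \;=\; \prod_{i=1}^d \frac{\epsilon}{2k}\exp\!\left(-\frac{\epsilon\,|y_i - f(D)_i|}{k}\right),$$
and analogously for $\mu_{D'}(y)$. Taking the ratio, the normalizing prefactors cancel and we obtain
$$\frac{\mu_D(y)}{\mu_{D'}(y)} \;=\; \exp\!\left(\frac{\epsilon}{k}\sum_{i=1}^d \bigl(|y_i - f(D')_i| - |y_i - f(D)_i|\bigr)\right).$$
Next, I would apply the reverse triangle inequality coordinatewise: $|y_i - f(D')_i| - |y_i - f(D)_i| \leq |f(D)_i - f(D')_i|$. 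Summing over $i$ and invoking the $L_1$-sensitivity hypothesis yields $\sum_{i=1}^d |f(D)_i - f(D')_i| = \|f(D) - f(D')\|_1 \leq k$, so the exponent is at most $\epsilon$ and hence $\mu_D(y)/\mu_{D'}(y) \leq e^{\epsilon}$ uniformly in $y$.

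Finally, for any measurable $T \subseteq \R^d$, integrating the pointwise inequality gives
$$\Pr[\AAA(D)\in T] \;=\; \int_T \mu_D(y)\,dy \;\leq\; e^{\epsilon}\int_T \mu_{D'}(y)\,dy \;=\; e^{\epsilon}\,\Pr[\AAA(D')\in T],$$
which is the definition of $(\epsilon,0)$-differential privacy. There is essentially no conceptual obstacle: the only place where care is needed is matching the sum $\sum_i |f(D)_i - f(D')_i|$ with the $L_1$-sensitivity bound $k$, and observing that the density ratio bound holds for \emph{every} $y$, so integrating preserves the inequality with no additive slack — this is exactly why the additive error term is $\delta = 0$ rather than positive.
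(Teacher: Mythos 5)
Your proof is correct and is exactly the standard argument for the Laplace mechanism: the paper states this theorem as a cited result from~\cite{DMNS06} without reproving it, and your pointwise density-ratio bound via the reverse triangle inequality and the $L_1$-sensitivity, followed by integration over $T$, is the canonical proof given there. No gaps.
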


We will also use the Gaussian mechanism to obtain (noisy) averages of vectors in $\R^d$. See Appendix~\ref{sec:NoisyAVG} for details.

\begin{theorem}[Gaussian Mechanism \cite{DKMMN06}]\label{thm:gauss}
Let $\epsilon,\delta\in(0,1)$, and let $f:U^* \rightarrow \R^d$ be a function of $L_2$-sensitivity $k$. Denote $\sigma\geq\frac{k}{\epsilon}\sqrt{2\ln(1.25/\delta)}$.
The mechanism $\AAA$ that on input $D\in U^*$ 
adds independently generated noise with distribution $\NNN(0,\sigma^2)$ to each of the $d$ output terms of $f(D)$ preserves $(\epsilon,\delta)$-differential privacy.
\end{theorem}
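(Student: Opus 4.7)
The plan is to use the standard privacy-loss-random-variable approach: I want to show that for any pair of neighboring inputs $D, D'$ and any measurable $T \subseteq \R^d$,
$$\Pr[f(D) + Z \in T] \leq e^\epsilon \Pr[f(D') + Z \in T] + \delta,$$
where $Z \sim \NNN(0, \sigma^2 I_d)$. The first step is to reduce the $d$-dimensional analysis to a one-dimensional one. Let $v = f(D') - f(D)$, so $\|v\|_2 \leq k$. Since the distribution of $Z$ is rotationally symmetric, I may rotate coordinates so that $v = (\|v\|_2, 0, \ldots, 0)$; the density ratio of the two shifted Gaussians then depends only on the first coordinate of the output, and the computation reduces to the scalar Gaussian shifted by $\|v\|_2 \leq k$.

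Next I would write down the privacy loss random variable. For a sample $y = f(D) + Z$, the log density ratio is
$$L(y) \;=\; \ln\frac{\varphi_{\sigma}(y_1 - f(D)_1)}{\varphi_{\sigma}(y_1 - f(D')_1)} \;=\; \frac{2 Z_1 \|v\|_2 - \|v\|_2^2}{2\sigma^2},$$
where $\varphi_\sigma$ is the 1D centered Gaussian density and $Z_1 \sim \NNN(0,\sigma^2)$ (after absorbing a sign, which only helps). Thus $L$ itself is Gaussian, with mean $-\|v\|_2^2/(2\sigma^2)$ and standard deviation $\|v\|_2/\sigma$. I want to show $\Pr_{Z}[L > \epsilon] \leq \delta$; once this is established, the standard ``good event'' argument gives $(\epsilon, \delta)$-differential privacy, since for any $T$,
$$\Pr[\AAA(D)\in T] \leq \Pr[L > \epsilon] + \int_{\{L\leq \epsilon\}\cap T} e^{\epsilon} p_{D'}(y)\, dy \leq \delta + e^\epsilon \Pr[\AAA(D') \in T].$$

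The remaining step is a one-line Gaussian tail bound. The event $L > \epsilon$ is equivalent to $Z_1 > \sigma^2 \epsilon / \|v\|_2 + \|v\|_2 / 2$; using $\Pr[\NNN(0,\sigma^2) > t] \leq \tfrac{1}{2}\exp(-t^2/(2\sigma^2))$ and substituting $\sigma \geq (k/\epsilon)\sqrt{2\ln(1.25/\delta)}$ together with $\|v\|_2 \leq k$, a direct calculation shows this probability is at most $\delta$. The main (and only slightly delicate) obstacle is getting the constants in the tail bound to line up with the $1.25$ in the hypothesis — the coefficient comes from bounding the prefactor $1/(t\sqrt{2\pi})$ crudely by a small constant in the relevant range of $t$, which yields the $\ln(1.25/\delta)$ rather than $\ln(1/\delta)$ inside the square root. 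Everything else is routine.
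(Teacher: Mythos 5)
This theorem is an imported result in the paper (cited to [DKMMN06], with the stated constants as in the standard analysis of [DR14]); the paper gives no proof of it, so the only question is whether your argument is sound. The overall route — reduce to one dimension by rotational invariance, control the privacy-loss random variable, then the good-event argument — is indeed the standard proof, but there is a sign error in your privacy loss that lands exactly on the delicate step. With $y=f(D)+Z$ and $v=f(D')-f(D)$ rotated to $(\|v\|_2,0,\dots,0)$, one has $L(y)=\ln\frac{\varphi_\sigma(Z_1)}{\varphi_\sigma(Z_1-\|v\|_2)}=\frac{\|v\|_2^2-2Z_1\|v\|_2}{2\sigma^2}$, which is Gaussian with mean $+\|v\|_2^2/(2\sigma^2)$ (as it must be: its mean is a KL divergence, hence nonnegative). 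You wrote the negative of this; the sign of $Z_1$ can be flipped by symmetry, but the deterministic $\|v\|_2^2$ term cannot. Consequently the event $L>\epsilon$ is $Z_1<\|v\|_2/2-\sigma^2\epsilon/\|v\|_2$, i.e., by symmetry $\Pr[L>\epsilon]=\Pr\bigl[Z_1>\sigma^2\epsilon/\|v\|_2-\|v\|_2/2\bigr]$, not $\Pr\bigl[Z_1>\sigma^2\epsilon/\|v\|_2+\|v\|_2/2\bigr]$ as you claim. Your threshold is too large by $\|v\|_2$, so your ``direct calculation'' bounds the probability of the wrong (strictly smaller) event.

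This matters because the subtraction of $\|v\|_2/2$ is precisely where the hypotheses $\epsilon<1$ and the constant $1.25$ earn their keep: in the worst case $\|v\|_2=k$ and $\sigma=ck/\epsilon$, the exponent becomes $\frac{(c^2-\epsilon/2)^2}{2c^2}\approx \ln(1.25/\delta)-\epsilon/2$, and one needs $\epsilon\lesssim 1$ (together with the slack from $\ln 1.25$ and the Gaussian prefactor) to conclude $\Pr[L>\epsilon]\le\delta$. Your proposal never uses $\epsilon<1$, which is a telltale sign of the slip, since the theorem is false without some such restriction under this choice of $\sigma$. The fix is routine: correct the sign, note that $\sigma^2\epsilon/\|v\|_2-\|v\|_2/2$ is decreasing in $\|v\|_2$ so the worst case is $\|v\|_2=k$, and redo the tail computation as in the standard treatment; the good-event step and the reduction to one dimension are fine as you have them.
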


\subsection{Stability based techniques~\cite{DworkLei, Adist, BNS13b}}
Given a database $S\in U^*$, consider the task of choosing a ``good'' solution out of a possible set of solutions $F$, and assume that this ``goodness'' is quantified using a {\em quality function} $q:U^*\times F\rightarrow\N$ assigning ``scores'' to solutions from $F$ (w.r.t.\ the given database $S$).
One of the most useful constructions in differential privacy -- the exponential mechanism~\cite{MT07} -- shows that such scenarios are compatible with differential privacy, and that an approximately optimal solution $f\in F$ can be privately identified provided that $q$ has low-sensitivity and that $|S|\gtrsim\log|F|$.

By limiting our attention to cases where the number of possible solutions with ``high'' scores is limited, it is possible to relax the requirement that $|S|\gtrsim\log|F|$, using what has come to be known as {\em stability based techniques}. In this work we use stability based techniques for the following task: Given a dataset $S\in U^n$ and a partition $P$ of $U$, find a set $p\in P$ containing (approximately) maximum number of elements of $S$. This task can be privately solved using algorithms for query release for point functions.

\begin{theorem}[\cite{BNS13b, Vadhan2016}]\label{thm:sanPoints}
Fix $\epsilon,\delta$.
Let $U$ be a data universe, let $P$ be a partition of $U$, and let $S\in U^n$ be an input database. 
There exists an $(\epsilon,\delta)$-differentially private algorithm s.t.\ the following holds. 
Let $T$ denote the maximum number of input elements (from $S$) that are contained in a set in $P$.
If $T\geq\frac{2}{\epsilon}\log(\frac{4n}{\beta\delta})$, then with probability at least $(1-\beta)$ the algorithm returns a set $q\in P$ containing at least $T-\frac{4}{\epsilon}\log(\frac{2n}{\beta})$ elements from $S$.
\end{theorem}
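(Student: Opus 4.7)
The plan is to implement the standard \emph{stability-based histogram} mechanism. Define the algorithm $\mathcal{A}$ as follows: enumerate the (at most $n$) sets $p \in P$ for which $p \cap S \neq \emptyset$, compute a noisy count $\tilde c_p := |S \cap p| + \Lap(1/\epsilon)$ for each such $p$, let $p^*$ be the bin maximizing $\tilde c_{p^*}$, and output $p^*$ provided $\tilde c_{p^*} \geq \tau$ for a threshold $\tau := 1 + \frac{1}{\epsilon}\log(\frac{2}{\delta})$; otherwise output a default symbol $\bot$. Notice that we never even look at the (possibly huge) collection of empty bins, which is what makes the running time depend on $n$ rather than $|P|$.

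For privacy, I would fix neighboring datasets $S, S'$ differing on a single row and split the analysis by whether the output is ``exclusive'' (a bin in $P_S \setminus P_{S'}$ or in $P_{S'} \setminus P_S$) or not, where $P_S$ denotes the collection of bins containing at least one element of $S$. The crucial observation is that any exclusive bin has true count exactly $1$ in the dataset where it appears, so $\mathcal{A}$ can output it only when $1 + \Lap(1/\epsilon) \geq \tau$, an event of probability at most $\delta/2$; a union bound over the at most two exclusive bins accounts for the full $\delta$ term. Conditioned on the output being a shared bin or $\bot$, one couples the Laplace noise on $S$ and $S'$ and invokes a standard report-noisy-max style analysis: changing one row shifts at most two shared counts by at most $1$ each, so the Laplace mechanism plus post-processing (argmax and threshold) yields $\epsilon$-DP, possibly after an $O(1)$ adjustment of the noise scale to absorb the $L_1$-sensitivity of the shared count vector.

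For utility, a Laplace tail bound together with a union bound over the $\leq n$ noisy counts shows that with probability at least $1-\beta$ every noisy count lies within $\frac{2}{\epsilon}\log(\frac{2n}{\beta})$ of its true value. On this event, the bin achieving the true maximum $T$ has noisy count at least $T - \frac{2}{\epsilon}\log(\frac{2n}{\beta})$; the hypothesis $T \geq \frac{2}{\epsilon}\log(\frac{4n}{\beta\delta})$ ensures this quantity exceeds $\tau$, so the algorithm does not abort, and the returned bin has true count at least $T - \frac{4}{\epsilon}\log(\frac{2n}{\beta})$, matching the stated guarantee.

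The main obstacle is the privacy argument, since the candidate enumeration $P_S$ is itself data-dependent and a generic Laplace-plus-argmax analysis does not apply verbatim to a shifting list. The threshold is the mechanism that rescues us: committing to an output only when some noisy count exceeds the count-$1$ baseline by a margin of $\frac{1}{\epsilon}\log(\frac{1}{\delta})$ ensures that the data-dependent portion of the enumeration contributes at most $\delta$ to the indistinguishability, and the corresponding $\log(1/\delta)/\epsilon$ slack in the hypothesis on $T$ is precisely calibrated so that the true maximum provably clears the threshold even after the Laplace perturbation.
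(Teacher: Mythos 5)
Your proposal is correct and follows essentially the canonical argument: the paper itself states Theorem~\ref{thm:sanPoints} as a citation to~\cite{BNS13b,Vadhan2016} without proof, and those sources prove it exactly via the stability-based histogram you describe (Laplace noise on the counts of the at most $n$ nonempty bins, a $\frac{1}{\epsilon}\log(\frac{1}{\delta})$-scale threshold to absorb the data-dependent support, and a report-noisy-max/union-bound utility analysis). The one caveat you already flag yourself -- that under replacement neighbors two shared counts move, so the noise scale or the privacy constant picks up a factor of $2$ -- is only a constants-level adjustment and does not affect the stated bounds.
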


\section{Our algorithms}
In this paper we explore the following problem under differential privacy:

\begin{definition}[The Problem of a Minimal Ball Enclosing $t$ Points]
Given a set of $n$ points in the Euclidean space $\R^d$ and an integer $t\leq n$,
the goal is to find a ball of minimal radius $r_{opt}$ enclosing at least $t$ input points.

\end{definition}

To enhance readability, we are using this section as an informal presentation of our results, giving most of the ideas behind our construction. We will also briefly discuss some intuitive ideas which fail to solve the task at hand, but are useful for the presentation. Any informalities made hereafter will be removed in the sections that follow.

We start by recalling known facts (without concern for privacy) about the problem of a minimal ball enclosing $t$ points:

\begin{enumerate}
	\item It is NP-hard to solve exactly~\cite{Shenmaier13}.
	\item Agarwal et al.~\cite{Agarwal05} presented an approximation scheme (PTAS) which computes a ball of radius $(1 + \alpha)r_{opt}$ containing $t$ points in time $O(n^{1/\alpha}d)$.
	\item There is a trivial algorithm for computing a ball of radius $2 r_{opt}$ containing $t$ points: Consider only balls centered around input points, and return the smallest ball containing $t$ points.
	
Indeed, let $B$ denote a ball of radius $r_{opt}$ enclosing at least $t$ input points, and observe that a ball of radius $2 r_{opt}$ around any point in $B$ contains all of $B$. Hence, there exists a ball of radius $2 r_{opt}$ around an input point containing at least $t$ points.
\end{enumerate}

We present a (roughly) $\sqrt{\log n}$-approximation algorithm satisfying differential privacy:

\begin{theorem}\label{thm:main}
Let $n,t,\beta,\epsilon,\delta$ be s.t.
$$t\geq O\left(\frac{\sqrt{d}}{\epsilon}\log\left(\frac{1}{\beta}\right)\log\left(\frac{nd}{\beta\delta}\right)\sqrt{\log\left(\frac{1}{\beta\delta}\right)}\cdot 9^{\log^*(2|X|\sqrt{d})}\right).$$
There exists a $\poly(n,d,1/\beta,\log|X|)$-time $(\epsilon,\delta)$-differentially private algorithm
that solves the 1-cluster problem $(X^d,n,t)$ with parameters $(\Delta,w)$ and error probability $\beta$, where
$w=O\left( \sqrt{\log n} \right)$
and
$$\Delta=O\left(\frac{1}{\epsilon}\log\left(\frac{n}{\delta}\right)\log\left(\frac{1}{\beta}\right)\cdot9^{\log^*(2|X|\sqrt{d})}\right).$$
\end{theorem}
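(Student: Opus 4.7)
The plan is to split the proof into two sequential private subroutines that compose to yield the theorem. The first routine, call it \textsc{RadiusEst}, privately returns a value $\hat r$ satisfying $r_{opt} \le \hat r \le O(\sqrt{\log n})\cdot r_{opt}$ with high probability; the second, which in the paper is called \textsc{GoodCenter}, privately outputs a point $c \in X^d$ such that the ball of radius $O(\hat r)$ around $c$ contains at least $t - \Delta$ input points. Each subroutine is run with privacy parameters $(\epsilon/2, \delta/2)$, so the overall guarantee follows from Theorem~\ref{thm:composition1}.

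For \textsc{RadiusEst} I would exploit the trivial non-private $2$-approximation: for each input point $p_i \in S$, let $r_i$ denote the smallest radius such that the ball of radius $r_i$ around $p_i$ contains at least $t - O(\Delta)$ input points. Since the minimum-radius ball of radius $r_{opt}$ is contained in a ball of radius $2 r_{opt}$ centered at any of its points, $\min_i r_i \in [r_{opt}, 2 r_{opt}]$. Changing one row of $S$ perturbs only $O(1)$ of the $r_i$'s, so the multi-set $\{r_i\}$ has neighbor-sensitivity $O(1)$ in the sense required by a $1$-dimensional query release mechanism. I would then round each $r_i$ into the discrete grid of $O(\log(|X|\sqrt d))$ powers of two, feed the resulting $1$-dimensional data to the BNSV15 threshold-release algorithm cited in Table~\ref{fig:compare}, and read off a low quantile. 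This costs only the $2^{O(\log^*(|X|d))}/\epsilon$ additive term that appears in the statement and introduces no multiplicative blow-up in the radius.

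The main obstacle is \textsc{GoodCenter}. Two naive strategies fail: running the stability-based histogram of Theorem~\ref{thm:sanPoints} over an $\hat r$-grid on $X^d$ costs $\log|X|^d = d\log|X|$ in the sample requirement, while running the $1$-D query release coordinate-wise and taking a product box costs $O(\sqrt d)$ in the radius, not $O(\sqrt{\log n})$. My plan is a recursive narrowing scheme: starting from the full cube $X^d$, at each of $O(\log^*(|X|d))$ levels, use the $1$-D mechanism on the projections of $S$ to a few carefully chosen directions in order to cut out an axis-aligned box of side roughly $\hat r \cdot \sqrt{\log n}$ that still contains all but $O(\Delta/\log^*(|X|d))$ of the cluster, then recurse on that sub-box with a coarser quantization. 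Each recursive step uses an $\epsilon/\log^*(|X|d)$ share of the budget, and the $\sqrt{\log n}$ factor in $w$ comes from a union bound over the $O(\log n)$ calls to the $1$-D mechanism needed to certify that the slab contains the cluster with high probability. At the base of the recursion the surviving box is small enough that Theorem~\ref{thm:sanPoints} (equivalently, stability-based histograms) applied directly picks out a concrete center. The $9^{\log^*(|X|d)}$ factor in $\Delta$ and in the lower bound on $t$ then tracks the geometric accumulation of additive losses across recursion levels.

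Privacy follows automatically from composition of the per-level Laplace/Gaussian/stable-histogram invocations (Theorems~\ref{thm:lap},~\ref{thm:gauss},~\ref{thm:sanPoints}) together with Theorem~\ref{thm:composition1}. Utility follows by chaining the high-probability guarantees of each subcall and the high-probability guarantee of \textsc{RadiusEst}, using a union bound over the $O(\log n + \log^*(|X|d))$ bad events, which is absorbed into the $\log(1/\beta)$ factor in the final bound on $t$. The hardest technical point I anticipate is verifying that the narrowing step at level $\ell$ only loses $O(\Delta/\log^*(|X|d))$ cluster points and genuinely shrinks the effective universe size from $|X_\ell|$ to $O(\log|X_\ell|)$, which is precisely the geometric analogue of the one-step universe reduction underlying the BNSV15 threshold-release algorithm and is what gives the $\log^*$ rather than $\log$ dependence on $|X|$.
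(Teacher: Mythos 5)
Your high-level decomposition (privately estimate a radius, then privately locate a center, compose) matches the paper, but both halves have genuine gaps. In \textsc{RadiusEst}, the claim that ``changing one row of $S$ perturbs only $O(1)$ of the $r_i$'s'' is false: each $r_i$ is a function of the \emph{entire} dataset, not just of $p_i$. Replacing a single input point changes every count $\BBB_r(p_i)$ by up to one, and since $r_i$ is defined as the radius at which a count crosses a threshold, a change of one in the count can move $r_i$ by an arbitrary amount; in the worst case all $n$ derived values shift. So the multiset $\{r_i\}$ is not a low-sensitivity derived dataset, and feeding it to a $1$-dimensional query-release mechanism is not justified by neighboring-input privacy (group privacy would cost a factor of $n$ in $\epsilon$). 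This is exactly the sensitivity obstruction discussed in Section~\ref{sec:GoodRadiusWarmup}: the paper circumvents it by replacing the max (or threshold-crossing) statistic with the average of the $t$ largest capped counts, which provably has sensitivity $2$, and then runs \texttt{RecConcave} (Theorem~\ref{thm:rec}) on a quasi-concave quality function to pay only $9^{\log^*(|X|\sqrt{d})}/\epsilon$; your proposal has no analogous low-sensitivity statistic.

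The center-finding half is also missing the key idea. Your recursive narrowing works axis-by-axis in $\R^d$ with per-axis intervals of side about $\hat r\sqrt{\log n}$; but the intersection of per-axis heavy intervals can be empty (this is precisely the paper's failed ``first attempt,'' Figure~\ref{fig:attempt1}), and even when it is not, an axis-aligned box in $\R^d$ with side $\hat r\sqrt{\log n}$ on each axis has diameter $\Omega\left(\hat r\sqrt{d\log n}\right)$, so the claimed $w=O(\sqrt{\log n})$ does not follow --- the $\sqrt{d}$ you are trying to avoid reappears. The paper escapes this by first applying the JL transform (Lemma~\ref{thm:JL}) to reduce to $k=O(\log n)$ dimensions, using randomly shifted grids plus the sparse-vector technique to find, among $\poly(n/\beta)$ attempts, a partition of $\R^k$ into boxes of side $O(r)$ one of which is heavy, and then selecting it with Theorem~\ref{thm:sanPoints}; the $\sqrt{\log n}$ in $w$ is the diameter $r\sqrt{k}$ of a box in the \emph{projected} space, not a union-bound artifact. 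Finally, your plan ends with ``pick a concrete center,'' but the paper's last step (noisy Gaussian averaging of the located cluster) requires a \emph{deterministic} bound on the diameter of the averaged set for the privacy analysis to go through --- this is why \texttt{GoodCenter} performs the random rotation (Lemma~\ref{lem:RandomRotation}), per-axis localization, and the intersection $D'=D\cap C$ before averaging. Your proposal does not address how the final center is produced privately with only $O(\hat r\sqrt{\log n})$ error, so as written neither the privacy budget accounting nor the radius bound of Theorem~\ref{thm:main} is established.
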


In words, there exists an efficient $(\epsilon,\delta)$-differentially private algorithm that (ignoring logarithmic factors) is capable of identifying a ball of radius $\tilde{O}(r_{opt})$ containing $t-\tilde{O}(\frac{1}{\epsilon})$ points, provided that $t\geq\tilde{O}(\sqrt{d}/\epsilon)$.

\begin{remark}
For simplicity, in the above theorem we identified $X^d$ with the real $d$-dimensional unit cube, quantized with grid step $1/(|X|-1)$. Our results trivially extend to domains with grid step $\ell$ and axis length $L=\max{X}-\min{X}$ by replacing $|X|$ with $L/\ell$.
\end{remark}

\begin{remark}
Observe that the parameters $t,\Delta$ in Theorem~\ref{thm:main} have some dependency on the domain size $|X|$. Although this dependency is very weak, it implies that our construction cannot be applied to instances with infinite domains. In Section~\ref{sec:lower} we show that this is a barrier one cannot cross with differential privacy, and that privately solving the 1-cluster problem on infinite domains is impossible (for reasonable choices of parameters).
\end{remark}

\begin{observation}
Our construction could be used as a heuristic for solving a $k$-clustering-type problem: Letting $t=n/k$, we can iterate our algorithm $k$ times and find a collection of (at most) $k$ balls that cover most of the data points. Using composition to argue the overall privacy guarantees, we can have (roughly) $k\lesssim \frac{(\epsilon n)^{2/3}}{d^{1/3}}$. 
\end{observation}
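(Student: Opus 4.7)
The plan is to prove the observation by the obvious iterative reduction: given a target number of clusters $k$, set $t=n/k$ and repeatedly apply the 1-cluster algorithm of Theorem~\ref{thm:main}, each time on the points not yet covered by a previously returned ball, for a total of $k$ invocations. Because only the aggregate of the $k$ invocations touches the dataset, we can analyze privacy by composition and analyze utility by a union bound over iterations. The core of the proof is purely a parameter-balancing calculation: choose the per-invocation $(\epsilon',\delta')$ as a function of $k$ so that (a)~$k$-fold composition still yields the target $(\epsilon,\delta)$, and (b)~Theorem~\ref{thm:main} is applicable to each call with $t=n/k$.

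For privacy, basic composition (Theorem~\ref{thm:composition1}) would only yield $k\lesssim(\epsilon n)^{1/2}/d^{1/4}$, which is weaker than the claimed bound. To obtain $k\lesssim(\epsilon n)^{2/3}/d^{1/3}$ one must invoke advanced composition (Dwork--Rothblum--Vadhan), under which $k$ adaptive $(\epsilon',\delta')$-DP mechanisms yield overall privacy loss $\bigl(O(\epsilon'\sqrt{k\log(1/\delta)}),k\delta'+\delta\bigr)$. Solving $\epsilon'\sqrt{k\log(1/\delta)}=\epsilon$ gives $\epsilon'\approx\epsilon/\sqrt{k\log(1/\delta)}$. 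Plugging this $\epsilon'$ into the quantitative hypothesis of Theorem~\ref{thm:main} together with $t=n/k$ and suppressing polylogarithmic and $2^{O(\log^*)}$ factors, we need
\[
\frac{n}{k}\;\gtrsim\;\frac{\sqrt{d}}{\epsilon'}\;\approx\;\frac{\sqrt{d\cdot k\log(1/\delta)}}{\epsilon},
\]
i.e.\ $n\gtrsim k^{3/2}\sqrt{d}/\epsilon$, which rearranges to $k\lesssim(\epsilon n)^{2/3}/d^{1/3}$, matching the claim.

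For utility, I would argue as follows. Fix the iteration order and let $S_i$ denote the set of remaining (uncovered) points at the start of iteration $i$. Two cases can occur at iteration $i$: either there is a ball in $X^d$ containing at least $t$ points of $S_i$, in which case the algorithm (with probability $1-\beta/k$) returns a ball that covers at least $t-\Delta$ of them; or there is no such ball, in which case we simply stop (or declare the remaining points ``uncovered''). In the first case, after $k$ iterations at most $k\Delta$ points are left uncovered inside the returned balls, plus at most $t=n/k$ points not covered when the process terminates early, so the total uncovered mass is $O(n/k+k\Delta)$, which is $o(n)$ under the parameter regime above. A union bound over the $k$ iterations with failure probability $\beta/k$ each yields total failure probability at most $\beta$.

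The main obstacle --- and the reason the observation is stated only as a heuristic --- is that the guarantee is not a true $k$-means/$k$-medians approximation: the iterative ``peel-off'' strategy implicitly assumes that each successive $S_i$ admits a ball of radius comparable to $r_{opt,i}$ containing $\sim t$ points, and the approximation factors $w=O(\sqrt{\log n})$ compound in the sense that the chosen balls may be substantially larger than the optimal $k$-clustering balls. Consequently, the proof delivers only the stated coverage guarantee (at most $k$ balls covering most of the data) rather than a clustering-cost bound, which is exactly what the observation advertises.
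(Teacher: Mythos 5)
The paper states this observation without proof, and your derivation is exactly the intended argument: iterate the 1-cluster algorithm with $t=n/k$, use the advanced composition theorem (Theorem~\ref{thm:composition2}) to take $\epsilon'\approx\epsilon/\sqrt{k\log(1/\delta)}$ per invocation, and impose the hypothesis $n/k\gtrsim\sqrt{d}/\epsilon'$ of Theorem~\ref{thm:main}, which rearranges to $k\lesssim(\epsilon n)^{2/3}/d^{1/3}$, matching the claimed bound. Your further points --- that basic composition (Theorem~\ref{thm:composition1}) would only give $k\lesssim(\epsilon n)^{1/2}/d^{1/4}$, and that the iterative peel-off yields only a coverage guarantee of roughly $n-k\Delta-n/k$ points rather than a clustering-cost approximation, which is why the paper calls this a heuristic --- are also correct.
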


Towards proving Theorem~\ref{thm:main} we design two algorithms. The first, \texttt{GoodRadius}, is given as input a collection $S$ of $n$ points and a parameter $t$, and returns a radius $r$ such that there exists a ball of radius $r$ containing $\gtrsim t$ of the points in $S$ and, furthermore, $r$ is within a constant factor of the smallest such ball. 

The second algorithm, \texttt{GoodCenter}, is given as input the set $S$ of input points, a parameter $t$, and a radius $r$ computed by \texttt{GoodRadius}. The algorithm outputs a center $z$ of a ball of radius $\tilde O(r)$ containing $\gtrsim t$ of the points in $S$.
So, a simplified overview of our construction is:
\begin{enumerate}[leftmargin=5pt, itemsep=1pt]
\item[] Input: A set $S$ of $n$ points in $X^d$, and an integer $t\leq n$.
	\item[] {\em Step 1:} Identify a radius $r=O(r_{opt})$ s.t.\ there is a ball of radius $r$ containing $\gtrsim t$ input points.
	\item[] {\em Step 2:} Given $r$, locate a ball of radius $O\left(\sqrt{\log n}\cdot r\right)$ containing $\gtrsim t$ input points.
\end{enumerate}

\subsection{Finding the cluster radius: algorithm GoodRadius}\label{sec:GoodRadiusWarmup}
Let $S=(x_1,\ldots,x_n)$ be a database containing $n$ points in $X^d$. 
Given $S$ and $t\leq n$, our current task is to approximate the minimal radius $r_{opt}$ for which there is a ball of that radius containing at least $t$ points from $S$. 

We start with the following notations:
For a radius $r\geq0$ and a point $p\in \R^d$, let $\BBB_r(p)$ denote the number of input points contained in a ball of radius $r$ around $p$. That is,
$\BBB_r(p) = |\{i : \|x_i-p\|_2\leq r\}|$.

Recall that we are looking for (a radius of) a ball containing $\gtrsim t$ points from $S$, and that a ball containing $t$ points is just as good as a ball of the same radius containing $100t$ points. Hence, we modify our notation of $\BBB_r(p)$ to cap counts at $t$:
$$
\bar{\BBB}_r(p) = \min\Big\{ \; \BBB_r(p) \; , \; t \; \Big\}.
$$

Using that notation, our goal is to approximate 
$r_{opt} = \min\left\{ r\geq0 \; : \; \exists p\in\R^d \text{ s.t.\ }  \bar{\BBB}_r(p)\geq t  \right\}$. Recall that a direct computation of $r_{opt}$ is NP-hard,
and let us turn to the simple 2-approximation algorithm that considers only balls centered at input points. To that end, for every $r\geq0$ define $L(r)$ as the maximum number of input points contained in a ball of radius $r$ around some input point (capped at $t$). That is,
$$
L(r)=\max_{x_i\in S}\left\{ \bar{\BBB}_r(x_i) \right\}.
$$
As we next explain, using that notation, it suffices to compute a radius $r$ s.t. 
$$(i)\;\; L(r) \gtrsim t \qquad\text{ and }\qquad (ii)\;\;L(r/2) < t.$$
We now argue that such an $r$ satisfies the requirements of \texttt{GoodRadius}. By (i) there exists a ball of radius $r$ containing $\gtrsim t$ input points, so we just need to argue that $r\leq O(r_{opt})$. Assume towards contradiction that $r_{opt}<r/4$, and hence there exists a subset $D\subseteq S$ of $t$ input points which are contained in a ball of radius $r/4$. Observe that a ball radius $r/2$ around any point in $D$ contains all of $D$, and therefore $L(r/2)\geq t$, contradicting (ii).

So we only need to compute a radius $r$ satisfying properties (i) and (ii) above. However, the function $L$ has high sensitivity, and hence it is not clear how to privately estimate $L(r)$ for a given radius $r$.
To see why $L$ has high sensitivity, consider a set $S$ consisting of the unit vector $\vec{e_1}$ along with $t/2$ copies of the zero vector and $t/2$ copies of the vector $2\cdot \vec{e_1}$. So, a ball of radius $1$ around $\vec{e_1}$ contains all of the points, and $L(1)=t$. However, if we were to switch the vector $\vec{e_1}$ to $2\cdot\vec{e_1}$, then the ball around $\vec{e_1}$ is no longer valid (since we only consider balls centered around input points), and every existing ball of radius $1$ contains at most $t/2$ point. So the sensitivity of the function $L$ is $\Omega(t)$.

In order to reduce the sensitivity of $L$ we now redefine it using averaging (a related idea was also used in~\cite{NRS07}). For $r\geq0$ redefine $L$ as
$$L(r)=\frac{1}{t} \max\limits_{\text{distinct}~i_1,\ldots,i_t\in [n]}\left\{ \bar{\BBB}_r(x_{i_1})+\ldots+\bar{\BBB}_r(x_{i_t})\right\}.$$
That is, to compute $L(r)$ we construct a ball of radius $r$ around every input point, count the number of points contained in every such ball (counts are capped at $t$), and compute the average of the $t$ biggest counts.

To see that the redefined function $L(r)$ has low sensitivity, consider a set of $n$ input points and a ball of radius $r$ around every input point. Adding a new input point can increase by at most 1 the number of points contained within every existing ball. In addition, we now have a new ball centered around the new input point, and as we cap counts at $t$, we count at most $t$ points in this ball. Overall, adding the new input point can increase $L(r)$ by at most $t\cdot \frac{1}{t}+\frac{t}{t}=2$. The function $L(r)$ has therefore sensitivity $O(1)$. 

Utility wise, we are still searching for an $r$ s.t.
$$(i)\;\; L(r) \gtrsim t \qquad\text{ and }\qquad (ii)\;\;L(r/2) < t.$$
Again, such an $r$ is useful since, by (i), there exists a ball of radius $r$ containing $\gtrsim t$ input points, and by (ii) we have that $r\leq4 r_{opt}$: Otherwise (if $r_{opt}<r/4$) there exists a subset $D\subseteq S$ of $t$ points which are contained in a ball of radius $r/4$. A ball of radius $r/2$ around {\em every} point in $D$ contains all of $D$ and therefore there are $t$ balls of radius $r/2$ containing $t$ points. Hence, $L(r/2)\geq t$, contradicting (ii).

So, the function $L$ has low sensitivity, and we are searching for an $r$ s.t.\ $L(r)\gtrsim t$ and $L(r/2)<t$. This can easily be done privately using binary search with noisy estimates of $L$ for the comparisons, but as there are (roughly) $\log(\sqrt{d}|X|)$ comparisons such a binary search would only yield a radius $r$ s.t. $L(r)\gtrsim t-\log(\sqrt{d}|X|)$.\footnote{Alternatively, an $r$ s.t.\ $L(r)\gtrsim t$ and $L(r/2)<t$  could be privately computed using the sparse vector technique, which also yields a radius $r$ s.t. $L(r)\gtrsim t-\log(\sqrt{d}|X|)$.} In Section~\ref{sec:GoodRadiusAnalysis} we will use a tool from~\cite{BNS13b} (recursion on binary search) to improve the guarantee to $L(r)\gtrsim t-9^{\log^*(\sqrt{d}|X|)}$.

\begin{lemma}[Algorithm \texttt{GoodRadius}]
Let $S\in(X^d)^n$ be a database containing $n$ points from $X^d$ and let $t,\beta,\epsilon,\delta$ be parameters.
There exists a $\poly(n,d,\log|X|)$-time $(\epsilon,\delta)$-differentially private algorithm that on input $S$ outputs a radius $r\in\R$ s.t.\ with probability at least $(1-\beta)$:
\begin{enumerate}
	\item There is a ball in $X^d$ of radius $r$ containing at least $t-O\left( \frac{1}{\epsilon} \log(\frac{1}{\beta\delta}) \cdot 9^{\log^*(|X|\cdot d)} \right)$ input points.
	\item Let $r_{\mbox{opt}}$ denote the radius of the smallest ball in $X^d$ containing at least $t$ points from $S$. Then $r\leq 4\cdot r_{\mbox{opt}}$.
\end{enumerate}
\end{lemma}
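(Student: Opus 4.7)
The plan is to realize the deterministic recipe described in the excerpt: set up a function $L(r)$ of low sensitivity, restrict attention to a small discretized family $R$ of candidate radii, and privately locate a ``boundary'' $r \in R$ with $L(r) \gtrsim t$ and $L(r/2) < t$. First, since the input points lie in the unit cube quantized at grid step $1/(|X|-1)$, every nonzero Euclidean distance between input points lies in $[1/(|X|-1),\sqrt d]$; hence the candidate family $R = \{2^{-i}\sqrt d : 0 \le i \le O(\log(|X|\sqrt d))\}$ of size $|R|=O(\log(|X|\sqrt d))$ contains some radius within a factor $2$ of any quantity we care about. Second, I would verify that the averaged, cap-at-$t$ function $L$ has $L_1$-sensitivity at most $2$: changing one row can shift each of the other $n-1$ capped counts by at most $1$ and can replace one capped count by a value in $[0,t]$, so the top-$t$ sum changes by at most $2t$, i.e., $L$ by at most $2$. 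I would also verify the utility dichotomy noted in the excerpt: if $L(r) \ge t - c$ then (since each of the $t$ averaged terms is capped at $t$) the largest capped count $\bar\BBB_r(x_{i_1})$ is $\ge t-c$, giving a ball of radius $r$ around an input point with $\ge t-c$ input points; and if $L(r/2) < t$ then, by the ``ball-around-every-optimal-point'' argument, $r_{opt} \ge r/4$, so $r \le 4\,r_{opt}$.

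The crux of the proof is to privately locate such a boundary radius in $R$ while paying only $2^{O(\log^*(|X|d))}$ in the additive slack. A naive noisy binary search would make $O(\log|R|)$ comparisons of $L$, each requiring Laplace noise of scale $O(\log|R|/\eps)$ by basic composition (or $O(\sqrt{\log|R|\log(1/\delta)}/\eps)$ by advanced composition), producing an additive loss polylogarithmic in $|X|d$---too large. Instead, I would invoke the recursion-on-binary-search trick from \cite{BNS13b}: the $|R|$-candidate search is reduced to $O(1)$ subsearches on intervals of length $\sqrt{|R|}$ (using noisy thresholding on $L$ to locate the subinterval containing the boundary), and the reduction is applied recursively. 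After $\log^*(|R|)$ levels the subsearch is of constant size and solved directly; the per-level constant blowup in privacy budget and additive slack compounds multiplicatively to a factor $9^{\log^*|R|}=9^{\log^*(|X|d)}$. Composition (Theorem~\ref{thm:composition1}) combines the per-level guarantees into an overall $(\eps,\delta)$ guarantee, and the Laplace mechanism (Theorem~\ref{thm:lap}) with noise $O(\log(1/\delta)/\eps)$ per comparison yields the final additive slack $O\bigl(\tfrac{1}{\eps}\log(\tfrac{1}{\beta\delta})\cdot 9^{\log^*(|X|d)}\bigr)$ claimed by the lemma.

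The main obstacle is precisely this recursive boundary-search step: adapting the \cite{BNS13b} recursion (originally phrased for query release over thresholds) to the task of finding a jump in a low-sensitivity real-valued function defined on $R$, and carefully tracking the constant-per-level blowup to get $9^{\log^*}$ and not something worse. The remaining ingredients are essentially routine---the sensitivity bound on $L$ is a direct calculation, a union bound absorbs the $\beta$-error across levels, and each evaluation of $L(r)$ costs $\poly(n,d)$ time (compute pairwise distances, cap counts at $t$, and sum the top-$t$), so with $\poly(\log(|X|d))$ evaluations the overall runtime is $\poly(n,d,\log|X|)$ as required.
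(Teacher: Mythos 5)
Your preparatory steps coincide with the paper's: the sensitivity-$2$ bound on $L$ and the dichotomy ($L(r)\gtrsim t$ yields a heavy ball around some input point; $L(r/2)<t$ forces $r\le 4r_{opt}$) are exactly what the paper proves. The genuine gap is the part you yourself flag as the ``main obstacle'': the private search with only $2^{O(\log^*)}$ additive loss, which you sketch rather than prove, and the sketch is quantitatively wrong. A recursion that reduces a search over $|R|$ candidates to $O(1)$ subsearches over intervals of length $\sqrt{|R|}$ bottoms out after $\Theta(\log\log|R|)$ levels, not $\log^*|R|$; a constant per-level blow-up would then give an additive loss of order $(\log\log(|X|d))^{O(1)}$, which is asymptotically larger than the claimed $9^{\log^*(|X|d)}$. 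The recursion of \cite{BNS13b} that actually achieves $\log^*$ depth reduces a problem over a domain of size $N$ to one over a domain of size roughly $\log N$ (by searching over interval lengths), and -- more importantly -- it is already packaged there as a general tool: algorithm \texttt{RecConcave} (quoted in the paper as Theorem~\ref{thm:rec}) privately solves any quasi-concave promise problem with a sensitivity-$1$ quality function. The paper therefore does not adapt any recursion at all: it defines $Q(r,S)=\frac{1}{2}\min\left\{t-L(r/2,S),\,L(r,S)-t+4\Gamma\right\}$, checks that $Q$ has sensitivity $1$ and is quasi-concave (using that $L(\cdot,S)$ is non-decreasing), verifies the quality promise $\Gamma$, and makes a single call to \texttt{RecConcave}; any output $z$ with $Q(z,S)\ge \Gamma/2$ then satisfies both items of the lemma. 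Your proposal supplies neither this reduction nor a correct substitute for the recursion, so the central step of the proof is missing.

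A second, smaller omission is the cluster-of-radius-zero case. The paper's algorithm first tests $L(0,S)$ with Laplace noise and returns $z=0$ if it is large; this is needed both for item~2 when $r_{opt}=0$ (your dyadic family has a smallest \emph{positive} element, and returning it would violate $r\le 4r_{opt}$ in that case) and to guarantee that a candidate with quality at least $\Gamma$ exists, i.e., that the ``boundary'' you want to locate exists at all within the candidate set. Relatedly, the paper searches over the fine grid $\left\{0,\frac{1}{2|X|},\frac{2}{2|X|},\ldots\right\}$ precisely so that the smallest grid point $r$ with $L(r,S)\ge t-2\Gamma$ provably has $L(r/2,S)<t-2\Gamma$ (after the zero test); your geometric grid can be made to work for the factor-$4$ guarantee once both conditions are certified, but the existence argument still requires handling the zero-radius edge case explicitly.
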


\subsection{Locating a cluster: algorithm GoodCenter}

Let $r$ be the outcome of Algorithm \texttt{GoodRadius} (so $r=O(r_{opt})$ and there exists a ball of radius $r$ containing $\gtrsim t$ input points).
Given the radius $r$, our next task is to locate, with differential privacy, a small ball in $\R^d$ containing $\gtrsim t$ input points. 
We begin by examining two intuitive (but unsuccessful) suggestions for achieving this goal.

\paragraph{First Attempt.} One of the first ideas for using the given radius $r$ in order to locate a small ball is the following: Divide each axis into intervals of length $\approx r$, identify (for every axis) a ``heavy'' interval containing lots of input points, and return the resulting axis-aligned box. Such a ``heavy'' interval could be privately identified (on every axis) using known stability-based techniques~\cite{DworkLei, Adist, BNS13b}.

The main problem with our first attempt is that the resulting box might be empty. This is illustrated in Figure~\ref{fig:attempt1}, where a ``heavy'' interval is identified on each axis s.t.\ their intersection is empty.

\begin{figure*}[t]
\begin{center}
\begin{tikzpicture}[xscale=1,yscale=1,show background rectangle,inner frame sep=10pt]
\draw [<->] (0,2.8) -- (0,0) -- (10,0);
\draw [|-|] (0,0) -- (0,0.75);
\draw [-|] (0,0.75) -- (0,1.5);
\draw [-|] (0,1.5) -- (0,2.25);
\draw [-,dashed] (0,1.5) -- (9.5,1.5);
\draw [-,dashed] (0,2.25) -- (9.5,2.25);
\draw [-,line width=0.5mm] (0,1.5) -- (0,2.25);

\draw [|-|] (0,0) -- (0.75,0);
\draw [-|] (0.75,0) -- (1.5,0);
\draw [-|] (1.5,0) -- (2.25,0);
\draw [-|] (2.25,0) -- (3,0);
\draw [-|] (3,0) -- (3.75,0);
\draw [-|] (3.75,0) -- (4.5,0);
\draw [-|] (4.5,0) -- (5.25,0);
\draw [-|] (5.25,0) -- (6,0);
\draw [-|] (6,0) -- (6.75,0);
\draw [-|] (6.75,0) -- (7.5,0);
\draw [-|] (7.5,0) -- (8.25,0);
\draw [-|] (8.25,0) -- (9,0);
\draw [-,dashed] (8.25,0) -- (8.25,2.5);
\draw [-,dashed] (9,0) -- (9,2.5);
\draw [-,line width=0.5mm] (8.25,0) -- (9,0);



\node[fill,circle,inner sep=0pt,minimum size=3pt] (a1) at (0.7,1.9) {};
\node[fill,circle,inner sep=0pt,minimum size=3pt] (a1) at (1,1.9) {};
\node[fill,circle,inner sep=0pt,minimum size=3pt] (a1) at (1.2,1.9) {};
\node[fill,circle,inner sep=0pt,minimum size=3pt] (a1) at (1.6,1.9) {};
\node[fill,circle,inner sep=0pt,minimum size=3pt] (a1) at (1.8,1.9) {};
\node[fill,circle,inner sep=0pt,minimum size=3pt] (a1) at (2,1.9) {};

\node[fill,circle,inner sep=0pt,minimum size=3pt] (a1) at (8.6,0.2) {};
\node[fill,circle,inner sep=0pt,minimum size=3pt] (a1) at (8.6,0.4) {};
\node[fill,circle,inner sep=0pt,minimum size=3pt] (a1) at (8.6,0.6) {};
\node[fill,circle,inner sep=0pt,minimum size=3pt] (a1) at (8.6,0.9) {};
\node[fill,circle,inner sep=0pt,minimum size=3pt] (a1) at (8.6,1.1) {};
\node[fill,circle,inner sep=0pt,minimum size=3pt] (a1) at (8.6,1.3) {};

\node[fill,circle,inner sep=0pt,minimum size=3pt] (a1) at (3.2,0.15) {};
\node[fill,circle,inner sep=0pt,minimum size=3pt] (a1) at (3.3,0.3) {};
\node[fill,circle,inner sep=0pt,minimum size=3pt] (a1) at (3.15,0.4) {};
\node[fill,circle,inner sep=0pt,minimum size=3pt] (a1) at (3.5,0.36) {};
\node[fill,circle,inner sep=0pt,minimum size=3pt] (a1) at (3.6,0.6) {};
\node[fill,circle,inner sep=0pt,minimum size=3pt] (a1) at (3.4,0.72) {};
\node[fill,circle,inner sep=0pt,minimum size=3pt] (a1) at (3.35,0.5) {};

\end{tikzpicture}
\end{center}
\vspace{-15px}\caption{\small An illustration of ``heavy'' intervals s.t.\ their intersection is empty. \label{fig:attempt1}}
\end{figure*}
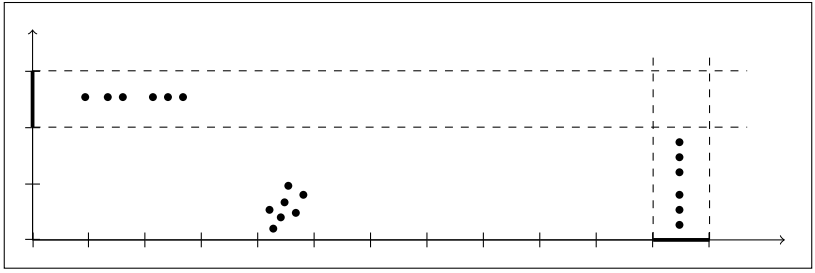

\paragraph{Second Attempt.} The failure point of our first strategy was the attempt to locate the cluster in an axis by axis manner. Trying to avoid that pitfall, consider the following idea for identifying a ``heavy'' {\em box} in $\R^d$:
Let us denote by $P\subseteq S$ the guaranteed set of $\gtrsim t$ input points which are contained in a ball of radius $r$ in $\R^d$.
Observe that the set $P$ is of diameter $2r$, and divide each axis into {\em randomly shifted} intervals of length $\approx 4 d r$. For every axis we have that the projection of $P$ onto that axis is contained within one interval w.p.\ $\gtrsim 1-1/(2d)$, and using the union bound, this is the case for all axes simultaneously w.p.\ $\gtrsim 1/2$. That is, without looking at the data, we have partitioned $\R^d$ into disjoint boxes of side length $\approx 4dr$ s.t.\ at least one of them contains $\gtrsim t$ input points, and such a ``heavy'' box can be privately identified using known stability-based techniques. While the resulting box is indeed ``heavy'', it is of side-length $\approx dr$ (i.e., of diameter $\approx d^{1.5} r$), which is not what we are looking for.

\paragraph{Towards a Solution.} Assume (for now) that we have privately identified a (concisely described) subset $X'$ of $X$ such that $S'=S\cap X'$ has $\gtrsim t$ points and is contained in a ball of radius $r$. Our current goal is, therefore, to identify a small ball enclosing all of $S'$. One option (which still does not preserve privacy, but has potential) it the following: Compute the average $c$ of the points in $S'$ and return a ball of radius $r$ around $c$. This option has merit since computing the average of input points can be made private by adding random noise to every coordinate, with magnitude proportional to the diameter of our subset $S'$ divided by its size $|S'|\gtrsim t$ (the intuition is that random noise of that magnitude masks any possible change limited to one input element, see Theorem~\ref{thm:gauss}). In our case, we would like to use $r$ (the diameter of $S'$) as such a bound, and hence obtain a ball of radius (roughly) $2r$. However, all of our discussion above only holds with high probability, say with probability $1-\beta$. In particular, the diameter of $S'$ is only bounded with probability $1-\beta$. In order for the privacy analysis to go through, we need this bound to hold with probability at least $1-\delta$, i.e., set $\beta=\delta$. Since $\delta$ is typically a negligible function of $n$, and since our running time depends on $1/\beta$, this is unacceptable.

As we next explain, our first (failed) attempt comes in handy for bounding the necessary noise magnitude. For the intuition, recall that our first attempt failed because we were misled by points outside the small cluster. By limiting our attention only to points in $S'$ (which are clustered), this is no longer an issue.

Assume that the set $S'$ contains $\gtrsim t$ points and that its diameter is $r$, and consider the following procedure: Partition every axis of $\R^d$ into intervals of length $r$. On every axis, at least one such interval contains (the projection of) $\gtrsim t/2$ points, and we can find such a ``heavy'' interval $I$ using known stability-based techniques. Afterwards, we can extend its length by $r$ to the left and to the right to obtain an interval $\hat{I}$ of length $3r$ containing all of $S'$. See Figure~\ref{fig:containing-int} for an illustration. So, on every axis we identified an interval of length $3r$ containing {\em all} of the points in $S'$. Hence, the intersection of all those intervals is a box $B$ of diameter $\approx \sqrt{d}r$ containing all of $S'$.

\begin{figure*}[t]
\begin{center}
\begin{tikzpicture}[xscale=1,yscale=1,show background rectangle,inner frame sep=10pt]
\draw [<->] (0,2) -- (0,0) -- (10,0);

\draw [|-|] (0,0) -- (0.75,0);
\draw [-|] (0.75,0) -- (1.5,0);
\draw [-|] (1.5,0) -- (2.25,0);
\draw [-|] (2.25,0) -- (3,0);
\draw [-|] (3,0) -- (3.75,0);
\draw [-|] (3.75,0) -- (4.5,0);
\draw [-|] (4.5,0) -- (5.25,0);
\draw [-|] (5.25,0) -- (6,0);
\draw [-|] (6,0) -- (6.75,0);
\draw [-|] (6.75,0) -- (7.5,0);
\draw [-|] (7.5,0) -- (8.25,0);
\draw [-|] (8.25,0) -- (9,0);
\draw [-,line width=0.5mm] (3.75,0) -- (4.5,0);


\node at (4.1,0.3) {$I$};
\node[xscale=7,yscale=1.25,rotate=90] at (4.115,-0.4) {$\{$};
\node at (4.1,-0.85) {$\hat{I}$};

\node[fill,circle,inner sep=0pt,minimum size=3pt] (a1) at (3.5,0.85) {};
\node[fill,circle,inner sep=0pt,minimum size=3pt] (a1) at (3.6,1) {};
\node[fill,circle,inner sep=0pt,minimum size=3pt] (a1) at (3.45,1.1) {};
\node[fill,circle,inner sep=0pt,minimum size=3pt] (a1) at (3.8,1.06) {};
\node[fill,circle,inner sep=0pt,minimum size=3pt] (a1) at (3.9,1.3) {};
\node[fill,circle,inner sep=0pt,minimum size=3pt] (a1) at (3.7,1.42) {};
\node[fill,circle,inner sep=0pt,minimum size=3pt] (a1) at (3.65,1.2) {};
\node[fill,circle,inner sep=0pt,minimum size=3pt] (a1) at (4.05,1.1) {};

\end{tikzpicture}
\end{center}
\vspace{-15px}
\caption{\small An illustration of an interval $I$ of containing some of the points of $S'$, and the corresponding interval $\hat{I}$ of length $3|I|$ containing all of $S'$ (since $S'$ is of diameter $r=|I|$). \label{fig:containing-int}}
\end{figure*}
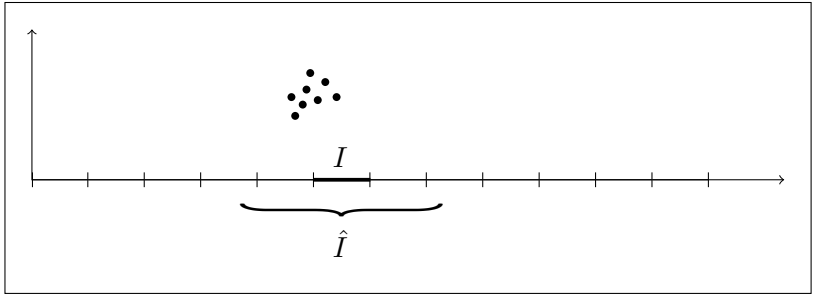

The thing that works in our favor here is that the above procedure {\em always} returns a box $B$ of diameter $\approx \sqrt{d}r$, even if our assumptions on the set $S'$ are invalid (in which case the box $B$ might be empty, but its diameter is the same). Now consider the set $\tilde{S}$ where we truncate all points in $S'$ to lie in $B$. Observe that (w.h.p.) we have that $S'\subseteq B$ and $\tilde{S}=S'$, and that, in any case, the diameter of $\tilde{S}$ is at most $\approx \sqrt{d}r$. We can therefore  privately release the noisy average of the points in $\tilde{S}$.
Assuming that $|S'|$ is big enough, the incurred noise is of magnitude $\lesssim r$, which results in a ball of radius $O(r)$ containing all of $S'$.

To summarize, it suffices to privately ``ignore'' all input points but $\gtrsim t$ points falling in some ball of radius $\approx r$.

\paragraph{Final Step.} Our final task is to identify a subset $S'\subseteq S$ of $\approx t$ input elements that are contained in a ball of radius roughly $r$.
Using the Johnson-Lindenstrauss transform we project our input points onto $\R^k$, where $k\approx\log(n)$ (w.h.p.\ point distances are preserved up to a constant factor). We denote the projection of a point $x\in \R^d$ as $f(x)\in \R^k$. 
By the properties of the JL-transform, it suffices to identify a part of the input $S'\subseteq S$ s.t.\ its projection
$f(S'):=\{f(x) : x\in S'  \}$ is contained within a ball of radius $\approx r$ in $\R^k$.

As we next explain, our second (unsuccessful) attempt could be used to identify such a subset $S'\subseteq S$. The intuition is that our second attempt incurred an unacceptable error factor of $\poly(d)$ in the cluster radius when locating the ball in $\R^d$, and this error factor is mitigated by locating the ball in the lower-dimensional space $\R^k$ (where $k=O(\log n)$).

As above, let $P\subseteq S$ be the guaranteed set of $\gtrsim t$ input points contained within a ball of radius $r$ in $\R^d$.
Note that (w.h.p.) the set $f(P):=\{f(x) : x\in P \}$ is contained within a ball of radius $\lesssim r$ in $\R^k$, and assume that this is the case. Partition every axis $i$ of $\R^k$ into randomly shifted intervals ${\cal I}_i=\{I^i_j\}_{j\in\Z}$ of length $\approx 4r$.
On every axis $i$, with probability $\gtrsim 1/2$ the projection of $f(P)$ on the $i^\text{th}$ axis is completely contained within one interval in ${\cal I}_i$. With probability $\gtrsim 0.5^k=1/\poly(n)$, this is the case for all of the $k$ axes simultaneously, and $f(P)$ is completely contained within an axis aligned box whose projection onto every axis $i$ of $\R^k$ is in ${\cal I}_i$. In other words, we have partitioned $\R^k$ into disjoint $k$-dimensional axis aligned boxes of side-length $\approx 4r$ s.t.\ with noticeable probability (we will later use repetitions to amplify this probability, and use the sparse vector technique to privately choose one of the repetitions) at least one of them contains $\gtrsim t$ (projected) input points. Such a ``heavy'' rectangle $B$ could be privately identified using stability based techniques (its diameter is $\approx r\sqrt{k}\approx r\sqrt{\log n}$). Finally, we define the set $S'=\{x\in S : f(x)\in B\}$ as the set of points that are mapped (by the JL transform) into the rectangle $B$. We now have that $S'$ contains $\gtrsim t$ input elements, since the box $B$ is ``heavy'' in $\R^k$, and the diameter $S'$ is $\lesssim r\sqrt{\log n}$, since that is the diameter of $B=f(S')$.
The complete construction appears in Algorithm \texttt{GoodCenter} (algorithm~\ref{alg:GoodCenter}).

\begin{lemma}[Algorithm \texttt{GoodCenter}]
Let $S\in(\R^d)^n$ be a database containing $n$ points in $\R^d$, and let $r,t,\beta,\epsilon,\delta$ be parameters s.t.\
$t\geq O\left(\frac{\sqrt{d}}{\epsilon}\log(\frac{1}{\beta})\log(\frac{nd}{\beta\epsilon\delta})\sqrt{\log(\frac{1}{\beta\delta})}\right)$.
There exists a $\poly(n,d,1/\beta)$-time $(\epsilon,\delta)$-differentially private algorithm that on input $S,r,t$ outputs a point $z\in \R^d$ s.t.\ the following holds.
If there exists a ball of radius $r$ in $\R^d$ containing at least $t$ points from $S$,
then with probability at least $1-\beta$, the ball of radius $O\left( r \sqrt{\log n}\right)$ around $z$ contains at least $t-O\left(\frac{1}{\epsilon}\log(\frac{1}{\beta})\log(\frac{n}{\beta\epsilon\delta})\right)$ of the points in $S$.
\end{lemma}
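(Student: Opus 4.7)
My plan is to follow the four-stage strategy laid out in the discussion preceding the lemma. The algorithm has three data-dependent stages, each differentially private, plus one data-independent randomness step; the overall privacy guarantee then follows from composition (Theorem~\ref{thm:composition1}).

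First I would sample a Johnson--Lindenstrauss map $f:\R^d\to\R^k$ with $k=\Theta(\log n)$ so that, with probability at least $1-\beta/3$, all pairwise distances within $S$ are preserved up to constant factors. Since $f$ is chosen obliviously of $S$, this step consumes no privacy budget. On the good event, the assumed cluster of $t$ points projects to a ball of radius $O(r)$ in $\R^k$. Next, in $\R^k$, I would lay down a random axis-aligned grid with cells of side length $\Theta(r)$; on each axis the projected cluster lies in a single slab with probability $\ge 1/2$, so by independence across axes it lies in a single cell with probability $\ge 2^{-k}=1/\poly(n)$. Repeating this $T=\poly(n/\beta)$ times with fresh shifts and using the sparse-vector technique together with Theorem~\ref{thm:sanPoints} lets me privately pick a repetition and a cell $B^{(k)}\subseteq\R^k$ whose preimage $S'=\{x\in S:f(x)\in B^{(k)}\}$ has $|S'|\ge t-O\!\left(\tfrac{1}{\epsilon}\log\tfrac{n}{\beta\delta}\right)$. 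On the JL event, the diameter of $S'$ in $\R^d$ is $O(r\sqrt{k})=O(r\sqrt{\log n})$.

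Third, for each of the $d$ axes of $\R^d$ I would independently partition the axis into intervals of length $\Theta(r\sqrt{\log n})$ and apply Theorem~\ref{thm:sanPoints} to the axis-projection of $S'$ to find a heavy interval; extending the chosen interval by its own length on each side produces a slab that, on the good events so far, covers the full axis-projection of $S'$. The intersection of these $d$ slabs is an axis-aligned box $B^{(d)}\subseteq\R^d$ of side length $O(r\sqrt{\log n})$ which (i)~is the output of a $d$-fold composed private procedure applied to $S$ and (ii)~contains $S'$ with high probability. Finally I would truncate every point of $S$ to its nearest point in $B^{(d)}$ and release the Gaussian-perturbed average of those truncated points whose preimage lands in $B^{(k)}$ (Theorem~\ref{thm:gauss}), with noise calibrated to the $L_2$-diameter of $B^{(d)}$ divided by the target cluster size $t$. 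Since truncation to the (already-computed) box $B^{(d)}$ is deterministic post-processing, the Gaussian mechanism is unconditionally $(\epsilon,\delta)$-differentially private. On the intersection of all good events, the released point $z$ lies within $L_2$ distance $O(r\sqrt{\log n})$ of the centroid of $S'$, so a ball of the claimed radius around $z$ captures $\ge t-O\!\left(\tfrac{1}{\epsilon}\log\tfrac{1}{\beta}\log\tfrac{n}{\beta\epsilon\delta}\right)$ points.

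The main obstacle is the interplay between the second and fourth stages: $B^{(d)}$ must be small enough that Gaussian noise of $L_2$-norm $O(r\sqrt{\log n})$ is achievable from only $t=\tilde O(\sqrt{d}/\epsilon)$ points, while the stability-based subroutines of the second and third stages must succeed except with probability $\beta$ while losing only $O(\tfrac{1}{\epsilon}\log\tfrac{n}{\beta\delta})$ in cluster size. Calibrating the number of sparse-vector repetitions $T$, the side length of the $\R^k$-grid, and the per-subroutine privacy parameters $(\epsilon',\delta')$ so that Theorem~\ref{thm:composition1} yields $(\epsilon,\delta)$ overall is what pins down the logarithmic factors and the $\sqrt{d}$ dependence in the hypothesis on $t$.
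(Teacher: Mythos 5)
Your first two stages and your final step track the paper's algorithm closely (JL projection, randomly shifted grid in $\R^k$ combined with \texttt{AboveThreshold} and Theorem~\ref{thm:sanPoints} to privately select a heavy box $B^{(k)}$, and a Gaussian-mechanism average over a truncation region whose diameter is bounded \emph{deterministically}, which is indeed the subtle point for privacy). The genuine gap is in your third stage, and it costs a factor of $\sqrt{d}$ that the lemma does not allow. You build the bounding box $B^{(d)}$ on the \emph{standard} axes of $\R^d$ with per-axis interval length $\Theta(r\sqrt{\log n})$ (the diameter of $S'$). The resulting box then has $L_2$-diameter $\Theta\bigl(\sqrt{d}\cdot r\sqrt{\log n}\bigr)$, and since the Gaussian noise must be calibrated to this deterministic diameter, the noise vector has $L_2$-norm roughly $\frac{d\,r\sqrt{\log n\,\log(1/\delta)}}{\epsilon t}$; forcing this to be $O(r\sqrt{\log n})$ requires $t=\tilde{\Omega}(d/\epsilon)$, not the claimed $t=\tilde{O}(\sqrt{d}/\epsilon)$. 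The paper's missing ingredient here is the random rotation: it first picks a random orthonormal basis $Z$ of $\R^d$ and, by Lemma~\ref{lem:RandomRotation}, the already-clustered set $D$ (diameter $O(r\sqrt{k})$) projects onto \emph{every} direction $z_i$ into an interval of length only $O\bigl(r\sqrt{k\ln(dn/\beta)/d}\bigr)$, so the intersection of the $d$ chosen slabs has diameter $O\bigl(r\sqrt{k\ln(dn/\beta)}\bigr)$ with no extra $\sqrt{d}$, and the Gaussian step goes through with $t=\tilde{O}(\sqrt{d}/\epsilon)$.

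A second, related issue is your reliance on basic composition (Theorem~\ref{thm:composition1}) to account for the $d$ per-axis applications of Theorem~\ref{thm:sanPoints}: that forces a per-axis budget of $\epsilon/O(d)$, and the stability-based selection then needs each heavy interval to hold $\Omega\bigl(\frac{d}{\epsilon}\log\frac{n}{\beta\delta}\bigr)$ points, again pushing $t$ to $\tilde{\Omega}(d/\epsilon)$. The paper instead runs each per-axis selection with privacy parameter $\Theta\bigl(\epsilon/\sqrt{d\ln(1/\delta)}\bigr)$ and invokes the advanced composition theorem (Theorem~\ref{thm:composition2}), which is exactly where the $\sqrt{d}\cdot\sqrt{\log(1/\delta)}$ factor in the hypothesis on $t$ comes from. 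Without both the random rotation and advanced composition, your calibration step at the end cannot ``pin down'' the $\sqrt{d}$ dependence claimed in the lemma.
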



\section{Details of analysis}\label{sec:details}
\subsection{Algorithm GoodRadius}\label{sec:GoodRadiusAnalysis}

As we explained in Section~\ref{sec:GoodRadiusWarmup}, it is possible to compute an approximation for the optimal radius using a binary search on a carefully chosen low sensitivity function.
We use the following tool from~\cite{BNS13b} in order to reduce the sample cost of that binary search.

\begin{definition}
A function $Q(\cdot)$ over a totally ordered domain is {\em quasi-concave} if for every $i\leq \ell \leq j$ we have $Q(\ell)\geq \min\{Q(i),Q(j)\}$.
\end{definition}

\begin{definition}[\cite{BNS13b}]
A {\em Quasi-Concave Promise Problem} consists of an ordered set $F$ of possible solutions, a database $S\in U^n$, a sensitivity-1 quality function $Q:U^n\times F \rightarrow\R$, an approximation parameter $\alpha$, and another parameter $p$ (called a {\em quality promise}). 

If $Q(S,\cdot)$ is quasi-concave and if there exists a solution $f\in F$ for which $Q(S,f)\geq p$ then a good output for the problem is a solution $g\in F$ satisfying $Q(S,g)\geq(1-\alpha)p$. The outcome is not restricted otherwise.
\end{definition}

We will use Algorithm \texttt{RecConcave} from~\cite{BNS13b} to solve quasi-concave promise problems while preserving differential privacy:

\begin{theorem}[Algorithm \texttt{RecConcave}~\cite{BNS13b}]\label{thm:rec}
Let $U$ be a domain, let $F$ be a totally ordered (finite) set of solutions, and let $Q:U^n\times F \rightarrow\R$ be a sensitivity-1 quality function.
Let $\alpha,\beta,\epsilon,\delta$ be parameters. There exists an $(\epsilon,\delta)$-differentially private algorithm s.t.\ the following holds. On input a database $S$ and a quality promise $p$ for which $Q(S,\cdot)$ is quasi-concave and 
$$\max_{f\in F}\{Q(S,f)\}\geq p \geq 8^{\log^*|F|} \cdot \frac{36 \log^*|F|}{\alpha\epsilon} \log\Big(\frac{12\log^*|F|}{\beta\delta}\Big),$$
the algorithm outputs a solution $f\in F$ s.t.\ $Q(S,f)\geq(1-\alpha)p$ with probability at least $(1-\beta)$.
\end{theorem}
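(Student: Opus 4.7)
The plan is to prove this by designing the algorithm \texttt{RecConcave} recursively on the size of the solution set $F$, exploiting the one-peak structure that quasi-concavity guarantees: there is an index $f^* \in F$ such that $Q(S,\cdot)$ is nondecreasing on $\{f \leq f^*\}$ and nonincreasing on $\{f \geq f^*\}$. The base case (say $|F| = O(1)$) is handled by the standard exponential mechanism, which finds an approximate maximizer with $(\epsilon,0)$-privacy at cost $O(\log|F|/\epsilon)$ in the quality promise and success probability $1 - \beta$.

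For the recursive step, I would partition $F$ into roughly $\sqrt{|F|}$ contiguous blocks $F_1,\ldots,F_m$ of $\sqrt{|F|}$ consecutive solutions each, and define a new quality function $\widehat Q$ over the reduced domain of blocks by $\widehat Q(S, F_i) = \max_{f \in F_i} Q(S,f)$. Quasi-concavity of $Q(S,\cdot)$ immediately transfers to $\widehat Q(S,\cdot)$ on the ordered set of blocks, since the block $F_{i^*}$ containing the peak $f^*$ is a maximizer and $\widehat Q$ is monotone on either side of it. Because $\widehat Q$ has sensitivity $1$ inherited from $Q$, a single invocation of, say, the exponential mechanism (or the stability-based heavy-block selector of Theorem~\ref{thm:sanPoints}) on the smaller domain of $m \approx \sqrt{|F|}$ blocks privately returns, with probability $1-\beta'$, a block $F_{\hat \imath}$ whose quality is within an additive $O(\log m/\epsilon')$ of $p$. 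We then recurse \emph{inside} a slightly enlarged window around $F_{\hat \imath}$ (e.g.\ $F_{\hat\imath - 1} \cup F_{\hat\imath} \cup F_{\hat\imath + 1}$), whose size is back to roughly $\sqrt{|F|}$.

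The main obstacle is the analysis: one must show that (i) a good block really exists given the quality promise, and (ii) the recursion terminates quickly. For (i), the promise $Q(S,f) \geq p$ at some $f$ means the block containing $f$ has $\widehat Q \geq p$, so stability-based selection succeeds provided $p$ exceeds an additive $O(\tfrac{1}{\epsilon}\log\tfrac{1}{\beta\delta})$ term. For (ii), each level shrinks $|F|$ to $\sqrt{|F|}$, equivalently halving $\log|F|$, so after $k$ levels the effective domain has size $2^{\log|F|/2^{k}}$, and $O(\log^*|F|)$ levels suffice to reduce to constant size. Each level multiplicatively inflates the required quality promise by a constant (absorbing the per-level additive error into the next level's promise by rescaling), which compounds to the stated $8^{\log^*|F|}$ factor.

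Finally, privacy is obtained by composing the $O(\log^*|F|)$ recursive invocations of $(\epsilon', \delta')$-private mechanisms using Theorem~\ref{thm:composition1} (or, tighter, advanced composition), setting $\epsilon' = \epsilon/\log^*|F|$ and $\delta' = \delta/\log^*|F|$, which explains the linear $\log^*|F|$ factor in the per-level sensitivity budget and the $\log(12\log^*|F|/(\beta\delta))$ factor in the final promise bound after a union bound over all recursion levels with failure probability $\beta/\log^*|F|$ each. Plugging these choices of $\epsilon', \delta', \beta'$ into the per-level additive error $O(\tfrac{1}{\alpha\epsilon'}\log(1/(\beta'\delta')))$ and summing the inflation $8^{\log^*|F|}$ across levels yields the quality promise bound in the theorem statement.
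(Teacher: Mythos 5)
First, note that the paper does not prove this statement at all: Theorem~\ref{thm:rec} is imported verbatim from~\cite{BNS13b} and used as a black box, so your sketch can only be judged against the known construction of \texttt{RecConcave}, not against a proof in this paper. Your overall template (recursive domain reduction, a private selection step per level, composition across $\log^*$ levels with budget $\epsilon/\log^*|F|$ per level) is the right shape, but the specific reduction you chose does not deliver the stated bound, for two concrete reasons.

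First, the recursion-depth count is wrong: shrinking $F$ to blocks of size $\sqrt{|F|}$ halves $\log|F|$ per level, so reaching a constant-size domain takes $\Theta(\log\log|F|)$ levels, not $O(\log^*|F|)$. To get $\log^*$ depth you must reduce the domain to size roughly $\log|F|$ in a single level; this is exactly what the actual \texttt{RecConcave} does by recursing not on blocks of solutions but on the set of \emph{scales} $j\in\{0,1,\dots,\log|F|\}$, with a new quasi-concave quality measuring how good the best interval of length $2^j$ is. Second, and more importantly, your per-level selection step is not justified. If you select among the $\approx\sqrt{|F|}$ blocks with the exponential mechanism you pay an additive $\Omega(\log\sqrt{|F|}/\epsilon')=\Omega(\log|F|/\epsilon')$ already at the top level, which is precisely the cost the theorem is designed to avoid (the promise in the theorem is $2^{O(\log^*|F|)}/(\alpha\epsilon)$, far below $\log|F|/\epsilon$), and this cannot be ``absorbed into a constant multiplicative inflation'' of the promise. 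If instead you invoke the stability-based selector, note that Theorem~\ref{thm:sanPoints} applies to counting points of $S$ in a partition of the \emph{data} universe, and the more general ``choosing mechanism'' it reflects requires that only a bounded number of candidates have non-negligible quality; your block quality $\widehat{Q}(S,F_i)=\max_{f\in F_i}Q(S,f)$ can be near-optimal on arbitrarily many consecutive blocks (a long plateau of good solutions), so that tool does not apply as stated. The actual construction resolves this chicken-and-egg problem by first recursing to privately identify the correct scale $j$, at which point only $O(1)$ intervals in a partition of $F$ into length-$2^j$ intervals (taking two shifted partitions) can be good, and only then does a stability-based choice with cost independent of $|F|$ become available; the exponential mechanism is used only at the constant-size bottom of the recursion. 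Without an ingredient of this kind, your argument either degrades to the plain exponential-mechanism bound or relies on a selection primitive you have not established.
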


\begin{remark}
The computational efficiency of algorithm \texttt{RecConcave} depends on the quality function $Q$. It can be made efficient in cases where for every database $S\in U^n$, the totally ordered set of solutions $F$ can be partitions into $k=\poly(n)$ intervals of sequential solutions $F_1,F_2,\dots,F_k$ s.t.\ for every $i$ and for every $f,f'\in F_i$ we have $Q(S,f)=Q(S,f')$. In such cases, the algorithm runs in time $\poly(n,\log|F|)$, assuming that the partition of $F$, and that evaluating $Q$, can be done in time $\poly(n,\log|F|)$.
\end{remark}

\begin{algorithm*}[t]
\caption{\texttt{GoodRadius}}\label{alg:GoodRadius}

\begin{enumerate}[rightmargin=10pt,itemsep=1pt]

\item[] {\bf Input:} Database $S \in (X^d)^n$, desired ball volume $t$, failure probability bound $\beta$, and privacy parameters $\epsilon,\delta$.

\item[] {\bf Algorithm used:} Algorithm \texttt{RecConcave} for privately solving quasi-concave problems. We denote the minimal quality promise needed for algorithm \texttt{RecConcave} (for our choice of parameters) as $\Gamma=8^{\log^*(2|X|\sqrt{d})} \cdot \frac{144 \log^*(2|X|\sqrt{d})}{\epsilon} \log\Big(\frac{24\log^*(2|X|\sqrt{d})}{\beta\delta}\Big)$.

\item[] {\bf Notation:} For $x\in X^d$ and $0\leq r\in\R$ let $\BBB_r(x,S)$ denote the number of input points contained in a ball of radius $r$ around $x$. For $r<0$, let $\BBB_r(x,S)=0$. Let $\bar{\BBB}_r(x,S)=\min\{\BBB_r(x,S),t\}$.

\item For $r\in\R$ define $L(r,S)=\frac{1}{t} \max\limits_{\text{distinct}~i_1,\ldots,i_t\in [n]}\left( \bar{\BBB}_r(x_{i_1},S)+\ldots+\bar{\BBB}_r(x_{i_t},S)\right).$

\begin{enumerate}[label=\gray{\%},topsep=-10pt]
\item \gray{
That is, for every input point $x\in S$ we count the number of input points contained in a ball of radius $r$ around $x$, capped at $t$. We define $L(r,S)$ as the average of the $t$ largest counts.}
\item \gray{
Note that $L(\cdot,S)$ is a non-decreasing function.} 
\end{enumerate}

\item\label{step:zerocluster} Let $\tilde{L}(0,S)=L(0,S)+\Lap(4/\epsilon)$. If $\tilde{L}(0,S)>t-2\Gamma-\frac{4}{\epsilon}\ln(2/\beta)$, then halt and return $z=0$.
\begin{enumerate}[label=\gray{\%},topsep=-10pt]
\item \gray{
Step~\ref{step:zerocluster} handles the case where there exists a cluster of radius zero containing $\gtrsim t$ of the input points.}
\end{enumerate}

\item Define the quality function $Q(r,S)=\frac{1}{2}\min\left\{ t-L\left(r/2,S\right), \; L(r,S)-t+4\Gamma \right\}.$

\item\label{step:applyconcave} Apply algorithm \texttt{RecConcave} with privacy parameters $(\frac{\epsilon}{2},\delta)$, utility parameters  $(\alpha{=}\frac{1}{2},\frac{\beta}{2})$, quality function $Q$, and quality promise $\Gamma$ to choose and return $z\in\left\{0,\frac{1}{2|X|},\frac{2}{2|X|},\frac{3}{2|X|},\ldots,\left\lceil \sqrt{d} \right\rceil\right\}$.
\begin{enumerate} [label=\gray{\%},topsep=-10pt]
\item \gray{
For simplicity, we identify $X^d$ with the real $d$-dimensional unit cube, quantized with grid step $1/(|X|-1)$. Our results trivially extend to domains with grids steps $\ell$ by choosing the output out of $\left\{0,\frac{\ell}{2},\frac{2\ell}{2},\frac{3\ell}{2},\ldots,\left\lceil |X|\ell\sqrt{d} \right\rceil\right\}$.}
\end{enumerate}
\end{enumerate}

\end{algorithm*}

We now proceed with the privacy analysis of algorithm \texttt{GoodRadius}.

\begin{lemma}\label{lem:GoodRadiusPrivacy} 
Algorithm \texttt{GoodRadius} preserves $(\epsilon,\delta)$-differential privacy. 
\end{lemma}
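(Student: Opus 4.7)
The algorithm accesses the database $S$ only through two sub-routines: (i) the Laplace-noise release of $L(0,S)$ in Step~\ref{step:zerocluster}, and (ii) the invocation of algorithm \texttt{RecConcave} in Step~\ref{step:applyconcave} with quality function $Q(r,S)$. My plan is to show that each sub-routine is differentially private on its own (with parameters $(\epsilon/2,0)$ and $(\epsilon/2,\delta)$ respectively), and then conclude by adaptive composition (Theorem~\ref{thm:composition1}) that the whole algorithm is $(\epsilon,\delta)$-differentially private. The conditional halting in Step~\ref{step:zerocluster} is merely post-processing of the output of the Laplace mechanism, so it does not affect the composition argument.

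The heart of the analysis is bounding the sensitivity of $L(r,\cdot)$. Let $S$ and $S'$ be neighboring databases that differ on the $n$-th row, and fix any $r$. Write $a_i=\bar{\BBB}_r(x_i,S)$ and $a'_i=\bar{\BBB}_r(x_i,S')$. For $i<n$, the ball of radius $r$ around $x_i$ is the same in both datasets and contains one point that may have been swapped, so $|a_i-a'_i|\le 1$; for $i=n$ the centre itself changed and the two capped counts lie in $[0,t]$, so $|a_n-a'_n|\le t$. Taking the maximising index set $I^*$ for $S$, I would bound
\[
t\bigl(L(r,S)-L(r,S')\bigr)\;\le\;\sum_{i\in I^*}(a_i-a'_i)\;\le\;(t-1)\cdot 1 + t\;=\;2t-1,
\]
and symmetrically in the other direction, which yields a sensitivity bound of $(2t-1)/t<2$ for $L(r,\cdot)$. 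Hence $L(0,\cdot)$ has sensitivity at most $2$, so adding $\Lap(4/\epsilon)$ noise gives an $(\epsilon/2,0)$-differentially private release by Theorem~\ref{thm:lap}, and the subsequent threshold comparison and deterministic choice to halt are post-processing.

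For Step~\ref{step:applyconcave}, I need $Q(r,\cdot)$ to be a sensitivity-$1$ quality function (in the database argument, for each fixed $r$) and $Q(\cdot,S)$ to be quasi-concave (in $r$, for each fixed $S$). Sensitivity follows because $t-L(r/2,\cdot)$ and $L(r,\cdot)-t+4\Gamma$ each have sensitivity at most $2$, the minimum of two sensitivity-$s$ functions has sensitivity at most $s$, and the leading factor $1/2$ then brings the sensitivity of $Q(\cdot,S)$ down to at most $1$. Quasi-concavity in $r$ is immediate from the definition of $L$: for every fixed $S$ the function $r\mapsto L(r,S)$ is non-decreasing, so $r\mapsto t-L(r/2,S)$ is non-increasing and $r\mapsto L(r,S)-t+4\Gamma$ is non-decreasing, and the pointwise minimum of a non-increasing and a non-decreasing function over a totally ordered domain is quasi-concave. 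With these two properties in hand, Theorem~\ref{thm:rec} applied with privacy parameters $(\epsilon/2,\delta)$ gives that Step~\ref{step:applyconcave} is $(\epsilon/2,\delta)$-differentially private (independently of whether the quality promise holds, since \texttt{RecConcave} is private unconditionally). The only conceptual subtlety is checking that the branching between halting and invoking \texttt{RecConcave} does not break composition, which it does not: the branching is determined entirely by the (already-released) noisy value $\tilde L(0,S)$. Combining the two sub-routines via Theorem~\ref{thm:composition1} gives overall privacy $(\epsilon/2+\epsilon/2,\;0+\delta)=(\epsilon,\delta)$, as required.
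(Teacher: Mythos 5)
Your proposal is correct and follows essentially the same route as the paper: bound the sensitivity of $L(r,\cdot)$ by $2$ (your index-set bookkeeping giving $(2t-1)/t$ is just a slight tightening of the paper's replacement argument), deduce that the Laplace step is $(\epsilon/2,0)$-private and that $Q(r,\cdot)$ has sensitivity $1$ so \texttt{RecConcave} is $(\epsilon/2,\delta)$-private, and conclude by Theorem~\ref{thm:composition1}. The quasi-concavity check you include is not needed for privacy (the paper defers it to the utility lemma), but it is harmless.
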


\begin{proof}
Algorithm \texttt{GoodRadius} interacts with its input database in step~\ref{step:zerocluster} using the Laplace mechanism and in step~\ref{step:applyconcave} using algorithm \texttt{RecConcave}.
In order to show that those two interactions preserve privacy, we will now argue that $L(r,\cdot)$ is of sensitivity 2 (for every fixed $r$). To see why this is intuitively correct, consider a set of $n$ input points and a ball of radius $r$ around every input point. Adding a new input point can increase by at most 1 the number of points contained within every such ball. In addition, we now have a new ball centered around the new input point, and as we cap counts at $t$, we count at most $t$ points in this ball. Overall, adding the new input point can increase $L(r,\cdot)$ by at most $t\cdot \frac{1}{t}+\frac{t}{t}=2$.

More formally, let $S,S'$ be two neighboring databases and assume that $S'=S\setminus\{y\}\cup\{y'\}$ and that $S=S'\setminus\{y'\}\cup\{y\}$.
We have that
\begin{eqnarray*}
L(r,S) &=& \max\limits_{\substack{\text{distinct}\\x_1,\ldots,x_t\in S}}\frac{1}{t}\sum_{i=1}^t \bar{\BBB}_r(x_i,S)\\
&\leq&\hspace{-20px} \max\limits_{\substack{\text{distinct}\\ x_1,\ldots,x_{t-1}\in S\setminus\{y\}  }}\frac{1}{t}\left(t+\sum_{i=1}^{t-1} \bar{\BBB}_r(x_i,S)\right)\\
&=&\hspace{-20px} \max\limits_{\substack{\text{distinct}\\ x_1,\ldots,x_{t-1}\in S'\setminus\{y'\}  }}\frac{1}{t}\left(t+\sum_{i=1}^{t-1} \bar{\BBB}_r(x_i,S)\right)\\
&\leq&\hspace{-20px} \max\limits_{\substack{\text{distinct}\\ x_1,\ldots,x_{t-1}\in S'\setminus\{y'\}  }}\frac{1}{t}\left(t+\sum_{i=1}^{t-1} [\bar{\BBB}_r(x_i,S')+1]\right)\\
&\leq&\hspace{-10px} \max\limits_{\substack{\text{distinct}\\ x_1,\ldots,x_{t-1}\in S'  }}\frac{1}{t}\left(t+\sum_{i=1}^{t-1} [\bar{\BBB}_r(x_i,S')+1]\right)\\
&\leq& L(r,S')+2.
\end{eqnarray*}

Similarly, $L(S,r)\geq L(S',r)-2$, and $L(r,\cdot)$ is of sensitivity 2. Hence, the use of the laplace mechanism on step~\ref{step:zerocluster} preserves $(\frac{\epsilon}{2},0)$-differential privacy. Moreover, $\frac{1}{2}L(r,\cdot)$ is of sensitivity 1, and, therefore, $Q(r,\cdot)$ is of sensitivity 1 (defined as the minimum of two sensitivity 1 expressions). The application of algorithm \texttt{RecConcave} preserves $(\frac{\epsilon}{2},\delta)$-differential privacy. Overall, algorithm \texttt{GoodRadius} is $(\epsilon,\delta)$-differentially private by Composition Theorem~\ref{thm:composition1}.
\end{proof}

We now turn to proving the correctness of algorithm \texttt{GoodRadius}.

\begin{lemma}\label{lem:GoodRadiusUtility}
Let \texttt{GoodRadius} be executed on a database $S$ containing $n$ points in $X^d$ and on parameters $t,\beta,\epsilon,\delta$, and let $\Gamma$ be as defined in algorithm \texttt{GoodRadius}.
With probability at least $(1-\beta)$, the output $z$ satisfies: (1)~There exists a ball in $X^d$ of radius $z$ containing at least $(t-4\Gamma-\frac{4}{\epsilon}\ln(1/\beta))$ input points from $S$.
(2)~Let $r_{\mbox{opt}}$ denote the radius of the smallest ball in $X^d$ containing at least $t$ input points from $S$. Then $z\leq 4 r_{\mbox{opt}}$.
\end{lemma}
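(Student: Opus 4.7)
The plan is to condition on the high-probability event that the Laplace noise drawn in step~\ref{step:zerocluster} satisfies $|\Lap(4/\epsilon)| \leq \frac{4}{\epsilon}\ln(2/\beta)$ (failure probability at most $\beta/2$), and then split the analysis into two cases based on whether the algorithm returns $z=0$ in step~\ref{step:zerocluster} or invokes \texttt{RecConcave} in step~\ref{step:applyconcave}. In the early-halt case, combining the halting inequality with the noise bound gives $L(0,S) > t - 2\Gamma - \frac{8}{\epsilon}\ln(2/\beta) \geq t - 4\Gamma - \frac{4}{\epsilon}\ln(1/\beta)$ (since $\Gamma$ dominates $\frac{1}{\epsilon}\log(1/\beta)$ by construction). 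Because $L(0,S)$ is the average of the top $t$ values of the capped counts $\bar{\BBB}_0(x_i,S) \leq t$, the maximum such count is at least $L(0,S)$; hence some $x_{i^*} \in X^d$ has $\BBB_0(x_{i^*},S) \geq L(0,S)$, and its $0$-radius ball certifies part~(1), while part~(2) is trivial.

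For the remaining case, I will first check that $Q(\cdot, S)$ is quasi-concave and sensitivity~$1$. Quasi-concavity holds because $L(\cdot, S)$ is monotone non-decreasing, so $t - L(r/2, S)$ is non-increasing in $r$ and $L(r, S) - t + 4\Gamma$ is non-decreasing, making their minimum quasi-concave; sensitivity~$1$ was already used in Lemma~\ref{lem:GoodRadiusPrivacy}. The \emph{main obstacle} is verifying the quality promise: producing $r^* \in F = \{0, \tfrac{1}{2|X|}, \tfrac{2}{2|X|}, \ldots, \lceil \sqrt{d}\rceil\}$ with $Q(r^*, S) \geq \Gamma$. Conditioned on the noise event and on the algorithm not halting at step~\ref{step:zerocluster}, one has $L(0, S) \leq t - 2\Gamma$, so $r_0 := \inf\{r \geq 0 : L(r, S) \geq t - 2\Gamma\}$ is strictly positive; by right-continuity of the step function $L(\cdot, S)$ one has $L(r_0, S) \geq t - 2\Gamma$ while $L(r, S) < t - 2\Gamma$ for $r < r_0$. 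The key structural observation is that $L(\cdot, S)$ jumps only at inter-point distances $\|x_i - x_j\|_2$ between points of the grid $X^d$, and any such positive distance is at least $1/(|X|-1)$, so $r_0 \geq 1/(|X|-1) > 1/(2|X|)$. Taking $r^* := \min\{r \in F : r \geq r_0\}$, which is well-defined because $r_0 \leq \sqrt{d} \leq \lceil \sqrt{d}\rceil \in F$ (using $L(\sqrt{d}, S) = t$), I get $r^* - r_0 \leq 1/(2|X|) < 1/(|X|-1) \leq r_0$, hence $r^* \in [r_0, 2 r_0)$. For this $r^*$, $L(r^*, S) \geq t - 2\Gamma$ and $L(r^*/2, S) < t - 2\Gamma$, giving $Q(r^*, S) \geq \Gamma$.

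Finally, I will invoke Theorem~\ref{thm:rec} with $\alpha = 1/2$ and failure probability $\beta/2$: the returned $z$ then satisfies $Q(z, S) \geq \Gamma/2$ with probability at least $1 - \beta/2$, which unpacks to $L(z, S) \geq t - 3\Gamma$ and $L(z/2, S) \leq t - \Gamma$. The first inequality, via max~$\geq$~average of the top-$t$ counts, yields an input point $x_{i^*}$ with $\BBB_z(x_{i^*}, S) \geq t - 3\Gamma \geq t - 4\Gamma - \frac{4}{\epsilon}\ln(1/\beta)$, establishing part~(1). For part~(2), I will argue by contradiction: if $r_{opt} < z/4$, then some $P \subseteq S$ of $t$ points lies inside a ball of radius $r_{opt}$, and for every $p \in P$ the ball of radius $z/2 > 2 r_{opt}$ around $p$ contains all of $P$, yielding $\bar{\BBB}_{z/2}(p, S) = t$ for $t$ distinct centers and hence $L(z/2, S) \geq t$, contradicting $L(z/2, S) \leq t - \Gamma$. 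A union bound over the Laplace and \texttt{RecConcave} failure events (each of probability $\beta/2$) completes the proof.
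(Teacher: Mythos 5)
Your proposal is correct and takes essentially the same route as the paper's proof: condition on the Laplace noise being at most $\frac{4}{\epsilon}\ln(2/\beta)$, handle the early-halt case directly, verify quasi-concavity from monotonicity of $L(\cdot,S)$, establish the quality promise $\Gamma$ at a grid radius using that the minimum inter-point distance $1/(|X|-1)$ exceeds the output-grid spacing $1/(2|X|)$ (so that $r^*/2$ lies below the first relevant jump of $L$), and then unpack the \texttt{RecConcave} guarantee $Q(z,S)\geq\Gamma/2$ into part (1) via max-$\geq$-average and part (2) via the same $4r_{opt}$ contradiction. The only nitpick is the boundary case $L(0,S)=t-2\Gamma$ (your non-halting bound is non-strict), where $r_0$ need not be strictly positive; but then $r^*=0$ already satisfies $Q(0,S)\geq\Gamma$, so the promise—and the rest of your argument—goes through unchanged.
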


\begin{proof}
Note that if $L(0,S)\geq t-2\Gamma$, then \texttt{GoodRadius} fails to output $z=0$ in step~\ref{step:zerocluster} with probability at most $\beta/2$. We continue the proof assuming that $L(0,S)< t-2\Gamma$.

We now argue that algorithm \texttt{RecConcave} returns (w.h.p.) a value $z$ s.t. $Q(z,S)$ is significant. We need to show that $Q(\cdot,S)$ is quasi-concave, and that there exists an $r$ s.t.\ $Q(r,S)\geq\Gamma$. To see that $Q(\cdot,S)$ is quasi-concave, note that $L(\cdot,S)$ is non-decreasing (as a function of $r$), and that, hence, for every $r_1<r_2<r_3$
\begin{eqnarray*}
Q(r_2,S)&=&\min\left\{ \frac{t-L(r_2/2,S)}{2}, \frac{L(r_2,S)-t+4\Gamma}{2} \right\}\\
&\geq&\min\left\{ \frac{t-L(r_3/2,S)}{2}, \frac{L(r_1,S)-t+4\Gamma}{2}\right\}\\
&\geq& \min\left\{ Q(r_3,S), Q(r_1,S) \right\}.
\end{eqnarray*}

To see that there exists an $r$ s.t. $Q(r,S)\geq\Gamma$,
recall that $L(0,S)=L\left(\frac{1}{2|X|},S\right)<t-2\Gamma$, and note that $L(\sqrt{d},S)=n\geq t$. Now consider the smallest
$r\in\left\{0,\frac{1}{2|X|},\frac{2}{2|X|},\frac{3}{2|X|},\ldots,\left\lceil \sqrt{d} \right\rceil\right\}$
s.t.\ $L(r,S)\geq t-2\Gamma$. For that $r$ it holds that $$Q(r,S)=\frac{1}{2}\min\{ t-L(r/2,S), \; L(r,S)-t+4\Gamma \}\geq \frac{1}{2}\min\{ 2\Gamma, \; 2\Gamma \}=\Gamma.$$

By the properties of algorithm \texttt{RecConcave}, with probability at least $(1-\beta/2)$ the output $z$ is s.t.\ $Q(z,S)\geq\frac{\Gamma}{2}$. Hence, by the definition of $Q$ we have that 
$$\text{(a)}~L(z,S)\geq t-4\Gamma\quad\text{and}\quad\text{(b)}~L(z/2,S)\leq t-\frac{\Gamma}{2}.$$
Recall that $L(z,S)$ averages $\bar B_z(x,S)$ over $t$ points $x\in S$. Hence, by (a), there exists a ball of radius $z$ in $X^d$ that contains at least $t-4\Gamma$ input points from $S$.

Let $P\subseteq S$ be a set of $t$ input points, and assume towards contradiction that there is a ball of radius $z/4$ in $X^d$ that contains all of the points in $P$. Now note that a ball of radius $z/2$ around every point in $P$ contains all of the points in $P$. Hence, $L(z/2,S)\geq t$. This contradicts (b).
We conclude that Algorithm \texttt{GoodRadius} returns a good radius with probability at least $1-\beta$.
\end{proof}

\subsection{Additional preliminaries}

Before formally presenting algorithm \texttt{GoodCenter}, we introduce several additional tools.

\subsubsection{Composition theorems}
Recall that the privacy guaranties in composition theorem~\ref{thm:composition1}
deteriorates linearly with the number of interactions. By bounding the \emph{expected} privacy loss in each interaction (as opposed to worst-case), Dwork et al.~\cite{DRV10} showed the following stronger composition theorem, where privacy deteriorates (roughly) as $\sqrt{k}\epsilon+k\epsilon^2$ (rather than $k\epsilon$).

\begin{theorem}[\cite{DRV10}]\label{thm:composition2}
Let $\eps,\delta,\delta'>0$. A mechanism that permits $k$ adaptive interactions with $(\epsilon,\delta)$-differentially private mechanisms (and does not access the database otherwise) is $(\epsilon', k\delta+\delta')$-differentially private, for $\epsilon'=2k\epsilon^2+\epsilon\sqrt{2k\ln(1/\delta')}$.
\end{theorem}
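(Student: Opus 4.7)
The plan is to control the cumulative privacy loss across the $k$ adaptive rounds via a martingale concentration argument, after first reducing to the $\delta=0$ case. Fix neighboring $S,S'$, and for any fixed sequence of outcomes $y = (y_1,\ldots,y_k)$ define the privacy loss
\[
Z(y) \;=\; \ln\frac{\Pr[M(S) = y]}{\Pr[M(S')=y]} \;=\; \sum_{i=1}^k \ln\frac{\Pr[M_i(y_{<i}) = y_i \mid S]}{\Pr[M_i(y_{<i}) = y_i \mid S']},
\]
where $M_i(y_{<i})$ is the $i$-th adaptively chosen $(\epsilon,\delta)$-DP mechanism (which depends only on the previous outputs). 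Writing $Z = \sum_i Z_i$ with $Z_i$ the $i$-th summand, the goal reduces to showing
\[
\Pr_{y \sim M(S)}\bigl[\, Z(y) > \epsilon' \,\bigr] \;\le\; \delta' + k\delta,
\]
since this is equivalent (via the standard ``max-divergence vs.\ $(\epsilon,\delta)$-DP'' lemma of Dwork--Rothblum) to $(\epsilon', k\delta + \delta')$-differential privacy of $M$.

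The first step handles the $\delta$ slack. For each round $i$, the $(\epsilon,\delta)$-DP guarantee of $M_i$ implies that conditioned on the history $y_{<i}$, there is an event $E_i$ of probability at most $\delta$ (under either $S$ or $S'$) outside of which $M_i$ behaves like an $(\epsilon,0)$-DP mechanism. By a union bound over the $k$ rounds and a standard ``good coupling'' argument (replacing $M_i$ on $E_i$ by an $(\epsilon,0)$-DP surrogate), we can assume at the cost of an additive $k\delta$ in the failure probability that each $M_i$ is $(\epsilon,0)$-DP. So from here on assume $\delta=0$ and aim for $\Pr[Z > \epsilon'] \le \delta'$.

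Next I would bound the summands. By $(\epsilon,0)$-DP of each $M_i$, every $Z_i$ is supported in $[-\epsilon,\epsilon]$ almost surely. The key refinement is that although $|Z_i| \le \epsilon$ in the worst case, its conditional expectation under $M(S)$ is much smaller: a short calculation using $\Pr_{S'}/\Pr_S \in [e^{-\epsilon},e^\epsilon]$ and $\ln(1+x) \le x$ gives
\[
\mathbb{E}\bigl[\,Z_i \,\bigm|\, y_{<i}\,\bigr] \;\le\; \epsilon(e^\epsilon - 1) \;\le\; 2\epsilon^2
\]
for $\epsilon \le 1$ (this is the standard KL-divergence bound for $\epsilon$-DP mechanisms). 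Thus $Z_i - \mathbb{E}[Z_i \mid y_{<i}]$ forms a bounded-difference martingale difference sequence on the filtration generated by the $y_j$'s.

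Finally, I would apply Azuma--Hoeffding to this martingale: the differences are bounded in an interval of length at most $2\epsilon$, so
\[
\Pr\!\left[\sum_{i=1}^k \bigl(Z_i - \mathbb{E}[Z_i \mid y_{<i}]\bigr) \;>\; \epsilon\sqrt{2k\ln(1/\delta')}\right] \;\le\; \delta'.
\]
Combining with the conditional-mean bound $\sum_i \mathbb{E}[Z_i\mid y_{<i}] \le 2k\epsilon^2$ gives $\Pr[Z > 2k\epsilon^2 + \epsilon\sqrt{2k\ln(1/\delta')}] \le \delta'$, which is exactly the claimed $\epsilon'$. Adding back the $k\delta$ from the reduction step yields the theorem.

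The main obstacle is the expected privacy-loss bound $\mathbb{E}[Z_i \mid y_{<i}] \le 2\epsilon^2$: one must be careful that it holds under the distribution $M_i(S)$ on outputs (not $M_i(S')$), and the cleanest route is to write the expectation as a KL divergence and then bound $\mathrm{KL}(P\|Q) \le \epsilon(e^\epsilon-1)$ whenever $P,Q$ satisfy pointwise $e^{-\epsilon}\le P/Q \le e^\epsilon$. Everything else is routine: the $\delta$-reduction is a union bound, the martingale structure is immediate from adaptivity, and Azuma is applied off-the-shelf.
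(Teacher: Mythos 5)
This theorem is imported from Dwork, Rothblum, and Vadhan~\cite{DRV10} and the paper gives no proof of its own; your proposal correctly reconstructs the standard argument from that work (privacy-loss random variable, per-round handling of the $\delta$ slack via a good event and union bound, the bound $\mathrm{KL}(P\|Q)\le \epsilon(e^{\epsilon}-1)$ on the conditional expected loss, and Azuma--Hoeffding on the centered sum), and the constants work out as you compute. The only caveat worth noting is that the simplification $\epsilon(e^{\epsilon}-1)\le 2\epsilon^{2}$ requires $\epsilon\lesssim 1$, a restriction implicit in the form of $\epsilon'$ stated here.
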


\subsubsection{The sparse vector technique~\cite{DNRRV09}}\label{sec:sparseVector}

Consider a large number of low sensitivity functions $f_1,$ $f_2,\ldots, f_k$, which are given (one by one) to a data curator (holding a database $S$). Given a dataset $S$, Algorithm \texttt{AboveThreshold} by Dwork et al.~\cite{DNRRV09} identifies the queries $f_i$ whose value $f_i(S)$ is greater than some threshold $t$:

\begin{theorem}[Algorithm \texttt{AboveThreshold}]\label{thm:AboveThreshold}
There exists an $(\epsilon,0)$-differentially private algorithm $\cal A$ such that for $k$ rounds, after receiving a sensitivity-1 query $f_i:U^*\rightarrow\R$, algorithm $\cal A$ either outputs $\top$ and halts, or outputs $\bot$ and waits for the next round.
If $\cal A$ was executed with a database $S\in U^*$ and a threshold parameter $t$, then the following holds with probability $(1-\beta)$:
(i) If a query $f_i$ was answered by $\top$ then $f_i(S)\geq t-\frac{8}{\epsilon}\log(2k/\beta)$;
(ii) If a query $f_i$ was answered by $\bot$ then $f_i(S)\leq t+\frac{8}{\epsilon}\log(2k/\beta)$.
\end{theorem}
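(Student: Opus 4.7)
I would take $\mathcal{A}$ to be the classical \emph{sparse vector} construction: sample a noisy threshold $\hat t = t + \nu$ once with $\nu \sim \Lap(2/\epsilon)$; then on each incoming query $f_i$, draw a fresh $\nu_i \sim \Lap(4/\epsilon)$, output $\top$ and halt if $f_i(S) + \nu_i \geq \hat t$, and output $\bot$ and continue otherwise. The two tasks are then the privacy analysis and the accuracy analysis.

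\emph{Privacy.} Fix neighboring $S, S'$ and a halting transcript $\tau = (\bot^{j-1}, \top)$. I would write $\Pr[\mathcal{A}(S) = \tau]$ as an integral against the density of $\nu$, whose integrand factors into $\prod_{i<j} F(t+\nu - f_i(S)) \cdot \bar F(t+\nu - f_j(S))$, where $F$ and $\bar F = 1-F$ are the CDF and survival function of $\Lap(4/\epsilon)$. The crux is a coupling that shifts the threshold by changing variables $\nu \mapsto \nu' - 1$. The density shift costs a factor $e^{\epsilon/2}$; monotonicity of $F$ together with $|f_i(S)-f_i(S')|\leq 1$ upgrades each $\bot$-factor to $F(t+\nu' - f_i(S'))$ \emph{for free}; and the Lipschitz property $|\log \bar F(a) - \log \bar F(b)| \leq |a-b|\cdot \epsilon/4$ of the Laplace survival function converts the $\top$-factor into $\bar F(t+\nu' - f_j(S'))$ at an extra multiplicative cost of at most $e^{\epsilon/2}$ (a shift by at most two units of its argument). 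Multiplying bounds $\Pr[\mathcal{A}(S) = \tau] \leq e^\epsilon \Pr[\mathcal{A}(S') = \tau]$. All-$\bot$ transcripts are handled identically with the $\top$-factor simply dropped.

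\emph{Accuracy.} A one-sided Laplace tail gives $|\nu| \leq \frac{2}{\epsilon}\ln(2/\beta)$ with probability at least $1-\beta/2$, and a union bound over the (at most) $k$ per-query noises gives $\max_{i\leq k} |\nu_i| \leq \frac{4}{\epsilon}\ln(4k/\beta)$ with probability at least $1-\beta/2$. On this joint event a $\top$ at round $i$ forces $f_i(S) \geq \hat t - \nu_i \geq t - |\nu| - |\nu_i| \geq t - \frac{8}{\epsilon}\log(2k/\beta)$, and a $\bot$ forces the symmetric upper bound, establishing both halves of the claim.

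The main obstacle is the privacy proof, and specifically the fact that $\hat t$ is reused across all $k$ rounds yet the privacy loss must not scale with $k$. The resolution used implicitly above is that the privacy cost of every $\bot$ round is absorbed \emph{without charge} by the single threshold shift (thanks to monotonicity of $F$), so only the unique $\top$ round contributes additional cost. This is the reason the threshold noise and the per-query noise are set asymmetrically and the reason the mechanism must halt on first acceptance.
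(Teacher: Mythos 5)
Your proposal is correct: it is the canonical sparse-vector proof of \texttt{AboveThreshold} (threshold noise $\Lap(2/\epsilon)$, query noise $\Lap(4/\epsilon)$, a change of variables shifting the threshold noise so that all $\bot$-rounds are absorbed by monotonicity and only the single $\top$-round pays via the log-Lipschitz property of the Laplace tail), and the accuracy bookkeeping indeed lands within the stated $\frac{8}{\epsilon}\log(2k/\beta)$ bound. There is nothing in the paper to compare against: the paper imports this theorem from prior work (the sparse vector technique of Dwork et al.) without proof, and your argument matches the standard proof of that cited result.
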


\subsubsection{Geometric tools}
We will use the following technical lemma to argue that if a set of points $P$ is contained within a ball of radius $r$ in $\R^d$, then by randomly rotating the Euclidean space we get that (w.h.p.) $P$ is contained within an axis-aligned rectangle with side-length $\approx r/\sqrt{d}$. 

\begin{lemma}[e.g.,~\cite{VaziraniRao}]\label{lem:RandomRotation}
Let $P \in (\R^d)^m$ be a set of $m$ points in the $d$ dimensional Euclidean space, and let $Z=(z_1,\ldots,z_d)$ be a random orthonormal basis for $\R^d$. Then,
$$\Pr_Z \left[\forall x,y \in P:\; \forall 1\leq i\leq d: \; \left| \langle x-y,  z_i\rangle\right| \leq 2\sqrt{\ln(dm/\beta)/d}\cdot \|x-y\|_2\right]\geq1-\beta.$$
\end{lemma}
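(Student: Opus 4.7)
The plan is to reduce, via rotational invariance, to a single-axis single-pair concentration estimate, and then take a union bound. First, I fix $x \neq y$ in $P$ and $i \in \{1,\ldots,d\}$, and set $u = (x-y)/\|x-y\|_2$, a deterministic unit vector. Because the law of a uniformly random orthonormal basis is invariant under rotations of $\R^d$, the marginal law of each $z_i$ is uniform on the sphere $S^{d-1}$, and so $\langle u, z_i\rangle$ has the same distribution as the first coordinate of a uniformly random point on $S^{d-1}$.

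Second, I apply a standard Gaussian-type concentration bound to this scalar. Writing a uniform point on $S^{d-1}$ as $g/\|g\|$ with $g \sim N(0, I_d)$, and noting that $\xi := \langle u, g \rangle \sim N(0,1)$, we see that $\langle u, z_i \rangle$ is distributed as $\xi/\|g\|$. The Gaussian tail bound gives $\Pr[|\xi| \geq a] \leq 2 e^{-a^2/2}$, and standard chi-squared concentration gives $\Pr[\|g\|^2 \leq d/2] \leq e^{-\Omega(d)}$. On the intersection of these two good events we have $|\langle u, z_i\rangle| \leq a\sqrt{2/d}$. Choosing $a$ of order $\sqrt{\ln(dm/\beta)}$ then yields
\[
\Pr\!\left[\,|\langle u, z_i\rangle| > 2\sqrt{\ln(dm/\beta)/d}\,\right] \;\leq\; \frac{\beta}{d\, m^2},
\]
after balancing constants.

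Third, I take a union bound over the at most $\binom{m}{2} \leq m^2$ unordered pairs $x,y \in P$ and the $d$ axis indices $i$, bounding the total failure probability by $d \cdot m^2 \cdot \beta/(d m^2) = \beta$, which is exactly the lemma's conclusion.

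The only mildly delicate point is the constant bookkeeping in the second step: one must verify that the factor $2$ in the target bound $2\sqrt{\ln(dm/\beta)/d}$ is large enough to absorb both (i) the $\sqrt{2}$ loss coming from the chi-squared concentration (the ratio between $\sqrt{d}$ and a high-probability lower bound on $\|g\|$), and (ii) the extra $m$ that appears when passing from $m$ to $m^2$ inside the log under the pair union bound. Since $\ln(d m^2/\beta) \leq 2 \ln(dm/\beta)$ whenever $dm/\beta \geq 1$, this is easily absorbed and the standard Gaussian/sphere concentration machinery closes the argument.
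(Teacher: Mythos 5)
Your high-level route is the same as the paper's -- rotational invariance reduces the problem to the first coordinate of a uniform unit vector, followed by a union bound over the at most $dm^2$ pair/axis events -- but the per-event tail bound you actually establish does not deliver what you claim, and this is where the argument breaks. With your good events $|\xi|<a$ and $\|g\|^2>d/2$, the quantity you control is $|\langle u,z_i\rangle|<a\sqrt{2/d}$. To make the Gaussian tail $2e^{-a^2/2}$ as small as $\beta/(dm^2)$ you must take $a\ge\sqrt{2\ln(2dm^2/\beta)}$, and then $a\sqrt{2/d}=2\sqrt{\ln(2dm^2/\beta)/d}$, which exceeds the lemma's threshold $2\sqrt{\ln(dm/\beta)/d}$; conversely, if you insist on the lemma's threshold you can only afford $a=\sqrt{2\ln(dm/\beta)}$, the per-event Gaussian tail is $2\beta/(dm)$, and the union bound yields roughly $m\beta$, not $\beta$. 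The bookkeeping in your final paragraph is off: the constant $2$ in the lemma is only a factor $\sqrt2$ of slack over the bare Gaussian requirement $\sqrt{2\ln(\cdot)}$, and the lower bound $\|g\|\ge\sqrt{d/2}$ already spends exactly that $\sqrt2$, leaving nothing to absorb the $m\mapsto m^2$ (plus two-sidedness) upgrade inside the logarithm. Worse, and independently of constants, the additive term $\Pr[\|g\|^2\le d/2]\le e^{-\Omega(d)}$ does not scale with $\beta$ at all: for fixed $d$ and sufficiently small $\beta$ (indeed whenever $\beta\lesssim d\,e^{-cd}$, e.g.\ moderate $d$ and tiny $\beta$), the claimed per-event bound $\beta/(dm^2)$ is unattainable by ``balancing constants,'' whereas the lemma is asserted for all $d,m,\beta$.

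The paper avoids both problems by bounding the coordinate of a uniform unit vector directly via the spherical cap area: $\Pr[|y_1|>t/\sqrt d]\le(1-t^2/d)^{(d-1)/2}\le 2e^{-t^2/2}$, a clean subgaussian tail with no additive dimension-dependent error term. Taking $t=2\sqrt{\ln(dm/\beta)}$ makes the per-event probability $2(\beta/(dm))^2$, and the union over at most $dm^2$ events gives $2\beta^2/d\le\beta$ -- the quadratic saving from using the constant $2$ rather than $\sqrt2$ is precisely what pays for the union bound. To repair your proof, either replace the $g/\|g\|$ normalization step by this exact cap (equivalently, Beta-distribution) bound, or settle for a weaker statement: a larger constant (about $2\sqrt2$) together with an additional hypothesis of the form $d\ge C\ln(dm^2/\beta)$ so that the chi-squared failure term is dominated by $\beta$.
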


We will use the Johnson Lindenstrauss transform to embed a set of points $S\in \R^d$ in $\R^k, k\ll d$ while preserving point distances.

\begin{lemma}[JL transform~\cite{JL84}]\label{thm:JL}
Let $S\subseteq\R^d$ be a set of $n$ points, and let $\eta\in(0,1/2)$. Let $A$ be a $k\times d$ matrix whose entries are iid samples from $\mathcal{N}(0,1)$, and define $f:\R^d\rightarrow\R^k$ as $f(x)=\frac{1}{\sqrt{k}}Ax$. Then,
$$
\Pr_A \left[
\begin{array}{l}
\forall x,y \in S \text{ it holds that: }\\
(1-\eta)\|x-y\|_2^2 \leq \|f(x)-f(y)\|_2^2 \leq (1+\eta)\|x-y\|_2^2
\end{array}
\right]\geq 1 - 2n^2\exp\left(-\frac{\eta^2 k}{8}\right).
$$
\end{lemma}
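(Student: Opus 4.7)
The plan is to reduce the claim to chi-squared concentration for a single pair of points, and then union-bound over all pairs in $S$. Fix $x, y \in S$ and set $v = x - y \in \R^d$. By linearity of $f$, we have $f(x) - f(y) = \tfrac{1}{\sqrt{k}} A v$, so the goal is to show that $\|Av\|_2^2$ concentrates around $k \cdot \|v\|_2^2$. The $i$-th coordinate of $Av$ is $\langle A_i, v\rangle$, where $A_i$ denotes the $i$-th row of $A$, and since $A_i$ is a vector of $d$ independent $\NNN(0,1)$ entries, the inner product $\langle A_i, v\rangle$ is itself distributed as $\NNN(0, \|v\|_2^2)$ (any linear combination of independent standard Gaussians is Gaussian with variance equal to the sum of squared coefficients). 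Different rows of $A$ are independent, so the $k$ coordinates of $Av$ are iid $\NNN(0, \|v\|_2^2)$; equivalently, $\|Av\|_2^2 / \|v\|_2^2 = \sum_{i=1}^k Y_i^2$ where $Y_1, \ldots, Y_k$ are iid standard Gaussians, i.e.\ a $\chi^2_k$ random variable with mean $k$.

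Next, I would invoke the standard chi-squared tail bound: for every $\eta \in (0, 1/2)$,
$$\Pr\!\left[\left| \frac{1}{k} \sum_{i=1}^k Y_i^2 - 1 \right| > \eta \right] \leq 2 \exp\!\left( -\frac{\eta^2 k}{8} \right).$$
This is a textbook Chernoff calculation: one bounds the moment generating function $\E[e^{t(Y_i^2 - 1)}] = e^{-t}(1-2t)^{-1/2}$ for $|t| < 1/2$, raises to the $k$-th power by independence, applies Markov's inequality, and optimizes the choice of $t$ separately for the upper and lower tails. Multiplying through by $\|v\|_2^2$, this gives, for the fixed pair $(x,y)$,
$$\Pr\!\left[\bigl|\|f(x)-f(y)\|_2^2 - \|x-y\|_2^2\bigr| > \eta \|x-y\|_2^2\right] \leq 2 \exp\!\left(-\frac{\eta^2 k}{8}\right).$$

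Finally, I would take a union bound over all (unordered) pairs $\{x, y\} \subseteq S$; there are at most $\binom{n}{2} < n^2$ such pairs. The probability that some pair has distortion exceeding the $(1 \pm \eta)$ factor is therefore at most $2 n^2 \exp(-\eta^2 k / 8)$, which matches the statement. The only genuinely technical ingredient is the chi-squared tail bound, which is completely standard; everything else follows from linearity of $f$ and the rotational invariance of the Gaussian, so I do not anticipate any real obstacle in carrying this out.
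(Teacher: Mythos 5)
Your argument is correct: the reduction to a $\chi^2_k$ variable via rotational invariance of the Gaussian rows, the MGF-based Chernoff bound giving exponent $-\frac{k}{2}\bigl(\eta-\ln(1+\eta)\bigr)$ and $-\frac{k}{2}\bigl(-\eta-\ln(1-\eta)\bigr$) for the two tails (both at most $-\eta^{2}k/8$ when $\eta\in(0,1/2)$), and the union bound over fewer than $n^{2}$ pairs together give exactly the stated guarantee. The paper does not prove this lemma at all -- it is imported by citation to~\cite{JL84} -- so there is no in-paper argument to compare against; your proof is the standard one and fills that gap correctly.
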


\subsection{Algorithm GoodCenter}\label{sec:GoodCenterAnalysis}

Given the outcome of Algorithm \texttt{GoodRadius}, we now show that algorithm \texttt{GoodCenter} privately locates a small ball containing $\gtrsim t$ points. We start with its privacy analysis.

\begin{lemma}\label{lem:GoodCenterPrivacy}
Algorithm \texttt{GoodCenter} preserves $(\epsilon,\delta)$-differential privacy. 
\end{lemma}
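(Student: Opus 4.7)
The plan is to decompose \texttt{GoodCenter} into its data-dependent subroutines, verify that each preserves differential privacy via an off-the-shelf tool, and then invoke adaptive composition. All random choices made independently of the input --- the JL matrix $A$, any random rotation of $\R^d$, and the random shifts used to partition $\R^k$ and $\R^d$ into axis-aligned boxes --- belong to the algorithm's internal randomness and do not affect the privacy analysis.

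The data-dependent pieces are: \emph{(i)} an \texttt{AboveThreshold} pass (Theorem~\ref{thm:AboveThreshold}) across the $O(\log(1/\beta))$ independent trials of random shifts in $\R^k$ that selects a trial in which some cell of the induced partition contains many JL-projected input points; each query is a count of sensitivity $1$, so this pass is $(\epsilon_1, 0)$-DP. \emph{(ii)} A single stability-based point release (Theorem~\ref{thm:sanPoints}) identifying a heavy box $B \subseteq \R^k$ in the selected trial, which is $(\epsilon_2, \delta_2)$-DP. \emph{(iii)} For each of the $d$ coordinates, a stability-based point release that returns a heavy length-$r$ interval $I_j$ on the $j$-th axis, restricted to those inputs whose JL-projection fell in $B$; each call is $(\epsilon_3, \delta_3)$-DP. \emph{(iv)} A Gaussian mechanism (Theorem~\ref{thm:gauss}) releasing the average of the truncations of the inputs in $S' = \{x \in S : f(x) \in B\}$ into the axis-aligned box $B' = \hat{I}_1 \times \cdots \times \hat{I}_d$, where each $\hat{I}_j$ is $I_j$ extended by $r$ on each side (so $\mathrm{diam}(B') = O(\sqrt{d}\,r)$).

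The delicate point is \emph{(iv)}: it is invoked as the final step of an adaptive composition whose state contains the DP outputs $B, B'$ of \emph{(ii)} and \emph{(iii)}. For \emph{every} fixed choice of $B, B'$, the map $S \mapsto \sum_{x \in S} \mathbf{1}[f(x) \in B] \cdot (\pi_{B'}(x) - c_{B'})$, where $\pi_{B'}$ is coordinate-wise truncation into $B'$ and $c_{B'}$ is the center of $B'$, has \emph{deterministic} $L_2$-sensitivity at most $\mathrm{diam}(B')/2 = O(\sqrt{d}\,r)$: replacing any single input changes at most one summand, and all truncated-minus-center values lie within $\mathrm{diam}(B')/2$ of the origin. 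Hence the Gaussian mechanism with $\sigma = \Theta\bigl(\sqrt{d}\,r \sqrt{\log(1/\delta_4)} / \epsilon_4 \bigr)$ is $(\epsilon_4, \delta_4)$-DP on this recentered sum; adding $c_{B'} \cdot N$ for a DP-released count $N \approx |S'|$ and normalizing gives the reported center as postprocessing.

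Finally, we combine \emph{(i)}--\emph{(iv)} by composition. Stage \emph{(iii)} runs $d$ times on the data, so its $d$ calls are composed via advanced composition (Theorem~\ref{thm:composition2}), yielding cost $(\epsilon_3', d\delta_3 + \delta')$ with $\epsilon_3' = O\bigl(\sqrt{d\log(1/\delta')}\,\epsilon_3 + d\epsilon_3^2\bigr)$; the four stages are then composed via basic composition (Theorem~\ref{thm:composition1}). Setting $\epsilon_1, \epsilon_2, \epsilon_4 = \Theta(\epsilon)$, $\epsilon_3 = \Theta(\epsilon/\sqrt{d\log(1/\delta)})$, and splitting the $\delta$ budget across \emph{(ii)}--\emph{(iv)} yields the claimed $(\epsilon, \delta)$-DP. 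The main obstacle to handle correctly is \emph{(iv)}: the truncation bound on sensitivity is deterministic only once $B'$ is treated as part of the state, so the argument must proceed by adaptive composition rather than by a ``good event'' union bound (which appears only in the utility analysis).
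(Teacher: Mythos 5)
Your proof is correct and takes essentially the same route as the paper's: decompose into the \texttt{AboveThreshold} pass, one stability-based box selection, $d$ stability-based interval selections combined via advanced composition (Theorem~\ref{thm:composition2}), and a final Gaussian step whose sensitivity bound is made \emph{deterministic} by restricting points to a region fixed by earlier DP outputs, with everything combined by basic composition --- which is exactly the paper's argument, including its key point (the step $D'=D\cap C$) that the diameter bound fed to the Gaussian mechanism must hold always rather than merely with high probability. Your only deviations are cosmetic: you truncate into the box instead of intersecting with the ball $C$, and you release a recentered noisy sum plus a DP count and divide as postprocessing rather than using the paper's \texttt{NoisyAVG} (which internally does the analogous thing with a Laplace-noised count), with only constant-factor differences in the sensitivity accounting.
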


\begin{proof}
Algorithm \texttt{GoodCenter} interacts with its input database on steps~\ref{step:abovethreshold},~\ref{step:abovethreshold_1},~\ref{step:choosing},~\ref{step:choosing_1},~\ref{step:output_z}.
Steps~\ref{step:abovethreshold},~\ref{step:abovethreshold_1} initialize and use Algorithm \texttt{AboveThreshold}, which is $(\frac{\epsilon}{4},0)$-differentially private.
Step~\ref{step:choosing} invokes the algorithm from Theorem~\ref{thm:sanPoints} (to choose a ``heavy'' box $B$), which is $(\frac{\epsilon}{4},\frac{\delta}{4})$-private. 
Step~\ref{step:choosing_1} makes $d$ applications of the algorithm from Theorem~\ref{thm:sanPoints}. By theorem~\ref{thm:composition2} (composition), this preserves $(\frac{\epsilon}{4},\frac{\delta}{4})$-differential privacy.
Step~\ref{step:output_z} invokes the Gaussian mechanism, which is $(\frac{\epsilon}{4},\frac{\delta}{4})$-private. 
Overall, \texttt{GoodCenter} is $(\epsilon,\delta)$-differentially private by composition.
\end{proof}

We now proceed with the utility analysis of algorithm \texttt{GoodCenter}.

\begin{lemma}\label{lem:GoodCenterUtility}
Let \texttt{GoodCenter} be executed on a database $S$ containing $n$ points in $\R^d$ with $r,t,\beta,\epsilon,\delta$ s.t.
$$t\geq O\left(\frac{\sqrt{d}}{\epsilon}\log\left(\frac{nd}{\beta\delta}\right)\sqrt{\log(\frac{1}{\delta})}\right).$$
If there exists a ball of radius $r$ in $\R^d$ containing at least $t$ points from $S$,
then with probability at least $1-\beta$, the output $\hat{y}$ in Step~\ref{step:output_z} is s.t.\ at least $t-O\left(\frac{1}{\epsilon}\log(\frac{n}{\beta})\right)$ of the input points are contained in a ball of radius $O\left( r \sqrt{\log(\frac{n}{\beta})}\right)$ around $\hat{y}$.
\end{lemma}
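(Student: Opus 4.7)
The plan is to verify the chain of high-probability events outlined in Section~\ref{sec:GoodCenterAnalysis}. Fix any subset $P\subseteq S$ of size $t$ lying in a ball of radius $r$ in $\R^d$. First, the Johnson--Lindenstrauss transform $f:\R^d\to\R^k$ with $k=O(\log(n/\beta))$, by \thmref{JL}, preserves all pairwise distances in $S$ up to a constant factor except with probability $\beta/5$, so $f(P)$ lies in a ball of radius $O(r)$ in $\R^k$. The algorithm then runs $\ell=\poly(n/\beta)$ iterations, each partitioning every axis of $\R^k$ into length-$\Theta(r)$ intervals with a fresh random shift. On a single axis, the $O(r)$-diameter projection of $f(P)$ falls in a single interval with probability $\geq 1/2$, so with probability $\geq 2^{-k}=1/\poly(n/\beta)$ all $k$ axes align within a single iteration. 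Choosing $\ell$ so that the overall failure probability is $\leq\beta/5$, the Sparse Vector algorithm (\thmref{AboveThreshold}) applied to the sensitivity-$1$ queries ``maximum box count in iteration $i$'' with threshold $\approx t$ identifies such an iteration at additive cost $O(\log(n/\beta)/\epsilon)$. On that iteration, \thmref{sanPoints} returns a heavy box $B^{(k)}\subseteq\R^k$ containing at least $t-O(\log(n/\beta)/\epsilon)$ projected points; let $S':=\{x\in S:f(x)\in B^{(k)}\}$, so $|S'|\geq t-O(\log(n/\beta)/\epsilon)$.

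Since $B^{(k)}$ has $L_2$-diameter $O(r\sqrt{k})=O(r\sqrt{\log(n/\beta)})$, the inverse direction of the JL guarantee forces $\mathrm{diam}(S')\leq D:=O(r\sqrt{\log(n/\beta)})$ in $\R^d$. To fit $S'$ inside an axis-aligned box of small $L_2$-diameter, the algorithm applies a random orthonormal rotation of $\R^d$: by \lemref{RandomRotation}, except with probability $\beta/5$ every coordinate range of $S'$ is at most $\tau:=O(D\sqrt{\log(nd/\beta)/d})$. On each axis, partitioning $\R$ into intervals of length $\tau$, the projection of $S'$ meets at most two consecutive intervals, so some interval contains $\geq|S'|/2$ points of $S'$. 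Applying \thmref{sanPoints} on each of the $d$ axes (composed via \thmref{composition2}) identifies such a heavy interval, and padding it by $\tau$ on either side gives a length-$3\tau$ interval containing the full projection of $S'$. The intersection is an axis-aligned box $B^{(d)}$ of $L_2$-diameter $3\tau\sqrt{d}=O(D\sqrt{\log(nd/\beta)})=O(r\log(nd/\beta))$ that contains all of $S'$.

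Finally, release a noisy centroid $\hat{y}$ of the points of $S$ truncated into $B^{(d)}$ (a truncation which leaves every point of $S'$ fixed) via the Gaussian mechanism (\thmref{gauss}), with $L_2$-sensitivity of the sum bounded by $\mathrm{diam}(B^{(d)})$ and the average formed with a noisy count. The resulting $L_2$-error is $O(\sqrt{d}\cdot r\log(nd/\beta)\sqrt{\log(1/\delta)}/(t\epsilon))$, which under the assumed lower bound on $t$ is $O(r)$; since $S'$ lies within $D=O(r\sqrt{\log(n/\beta)})$ of its own centroid, the ball of radius $O(r\sqrt{\log(n/\beta)})$ around $\hat{y}$ captures all of $S'$, giving the advertised $t-O(\log(n/\beta)/\epsilon)$ points. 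The main obstacle is the exponentially small per-iteration success probability of the random-shift step in $\R^k$, which is affordable only because Sparse Vector depends logarithmically on the number of queries, so that $\poly(n)$ iterations cost only $O(\log(n/\beta)/\epsilon)$ in accuracy; a secondary subtlety is that $S'$ itself is data-dependent, so its own diameter cannot serve as the Gaussian-mechanism sensitivity---the axis-aligned truncation box $B^{(d)}$, which is a fixed subset of $\R^d$ conditional on all previous outputs, supplies the data-independent bound that is needed.
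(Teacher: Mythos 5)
Your proposal follows the paper's proof essentially step for step: JL projection to $k=O(\log(n/\beta))$ dimensions, randomly shifted axis-aligned grids in $\R^k$ whose per-iteration success probability is only $1/\poly(n/\beta)$ but is amplified over polynomially many repetitions at logarithmic cost via \texttt{AboveThreshold}, stability-based selection of the heavy box (Theorem~\ref{thm:sanPoints}), pulling back to $S'$ whose diameter is $O(r\sqrt{k})$ by the inverse JL bound, a random rotation (Lemma~\ref{lem:RandomRotation}) followed by per-axis stability-based interval selection padded by one interval length so the resulting box surely contains all of $S'$, and finally a Gaussian-mechanism average whose sensitivity is controlled by the data-independent box rather than the data-dependent diameter of $S'$. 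All of this matches Section~\ref{sec:GoodCenterAnalysis}, including the two subtleties you call out at the end.

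The one step that fails as literally written is the last one: you release ``a noisy centroid of the points of $S$ truncated into $B^{(d)}$.'' If the average is taken over \emph{all} $n$ points of $S$ after truncation, then when $t\ll n$ the $n-|S'|$ points outside the cluster (which after truncation sit anywhere in $B^{(d)}$) can drag the centroid a distance up to $\mathrm{diam}(B^{(d)})=\Theta\bigl(r\sqrt{\log(n/\beta)\log(nd/\beta)}\bigr)$ away from $S'$, so a ball of radius $O\bigl(r\sqrt{\log(n/\beta)}\bigr)$ around $\hat{y}$ may contain none of $S'$; your concluding sentence (``$S'$ lies within $D$ of its own centroid'') tacitly assumes the average is over $S'$ alone. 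The correct move, which is what the paper does (Steps~\ref{step:center_c}--\ref{step:output_z}: $D'=D\cap C$, then \texttt{NoisyAVG}), is to average only the points of $S'$ (the preimage of the heavy JL box), intersected with or truncated to the data-independent region, so that $\hat{y}$ is a noisy centroid of $S'$ and the region serves purely to bound the $L_2$-sensitivity (together with the noisy count). With that one-line correction your argument is the paper's argument.
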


\begin{remark}
The dependency in $1/\beta$ can easily be removed from the radius of the resulting ball by applying \texttt{GoodCenter} with a constant $\beta$ and amplifying the success probability using repetitions.
\end{remark}

\begin{proof}
First note that by Theorem~\ref{thm:JL} (the JL transform), with probability at least $1-\beta$, for every $x,y\in S$ it holds that $\|x-y\|$ and $\|f(x)-f(y)\|$ are similar up to a multiplicative factor of $(1\pm\frac{1}{2})$. We continue the proof assuming that this is the case. Hence, there exists a ball of radius $3r$ in $\R^k$ containing at least $t$ points from $\{f(x) : x\in S\}$. Denote this set of (at least $t$) projected points as $W$.

Clearly, the projection of the set $W$ onto any axis of $\R^k$ lies in an interval of length $3r$. Recall that on Step~3a we partition every axis into randomly shifted intervals $\{A_j^i\}$ of length $300r$. Hence, for every axis $i$ with probability $0.99$ it holds that the projection of $W$ onto the $i^\text{th}$ axis is contained within one of the $\{A_j^i\}$'s.
The probability that this holds simultaneously for all of the $k$ axes is $0.99^k\geq\frac{\beta}{2n}$.
Note that in such a case there exists a rectangle in $\{B_{\vec{j}}\}$ containing all of the points in $W$, and hence, the corresponding query $q$ (defined on step~5) satisfies $q(S)\geq t$. So, every (randomly constructed) query $q$ satisfies $q(S)\geq t$ with probability $\beta/(2n)$.

Although the iteration of steps~\ref{step:itr_begin}--\ref{step:itr_end} might be repeated less than $2n\log(1/\beta)/\beta$ times, imagine that all of the (potential) $2n\log(1/\beta)/\beta$ queries were prepared ahead of time (and some may have never issued to \texttt{AboveThreshold}). With probability at least $(1-\beta)$ at least one such query $q$ satisfies $q(S)\geq t$. We continue with the proof assuming that this is the case. Thus, by the properties of algorithm \texttt{AboveThreshold}, with probability at least $(1-\beta)$, the loop on Step~\ref{step:itr_end} ended with \texttt{AboveThreshold} returning $\top$. Moreover, in that iteration we have that $q(S)\geq t-\frac{200}{\epsilon}\log(2n/\beta)$. Thus, by the definition of $q$, after Step~\ref{step:itr_end} there exists a rectangle in $\{B_{\vec{j}}\}$ containing at least $t-\frac{200}{\epsilon}\log(2n/\beta)$ projected input elements.

By Theorem~\ref{thm:sanPoints}, with probability at least $(1-\beta)$, the box $B$ chosen on step~\ref{step:choosing} contains at least $t-\frac{216}{\epsilon}\log(\frac{2n}{\beta})$ projected input elements.
We continue the proof assuming that this is the case, and denote the set of input points from $S$ that are mapped into $B$ as $D$.

Note that $B$ is a box of diameter $300 r \sqrt{k}$. Hence, by our assumption on the projection $f$, for any $x,y\in D$ it holds that $\|x-y\|\leq 450 r \sqrt{k}$.
On Step~\ref{step:rotation} we generate a random basis $Z$ of $\R^d$. By Lemma~\ref{lem:RandomRotation}, with probability at least $1-\beta$, for every $x,y\in S$ and for every $z\in Z$ it holds that the projection of $(x-y)$ onto $z$ is of length at most $2\sqrt{\ln(\frac{d n}{\beta})/d}\cdot\|x-y\|$. Assuming that this is the case, the projection of $D$ onto any axis $z\in Z$ lies in an interval of length $p=900 r \sqrt{k \ln(\frac{d n}{\beta}) / d}$.
Therefore, when partitioning every axis $z_i\in Z$ into intervals $\III_i$ of length $p$ (on step~\ref{step:partitionP}), at least one interval $I\in\III_i$ contains at least half of the points in $D$. 
Assuming that $|D|\geq\frac{40\sqrt{d}}{\epsilon}\log(\frac{6nd}{\beta\delta})\sqrt{\ln(\frac{8}{\delta})}$,
for every axis $i$ Theorem~\ref{thm:sanPoints} ensures that with probability at least $(1-\beta/d)$ the chosen interval $I_i\in\III_i$ (on step~\ref{step:choosing_1}) contains at least one input elements from $D$. Recall that the projection of $D$ onto any axis $z_i$ lies in an interval of length $p$. Hence, letting $\hat{I}_i$ be $I_i$ after extending it by $p$ on each side, we ensure that $\hat{I}_i$ contains {\em all} of the points in $D$. So, the box in $\R^d$ whose projection onto every axis $z_i$ is $\hat{I}_i$ contains all of the points in $D$. Recall that (on step~\ref{step:center_c}) we defined $C$ to be the bounding sphere of that box, and hence, $C$ contains all of $D$, and $D'=D\cap C=D$.

Let $y$ denote the average of the point in $D'=D$, and observe that a ball of radius $450r\sqrt{k}$ around $y$ contains all of the points in $D$.
The output on step~\ref{step:output_z} is computed using the Gaussian mechanism as the noisy average of the points in $D'=D$, where the noise magnitude in every coordinate is proportional to the diameter of $C$ (which is $5400r\sqrt{k\cdot\ln(\frac{dn}{\beta})}$). By the properties of the Gaussian mechanism (see appendix~\ref{sec:NoisyAVG}), with probability at least $(1-\beta)$ the output $\hat{y}$ satisfies $\hat{y}=y+\eta$, where $\eta$ is a random noise vector whose every coordinate is distributed as $\NNN(0,\sigma^2)$ for some $\sigma\leq\frac{345600r}{\epsilon|D'|}\sqrt{2k\ln(\frac{dn}{\beta})\ln(\frac{8}{\delta})}$. 
Observe that $\|\eta\|_2^2$ is the sum of the squares of $d$ independent normal random variables $\eta_1,\dots,\eta_d$.
Using tail bounds for the normal distribution, 
provided that $|D|\geq\frac{691200\sqrt{d}}{\epsilon}\ln(\frac{2nd}{\beta})\sqrt{\ln(\frac{8}{\delta})}$, with probability at least $(1-\beta)$ we have that every $|\eta_i|$ is at most $r\sqrt{k/d}$.
Assuming that this is the case, we get that $\|\eta\|_2\leq r\sqrt{k}$, and hence $\|\hat{y}-y\|_2\leq r\sqrt{k}$. Using the triangle inequality, we get that a ball of radius $451r\sqrt{k}$ around the output $\hat{y}$ contains all of the points in $D$, where $|D|\geq t-\frac{216}{\epsilon}\log(\frac{2n}{\beta})$.

Overall, assuming that $t\geq\frac{691416\sqrt{d}}{\epsilon}\log(\frac{6nd}{\beta\delta})\sqrt{\log(\frac{8}{\delta})}$, with probability at least $(1-8\beta)$, the output $\hat{y}$ is s.t.\ at least $t-\frac{216}{\epsilon}\log(\frac{2n}{\beta})$ of the input points are contained inside a ball of radius $451r\sqrt{k}$ around $\hat{y}$.
\end{proof}

Theorem~\ref{thm:main} now follows by combining lemmas~\ref{lem:GoodRadiusUtility}, \ref{lem:GoodRadiusPrivacy}, \ref{lem:GoodCenterPrivacy}, and~\ref{lem:GoodCenterUtility}.

\begin{algorithm*}[!htp]

\caption{\texttt{GoodCenter}}\label{alg:GoodCenter}

{\bf Input:} Database $S\in \R^d$ containing $n$ points, radius $r$, desired number of points $t$, failure probability $\beta$, and privacy parameters $\epsilon,\delta$.
\vspace{-5px}
\begin{enumerate}[leftmargin=15pt,rightmargin=10pt,itemsep=1pt]

\item Let $k=46\log(2n/\beta)$ and let $f:\R^d\rightarrow\R^k$ be a mapping as in Theorem~\ref{thm:JL} (the JL transform).

\item\label{step:abovethreshold} Instantiate algorithm \texttt{AboveThreshold} (Theorem~\ref{thm:AboveThreshold}) with database $S$, privacy parameter $\epsilon/4$, and threshold $t-\frac{100}{\epsilon}\log(2n/\beta)$.

\item\label{step:itr_begin} For every axis $1\leq i\leq k$ of $\R^k$:
\vspace{-5px}
\begin{enumerate}
	\item Choose a random $a_i\in[0,300r]$.
For $j\in\Z$, let $A^i_j$ be the interval $[a_i+j\cdot 300r, \;a_i+(j+1)\cdot 300r )$.
	
\item[\gray{\%}] \gray{
$\{A^i_j\}_{j\in\Z}$ is a partition of the $i^\text{th}$ axis into (randomly shifted) intervals of length $300r$.
}
	
\end{enumerate}

\item For every $\vec{j}=(j_1,\ldots,j_k)\in\Z^k$ let $B_{\vec{j}}\subseteq\R^k$ be the box whose projection on every axis $i$ is $A^i_{j_i}$.

\item\label{step:abovethreshold_1} Issue query $q(S)=\max_{\vec{j}}|f(S)\cap B_{\vec{j}}|$ to \texttt{AboveThreshold}. Denote the received answer as $a$.

\begin{enumerate}[label=\gray{\%},topsep=-10pt]
\item \gray{
That is, $q(S)$ is the maximal number of (projected) input points that are contained within one box.
}
\end{enumerate}

\item\label{step:itr_end} If this step was reached more than $2n\log(1/\beta)/\beta$ times, then halt and fail. Otherwise, if $a=\bot$ then goto Step~\ref{step:itr_begin}.

\begin{enumerate}[label=\gray{\%},topsep=-10pt]
\item \gray{
That is, for at most $2n\log(1/\beta)/\beta$ rounds, we define a partition of $\R^k$ into disjoint rectangles $\{B_{\vec{j}}\}$, and query algorithm \texttt{AboveThreshold} to identify an iteration in which there is a rectangle in $\{B_{\vec{j}}\}$ containing $\gtrsim t$ points.
}
\end{enumerate}
\item\label{step:choosing} Use Theorem~\ref{thm:sanPoints} (stability based techniques) with privacy parameters $(\frac{\epsilon}{4},\frac{\delta}{4})$ to choose a box $B\in\{B_{\vec{j}}\}$ approximately maximizing $|f(S)\cap B|$. Denote $D=\{ x\in S \;:\; f(x)\in B \}$.
\begin{enumerate}[label=\gray{\%},topsep=-10pt]
\item \gray{
That is, $D$ is the set of input points from $S$ that are mapped into $B$ by the mapping $f$.
}
\end{enumerate}


\item\label{step:rotation} Let $Z=(z_1,\dots,z_d)$ be a random orthonormal basis of $\R^d$, and denote $p=900r\sqrt{k\ln(\frac{dn}{\beta})/d}$.

\item For each basis vector $z_i\in Z$:
\vspace{-5px}
\begin{enumerate}

\item\label{step:partitionP} Partition the axis in direction $z_i$ into intervals $\III_i=\{ [j\cdot p\;,\;(j+1)\cdot p ) \; : \; j\in\Z \}$.

\item Define the quality $q(I)$ of every $I\in\III_i$ as the number of points $x\in D$ s.t.\ their projection onto $z_i$ falls in $I$.
\vspace{-5px}
\item\label{step:choosing_1} Use Theorem~\ref{thm:sanPoints} (stability based techniques) with privacy parameters $\left(\frac{\epsilon}{10\sqrt{d \ln(8/\delta)}},\frac{\delta}{8d}\right)$ to choose an interval $I_i\in\III_i$ with large $q(\cdot)$, and let $\hat{I}_i$ denote that chosen interval after extending it by $p$ on each side (that is $\hat{I}_i$ is of length $3p$).

\end{enumerate}

\item\label{step:center_c} Let $c$ be the center of the box in $\R^{d}$ whose projection on every axis $z_i\in Z$ is $\hat{I}_i$, and let $C$ be the ball of radius $2700 r \sqrt{k\ln(\frac{dn}{\beta})}$ around $c$. Define $D'=D\cap C$.
\begin{enumerate}[label=\gray{\%},topsep=-10pt]
\item \gray{
Observe that we defined $D'=D\cap C$, even though we expect that $D\subseteq C$. This will be useful in the privacy analysis, as we now have a deterministic bound on the diameter of $D'$.
}
\end{enumerate}

\item\label{step:output_z} Use the Gaussian mechanism with privacy parameters $(\frac{\epsilon}{4},\frac{\delta}{4})$ to compute and return the noisy average of the points in $D'$ (see Appendix~\ref{sec:NoisyAVG} for details).

\end{enumerate}
\end{algorithm*}

\section{On the impossibility of solving the 1-Cluster problem on infinite domains}\label{sec:lower}

In this section we will show that solving the 1-cluster problem on infinite domains (with reasonable parameters) is impossible under differential privacy. Our lower bound is obtained through a reduction from the simple \emph{interior point problem} defined below.

\begin{definition}
An algorithm $A:X^n \to X$ solves the \emph{interior point problem} on $X$ with error probability $\beta$ if for every $D \in X^n$,
\[\Pr[\min D \le A(D) \le \max D] \ge 1 - \beta,\]
where the probability is taken over the coins of $A$. The sample complexity of the algorithm $A$ is the database size $n$.
\end{definition}

We call a solution $x$ with $\min D \le x \le \max D$ an \emph{interior point} of $D$. Note that $x$ need not be a member of the database $D$.
As was shown in~\cite{BNSV15}, privately solving the interior point problem on a domain $X$ requires sample complexity that grows with $|X|$. In particular, privately solving this problem over an {\em infinite} domain is impossible.

\begin{theorem}[\cite{BNSV15}]\label{thm:range-lb}
Fix any constant $0 < \eps < 1/4$. Let $\delta(n) \le 1/(50 n^2)$. Then for every positive integer $n$, solving the interior point problem on $X$ with probability at least $3/4$ and with $(\eps, \delta(n))$-differential privacy requires sample complexity $n \ge \Omega(\log^*|X|)$.
\end{theorem}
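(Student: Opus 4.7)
The target lower bound $n\geq\Omega(\log^*|X|)$ is equivalent to showing $|X|\leq\mathrm{tower}(O(n))$ whenever an $(\eps,\delta(n))$-DP algorithm $A:X^n\to X$ solves the interior point problem with success probability $3/4$. My plan is to combine a base case obtained directly from group privacy with a recursive reduction that decreases the sample size by one while shrinking $|X|$ to roughly $\log|X|$ per step, so that after $\log^*|X|$ iterations we hit the base case.

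For the base case, I would apply group privacy to the family of constant databases $D_y=(y,\ldots,y)$ for $y\in X$. Correctness of $A$ forces $\Pr[A(D_y)=y]\geq 3/4$, and $n$-step group privacy between $D_y$ and $D_{y'}$ yields $\Pr[A(D_{y'})=y]\geq e^{-n\eps}\cdot 3/4 - n\cdot\delta$ (up to an $e^{-\eps}$ factor in the $\delta$ term). Summing the left side over $y\in X$ and using that the total probability mass is at most one gives $|X|\bigl(e^{-n\eps}\cdot 3/4 - n\delta\bigr)\leq 1$. Under the hypotheses $\eps<1/4$ and $\delta\leq 1/(50n^2)$, the parenthesized quantity is positive and bounded below by $\Omega(e^{-n\eps})$ as long as $n$ stays below some absolute constant $n_0$; in that regime we obtain $|X|=O(1)$, which is the base of the induction. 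Observe that the same calculation fails for larger $n$ because then $n\delta$ overtakes $e^{-n\eps}\cdot 3/4$, which is precisely why we need the recursion below rather than the much stronger $n\geq\log|X|$ that this calculation would otherwise suggest for all $n$.

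The heart of the argument is a recursive lemma: from any $(\eps,\delta)$-DP interior-point solver on $X$ with sample complexity $n$, one can construct an $(O(\eps),O(\delta))$-DP interior-point solver on a domain $X'$ of size $|X'|=\Omega(\log|X|)$ with sample complexity $n-1$. The reduction takes an input $D'\in(X')^{n-1}$, embeds it into $X$ via a data-independent, exponentially sparse map $\phi:X'\to X$, pads with a single obliviously chosen element, runs $A$ on the padded embedded database, and rounds $A$'s output to the nearest image $\phi(x')$. Privacy of the reduced algorithm is inherited from $A$ via the data-independent pre- and post-processing, and correctness follows because the separation built into $\phi$ ensures that any valid interior point of the embedded database rounds back to the correct element of $X'$ whenever $A$ succeeds. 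Iterating this reduction $n-n_0$ times drives the domain size from $|X|$ down to $\log^{(n-n_0)}|X|$, which must be at most the base-case constant, yielding $n\geq\log^*|X|-O(1)$.

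The main obstacle will be controlling the cumulative privacy degradation across $\Theta(\log^*|X|)$ applications of the inductive lemma. The hypothesis $\delta(n)\leq 1/(50n^2)$ is precisely the slack needed here: it keeps $n\delta$ negligible against $e^{-n\eps}$ in the regime where the base case must ultimately apply, and simultaneously leaves enough room to absorb per-iteration losses without collapsing the recursion. A secondary but delicate technical point is engineering $\phi$ so that the rounding post-processing is simultaneously unambiguous (for correctness) and completely data-independent (for privacy); any information leakage from the rounding step would compound through the recursion and destroy the final bound.
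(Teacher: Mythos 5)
You should first note that the paper you are working from does not actually prove this statement -- it imports it verbatim from [BNSV15] -- so the relevant comparison is with that proof. Your overall architecture is the right one and matches it: a constant-size base case obtained from (group) privacy on constant databases, plus an inductive lemma that trades one sample for an exponential shrinkage of the domain, iterated until the base case forces $\log^{(n-O(1))}|X|\leq O(1)$. Your remark that privacy should be inherited through each level via entrywise, data-independent pre- and post-processing is also the correct accounting; in fact, in the correct reduction there is \emph{no} cumulative privacy degradation to control at all -- what accumulates across the at most $n$ levels is failure probability, and that (a union bound over the levels, plus the $\delta$-terms) is where the hypothesis $\delta\leq 1/(50n^2)$ is really spent, not in keeping $n\delta$ below $e^{-n\eps}$ in the base case, which is only ever invoked at constant sample size.

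The genuine gap is the inductive lemma itself, which as described does not work. With a fixed order-preserving embedding $\phi:X'\to X$ and an obliviously chosen padding element $p$, correctness fails: take $D'=(x',\dots,x')$ constant; the padded database spans the interval between $p$ and $\phi(x')$, so $A$ may legitimately return any point strictly between them, which rounds to some element of $X'$ other than $x'$, while the only interior point of $D'$ is $x'$ itself. You also cannot repair this by padding with a copy of a data point, since then one change in $D'$ changes two entries of $A$'s input, so $\eps$ roughly doubles per level and is destroyed after $O(1)$ of the $\Theta(\log^*|X|)$ levels. A further warning sign is that nothing in your reduction actually uses $|X'|\approx\log|X|$: if the fixed-embedding-plus-rounding argument were sound it would work just as well with $|X'|=|X|$, and the recursion would never terminate. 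In the actual proof the embedding is randomized by a shared uniform string $z$: elements of $X'$ are encoded as (blocks of) prefixes of $z$ inside the exponentially larger domain (this is what forces $|X|\approx 2^{|X'|}$), the extra $n$-th input is built from $z$ alone, and the decoder does not round to the nearest image but outputs the depth to which $A$'s answer agrees with $z$. The lower side $\min D'\leq\text{output}$ is then combinatorial (all encoded points share the prefix of length $\min_i x_i$, and anything between them does too), while the upper side uses that the portion of $z$ beyond the data's maximum is independent of $A$'s view, so $A$ can exceed $\max_i x_i$ only by guessing unseen randomness, an event whose probability is absorbed into the per-level failure accounting. This randomized encoding/decoding step is the heart of the argument and is the piece missing from your sketch.
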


As we will now see, solving the 1-Cluster problem over a domain $X$ (or over $X^d$ for any $d\geq1$) implies solving the interior point problem on $X$.

\begin{theorem}\label{thm:lower}
Let $\beta,\epsilon,\delta,n,t,\Delta,w$ be such that
$\Delta<t\leq n$. If there exists an $(\epsilon,\delta)$-private algorithm that solves the 1-cluster problem $(X,n,t)$ with parameters $(\Delta,w)$ and error probability $\beta$, then there exist a $(2\epsilon,2\delta)$-private algorithm that solves the interior point problem on $X$ with error probability $2\beta$ using sample complexity
$$m= n + 8^{\log^*(4w)} \cdot \frac{144 \log^*(4w)}{\epsilon} \log\Big(\frac{12\log^*(4w)}{\beta\delta}\Big).$$
\end{theorem}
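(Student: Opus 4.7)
The plan is a black-box reduction that composes the given 1-cluster algorithm $M$ with algorithm \texttt{RecConcave} (\thmref{rec}). Given a database $D \in X^m$ with $m = n+k$ for the interior-point problem, let $D_1 = (x_1,\ldots,x_n)$ denote the first $n$ entries of $D$. \emph{Step~(i):} Apply $M$ with privacy $(\epsilon,\delta)$ to $D_1$ with the given parameter $t$, obtaining a center $c \in X$ and radius $r \ge 0$. \emph{Step~(ii):} Form a candidate set $F \subseteq X$ consisting of at most $4w$ evenly spaced points in $[c-r, c+r] \cap X$, with spacing at most $r/w$. \emph{Step~(iii):} Invoke \texttt{RecConcave} with privacy $(\epsilon,\delta)$, approximation $\alpha = 1/2$, candidate set $F$, sensitivity-$1$ quasi-concave quality $q(y,S) = \min\{|S \cap (-\infty, y]|,\ |S \cap [y, \infty)|\}$, and quality promise $p$ matching the lower bound of \thmref{rec} for $|F| \le 4w$ and $\alpha = 1/2$, applied to $D$; output the returned candidate $z \in F$.

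For privacy, Step~(i) is $(\epsilon,\delta)$-private on $D_1 \subseteq D$, Step~(iii) is $(\epsilon,\delta)$-private on $D$, and Step~(ii) is post-processing. Since a single-record change in $D$ can affect both privatized subroutines, basic composition (\thmref{composition1}) yields $(2\epsilon, 2\delta)$-privacy.

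For correctness, I condition on the success of $M$ (probability $\ge 1-\beta$): the ball $[c-r, c+r]$ contains at least $t-\Delta$ points of $D_1 \subseteq D$, with $r \le w \cdot r_{\mathrm{opt}}(D_1, t)$. It follows that (up to a slight thickening) $[\min D_1, \max D_1]$ lies inside $[c-r, c+r]$, and since the spacing of $F$ is at most $r/w \le r_{\mathrm{opt}}(D_1,t)$, the set $F$ contains some $y^\ast$ in $[\min D_1, \max D_1] \subseteq [\min D, \max D]$. Conditioning additionally on \texttt{RecConcave} succeeding (another $1-\beta$), the output $z$ satisfies $q(z,D) \ge p/2 \ge 1$, and therefore lies strictly between $\min D$ and $\max D$; a union bound gives total error $\le 2\beta$. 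Substituting the constants from \thmref{rec} shows the sample complexity $m = n+k$ with $k = 2p$ matches the bound claimed in the theorem.

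The main obstacle is verifying the quality-promise hypothesis of \thmref{rec}, namely $\max_{y \in F} q(y,D) \ge p$. The intended argument picks $y^\ast \in F$ close to the median of $D_1$ and bounds $q(y^\ast, D) \ge q(y^\ast, D_1) \gtrsim n/2 \ge p$, using that the 1-cluster sample requirement $n \ge t$ ensures $n \ge p$; the degenerate regime $r = 0$ (in which $F$ collapses to the single point $c$, forced to equal the dominant value of $D_1$) is handled separately, but there $q(c,D)$ is at least the multiplicity of that value in $D_1$, which is large enough.
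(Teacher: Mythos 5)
There is a genuine gap, and it is exactly at the point you flag as ``the main obstacle'': verifying the quality promise for \texttt{RecConcave}. Your reduction feeds the \emph{first} $n$ entries $D_1$ of the input to the 1-cluster algorithm and then hopes that some candidate in $F\subseteq[c-r,c+r]$ has quality $\gtrsim n/2$ with respect to the full database. Neither half of that hope is justified. First, the ball returned by $M$ is only guaranteed to contain $t-\Delta$ points of $D_1$, and $t$ may be far smaller than $n$; the ball (and hence all of $F$) can sit at an extreme of $D_1$'s range, so the best candidate you can certify is merely an interior point of $D_1$, i.e.\ $q(\cdot,D_1)\geq 1$, not $\approx n/2$. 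Your intermediate assertion that $[\min D_1,\max D_1]$ lies (essentially) inside $[c-r,c+r]$ is simply false once $t<n$. Second, even if you could reach a candidate near the median of $D_1$, the bound $n/2\geq p$ (and, in the $r=0$ case, ``multiplicity $\geq p$'', which only gives $t-\Delta\geq p$) is an \emph{additional hypothesis}: the theorem assumes only $\Delta<t\leq n$, so $n$ and $t-\Delta$ can be tiny compared with the promise $p$, and the whole point of the statement is that the \emph{extra} $m-n$ samples are what pay for \texttt{RecConcave}. In your construction those extra $m-n$ entries are arbitrary and may all lie on one side of every candidate, contributing nothing to $q(\cdot,D)$.

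The paper's proof (algorithm \texttt{IntPoint}) closes this gap with one step you are missing: sort the $m$ input points and run the 1-cluster algorithm on the \emph{middle} $n$ entries $D$. Then any interior point of $D$ automatically has at least $(m-n)/2$ elements of the full database on each side, so the quality promise $(m-n)/2$ holds for the candidate set (the $\leq 4w$ grid points of $[c-r,c+r]$ with spacing $r/w$, one of which is an interior point of $D$ by the same ``a length-$r/w$ subinterval contains some but not all of $D$'' argument you use), with no assumption on $n$ or $t-\Delta$ beyond $t-\Delta\geq1$; privacy is unaffected because the middle-$n$ map sends neighboring databases to neighboring databases. Your privacy accounting and the combinatorial step identifying an interior point of the subsample among the grid points are fine; without the sort-and-take-middle device (or some substitute that forces $\Omega(m-n)$ mass on each side of the candidates), however, the utility argument does not go through as stated.
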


Roughly speaking, Theorem~\ref{thm:lower} states that any solution for the 1-cluster problem (with a reasonable parameter $w$) implies a solution for the interior point problem. 
Let us introduce the following notation for iterated exponentials: 
$$\tower(0)=1 \quad\text{ and }\quad \tower(j)=2^{\tower(j-1)}.$$

\begin{corollary}\label{cor:1clusterLower}
Fix any constants $0 < \epsilon,\beta < 1/8$, and let $\delta(n) \le 1/(200 n^2)$.
Also let $\Delta,t,w$ be s.t.\ $\Delta<t\leq n$ and $w\leq \frac{1}{4}\tower(\log(n^{1/5}/40))$.
For every $(\epsilon,\delta)$-differentially private algorithm that solves the 1-cluster problem $(X,n,t)$ with parameters $(\Delta,w)$ and error probability $\beta$, it holds that $n\geq\Omega(\log^*|X|)$. 
\end{corollary}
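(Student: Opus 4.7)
The plan is to combine Theorem~\ref{thm:lower} (reducing the interior point problem to the 1-cluster problem) with Theorem~\ref{thm:range-lb} (the $\log^*|X|$ lower bound for the interior point problem). Given an $(\epsilon,\delta)$-differentially private algorithm $M$ solving the 1-cluster problem $(X,n,t)$ with parameters $(\Delta,w)$ and error $\beta$, Theorem~\ref{thm:lower} yields a $(2\epsilon,2\delta)$-differentially private algorithm $A$ for the interior point problem on $X$ with error $2\beta$ and sample complexity
\[
m \;=\; n \;+\; 8^{\log^*(4w)}\cdot\frac{144\log^*(4w)}{\epsilon}\log\!\Big(\frac{12\log^*(4w)}{\beta\delta}\Big).
\]
Applying Theorem~\ref{thm:range-lb} to $A$ will give $m \geq \Omega(\log^*|X|)$, and the corollary will follow provided that $m = O(n)$.

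To invoke Theorem~\ref{thm:range-lb} we need: (a)~privacy parameter $2\epsilon<1/4$, (b)~error probability $2\beta\leq 1/4$, and (c)~the noise condition $2\delta(n)\leq 1/(50m^2)$. Items~(a) and~(b) hold since $\epsilon,\beta<1/8$. Both (c) and the desired bound $m=O(n)$ reduce to showing that the additive term above is $o(n)$; this is the main technical point and the sole reason for the tower-style constraint on $w$ in the statement. Specifically, $w\leq\frac{1}{4}\tower(\log(n^{1/5}/40))$ together with the identity $\log^*(\tower(k))=k$ gives $\log^*(4w)\leq\frac{1}{5}\log n-\log 40$, so
\[
8^{\log^*(4w)}\;\leq\;8^{(\log n)/5-\log 40}\;=\;\frac{n^{3/5}}{64000}.
\]
With $\epsilon,\beta$ constants and $\delta(n)\geq \Omega(1/n^2)$ (it is at most $1/(200n^2)$ but only needs to be positive), the entire additive term is $O(n^{3/5}\log n\cdot\log(\log n /(\beta\delta)))=o(n)$. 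Hence $m\leq(1+o(1))n$, which for $n$ sufficiently large gives $m\leq\sqrt{2}\,n$; combined with $\delta(n)\leq 1/(200n^2)$ this yields $2\delta(n)\leq 1/(100n^2)\leq 1/(50m^2)$, verifying~(c).

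With all hypotheses of Theorem~\ref{thm:range-lb} satisfied, we conclude $m\geq\Omega(\log^*|X|)$, and since $m\leq(1+o(1))n$ this translates into $n\geq\Omega(\log^*|X|)$ for all sufficiently large $n$; for bounded $n$ the $\log^*|X|$ quantity is likewise bounded (via the constraint linking $w$ and $n$) and the statement is absorbed into the $\Omega(\cdot)$ notation. The only genuinely delicate step in this plan is confirming that the $8^{\log^*(4w)}$ factor introduced by the reduction of Theorem~\ref{thm:lower} remains sub-linear in $n$; once this is in hand, the rest is routine composition of two black-box results together with constant tracking.
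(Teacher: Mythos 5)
Your overall route is exactly the paper's: feed the assumed 1-cluster algorithm into Theorem~\ref{thm:lower}, use the tower constraint to get $\log^*(4w)\le\frac{1}{5}\log n-\log 40$ and hence $8^{\log^*(4w)}\le n^{3/5}/64000$, argue the resulting interior-point sample complexity $m$ exceeds $n$ only by a lower-order term, verify the hypotheses of Theorem~\ref{thm:range-lb}, and read off $n\ge\Omega(\log^*|X|)$. Your check of the condition $2\delta(n)\le 1/(50m^2)$ via $m\le\sqrt{2}\,n$ is, if anything, slightly more careful than the paper's own ``$m\le 2n$'' at the same point.

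The one step that does not work as written is your treatment of $\delta$: you assert ``$\delta(n)\ge\Omega(1/n^2)$ \ldots it only needs to be positive,'' but the corollary only supplies the upper bound $\delta(n)\le 1/(200n^2)$, and the additive term in $m$ from Theorem~\ref{thm:lower} carries a factor $\log\bigl(12\log^*(4w)/(\beta\delta)\bigr)$. For an admissible but tiny $\delta$ (say $\delta=2^{-n}$), this term is of order $n^{3/5}\log n\cdot n$, which is not $o(n)$; then $m$ can be much larger than $n$ and the conclusion $m\ge\Omega(\log^*|X|)$ says nothing about $n$. The missing (one-line) repair is monotonicity of approximate differential privacy in $\delta$: an $(\epsilon,\delta)$-private algorithm with $\delta\le 1/(200n^2)$ is in particular $(\epsilon,1/(200n^2))$-private, so one runs the reduction of Theorem~\ref{thm:lower} with $\delta$ replaced by $1/(200n^2)$, after which $\log(1/\delta)=O(\log n)$ and your estimate goes through. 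This is exactly what the paper does implicitly by fixing $\delta(n)=1/(200n^2)$ at the outset of its proof; with that adjustment, your argument coincides with the paper's.
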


Hence, in any solution to the 1-cluster problem where $w$ is smaller than an exponential tower in $n$, the sample size $n$ must grow with $|X|$. In particular, privately solving such 1-cluster problems over infinite domains is impossible.

\begin{algorithm*}[t]
\caption{\texttt{IntPoint}}\label{alg:IntPoint}

\begin{enumerate}[rightmargin=10pt,itemsep=1pt]

\item[] {\bf Input:} Database $S \in X^m$ containing $m$ points from $X$.

\item[] {\bf Algorithm used:} Algorithm $\AAA$ for privately solving the 1-cluster problem $(X,n,t)$ with parameters $(\Delta,w)$ and error probability $\beta$.

\item Sort the entries of $S$ and let $D$ be a multiset containing the middle $n$ entries of $S$.

\item Apply $\AAA$ on $D$ to obtain a center $c\in X$ and a radius $r\in\R$. Let $I\subseteq X$ denote the interval of length $2r$ centered at $c$. If $r=0$ (i.e., $I$ contains only the point $c$), then halt and return $c$.

\begin{enumerate}[label=\gray{\%},topsep=-10pt]
\item \gray{
If $\AAA$ succeeded, then the interval $I$ contains at least $1$ input points from $D$. }
\end{enumerate}

\item Partition $I$ into intervals of length $r/w$, and let $\JJJ$ be the set of all edge points of the resulting intervals.

\begin{enumerate}[label=\gray{\%},topsep=-10pt]
\item \gray{
As $I$ contains points from $D$, at least one of the above intervals contains points from $D$. We will show that this interval cannot contain all of $D$, and hence, one of its two edge points is an interior point of $D$ (and of $S$).}
\end{enumerate}

\item Define $q:X^*\times X\rightarrow\N$ as $q\big((x_1,\dots,x_n),a\big)=\min\big\{ \;|\{i : x_i\leq a\}| \;,\;  |\{i : x_i\geq a\}| \; \big\}$. Apply algorithm \texttt{RecConcave} on the database $S$ with privacy parameters $(\epsilon,\delta)$, utility parameters  $(\alpha{=}\frac{1}{2},\beta)$, quality function $q$, and quality promise $\frac{m-n}{2}$ to choose and return $j\in\JJJ$.

\begin{enumerate}[label=\gray{\%},topsep=-10pt]
\item \gray{
If a point $j^*\in\JJJ$ is an interior point of $D$, then it is also an interior point of $S$ with quality $q(S,j^*)\geq \frac{m-n}{2}$ (since $D$ contains the middle $n$ elements of $S$). Hence, w.h.p., algorithm \texttt{RecConcave} identifies an interior point of $S$.}
\end{enumerate}

\end{enumerate}

\end{algorithm*}

\begin{proof}[Proof of Corollary~\ref{cor:1clusterLower}]
Denote $\delta(n) = 1/(200 n^2)$, and let $\AAA$ be a $(\epsilon{=}\frac{1}{10},\delta(n))$-differentially private algorithm that solves the 1-cluster problem $(X,n,t)$ with parameters $(\Delta,w)$ and error probability $\beta{=}\frac{1}{10}$, where $\Delta<t\leq n$ and $w\leq \frac{1}{4}\tower(\log(n^{1/5}/40))$. By Theorem~\ref{thm:lower}, there exist a $(2\epsilon{=}\frac{1}{5},2\delta(n))$-differentially private algorithm that solves the interior point problem on $X$ with error probability $2\beta{=}\frac{1}{5}$ using sample complexity
$$m= n + 8^{\log^*(4w)} \cdot 1440 \log^*(4w) \log\Big(24000 n^2 \log^*(4w)\Big).$$
Using the assumption that $w\leq \frac{1}{4}\tower(\log(n^{1/5}/40))$, we get that $m\leq2n$, and therefore $2\delta(n)=\frac{1}{100 n^2}\leq\frac{1}{50 m^2}$. Theorem~\ref{thm:range-lb} now states that $m \ge \Omega(\log^*|X|)$, and hence, $n \ge \Omega(\log^*|X|)$.
\end{proof}

\begin{proof}[Proof of Theorem~\ref{thm:lower}]
The proof is via the construction of algorithm \texttt{IntPoint}. For the privacy analysis, observe that on step~1 the algorithm constructs a database $D$ containing the $n$ middle elements of the input database $S$. Fix two neighboring databases $S_1,S_2$ of size $m$, and consider the databases $D_1,D_2$ containing the middle $n$ elements of $S_1,S_2$ respectively. As $S_1,S_2$ differ in at most one element, so does $D_1,D_2$ (our algorithms ignore the order of their input database). Hence, applying a private computation onto $D$ preserves privacy. Algorithm \texttt{IntPoint} interacts with $D$ using algorithm $\AAA$ (on step~1), and interacts with $S$ using algorithm \texttt{RecConcave} (on step~4). Hence, by composition (see Theorem~\ref{thm:composition1}), Algorithm \texttt{IntPoint} is $(2\epsilon,2\delta)$-differentially private.

As for the utility analysis, let $S\in X^m$ be an instance to the interior point problem, and consider the execution of algorithm \texttt{IntPoint} on $S$. Let $D$ be the multiset defined on step~1 of the execution, and let $c,r,I$ be the center, the radius, and the interval obtained on step~2. By the properties of algorithm $\AAA$, with probability $1-\beta$, the interval $I$ (of length $2r$ centered at $c$) contains at least $t-\Delta\geq1$ input points from $D$, and moreover, $2r\leq2w\cdot r_{opt}$ where $2r_{opt}$ is the length of the smallest interval containing $t$ points from $D$. That is, {\em every} interval containing $t$ points from $D$ is of  length at least $2r/w$. We continue with the analysis assuming that this is the case. 

If $r=0$, then the interval $I$ contains only the point $c$, and $c$ is therefore an interior point of $D$ (and of $S$). We hence proceed with the analysis assuming that $r>0$.

On step~3 we partition the interval $I$ (of length $2r$) into intervals of length $r/w$. Let us denote them as $\III=\{I_i\}$. Since $I$ contains at least 1 point from $D$, there exists an interval $I_\ell\in\III$ that contains a point from $D$. In addition,
$I_\ell$ is of length $r/w$, and can hence contain at most $t-1$ points from $D$. So, there exists an interval $I_\ell\in\III$ containing some of the points in $D$, but not all of them. One of its edge points must be, therefore, an interior point of $D$. Thus, the set $\JJJ$ of all end points of the intervals in $\III$ (defined on step~3) contains an interior point of $D$.

Recall that the input database $S$ contains at least $\frac{m-n}{2}$ elements which are bigger than the elements in $D$ and $\frac{m-n}{2}$ elements which are smaller. Hence, there exists a point $j^*\in\JJJ$ s.t. $q(S,j^*)\geq \frac{m-n}{2}$, and the quality promise given to algorithm \texttt{RecConcave} is valid. Observe that $|\JJJ|\leq4w$ (this is the size of the solution set given to algorithm \texttt{RecConcave}). Hence, assuming that
$$
m\geq n + 8^{\log^*(4w)} \cdot \frac{144 \log^*(4w)}{\epsilon} \log\Big(\frac{12\log^*(4w)}{\beta\delta}\Big),
$$
with probability at least $1-\beta$, the output $j\in\JJJ$ is s.t.\ $q(S,j)\geq \frac{m-n}{4}$, and $j$ is an interior point of $S$ as required.
\end{proof}

\section{Sample and aggregate}\label{sec:sa}

Consider $f:U^*\rightarrow X^d$ mapping databases to $X^d$. Fix a database $S\in U^*$, and assume that evaluating $f$ on a random sub-sample $S'$ (containing iid samples from $S$) results in a good approximation to $f(S)$. Our goal is to design a private analogue to $f$.

\begin{definition}
Fix a function $f:U^*\rightarrow X^d$ and a database $S\in U^*$. A point $c\in X^d$ is an {\em $(m,r,\alpha)$-stable point} of $f$ on $S$ if for a database $S'$ containing $m$ iid samples from $S$ we have  $\Pr[\|f(S')-c\|_2\leq r]\geq\alpha$.
If such a point $c$ exists,we say that  $f$ is $(m,r,\alpha)$-stable on $S$. We will call $r$ the {\em radius} of the stable point $c$.
\end{definition}

In the sample and aggregate framework~\cite{NRS07}, the goal is to privately identify an $(m,r,\alpha)$-stable point of a function $f$ on the given input database $S$.
Intuitively, if $r$ is small and $\alpha$ is large, then such a point would be a good (private) substitute for (the non-private value) $f(S)$. Note that a stable point with small $r$ and with large $\alpha$ might not exist.



\begin{theorem}[\cite{NRS07} restated, informal]\label{thm:NRS07}
Fix a desired stability parameter $m$, and let $n\geq 4d^2 m$.
There exists an efficient differentially private algorithm that given a database $S\in U^n$ and a function $f:U^*\rightarrow X^d$, identifies an $\Big(m,O(\sqrt{d}\cdot r_{opt})+\sqrt{d}\cdot e^{-\Omega\left(\sqrt{\frac{n}{m}}\frac{1}{d}\right)},0.51\Big)$-stable point of $f$ on $S$, where $r_{opt}$ is the smallest $r$ s.t. $f$ is $(m,r,0.51)$-stable on $S$.
\end{theorem}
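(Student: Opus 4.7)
The plan is to reconstruct the sample-and-aggregate argument of NRS07. First, I would partition the input $S \in U^n$ into $k = \lfloor n/m \rfloor$ disjoint sub-databases $S_1,\ldots,S_k$ each of size $m$, and compute $z_i = f(S_i) \in X^d$, producing a virtual database $T = (z_1,\ldots,z_k)$. The crucial structural observation is that changing one row of $S$ affects at most one block $S_i$, and hence at most one entry of $T$; so $T$ is a neighboring-preserving transformation of $S$, and any $(\epsilon,\delta)$-DP mechanism applied to $T$ automatically yields an $(\epsilon,\delta)$-DP mechanism on $S$. The entire task therefore reduces to privately aggregating $T$ into a single point close to $f(S)$.

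Next I would use the stability hypothesis. Let $c^\ast \in X^d$ be a $(m, r_{opt}, 0.51)$-stable center, so that an iid draw $S'$ of size $m$ from $S$ satisfies $\Pr[\|f(S') - c^\ast\|_2 \le r_{opt}] \ge 0.51$. The disjoint blocks $S_1,\ldots,S_k$ are not exactly iid samples from $S$, but for $n \ge 4d^2 m$ (so $k \ge 4d^2$) the small deviation between sampling with and without replacement is absorbed into the failure probability. A Chernoff bound then shows that with probability at least $1 - e^{-\Omega(\sqrt{k}/d)}$ a strict majority of the $z_i$'s lie in the ball $B(c^\ast, r_{opt})$. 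I would now apply the NRS07 ``center of attention'' aggregator to $T$: for each coordinate $j \in [d]$, privately estimate the center of the majority cluster of the one-dimensional projections $\{(z_i)_j\}_i$ using their 1-dimensional mechanism, which (as the paper's footnote on NRS07 recalls) produces a per-coordinate error of $O(r_{opt}/\epsilon) + (1/\epsilon)\, e^{-\Omega(\epsilon\sqrt{k}/d)}$. Concatenating the $d$ coordinates and converting the per-coordinate ($L_\infty$) bound to an $L_2$ bound introduces the $\sqrt{d}$ factor, yielding an output $\hat z$ within distance $O(\sqrt{d}\, r_{opt}) + \sqrt{d}\, e^{-\Omega(\sqrt{n/m}/d)}$ of $c^\ast$. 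Since $\hat z$ is close to $c^\ast$ with high probability over the mechanism's coins, and $c^\ast$ is within $r_{opt}$ of $f(S')$ with probability $0.51$, the triangle inequality shows that $\hat z$ itself is an $(m, R, 0.51)$-stable point of $f$ on $S$ with the claimed $R$.

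The main obstacle, and where care is needed, is the privacy analysis of the aggregator: the noise scale one would like to add is proportional to the diameter of the majority cluster, which is a data-dependent quantity rather than a worst-case global sensitivity. The NRS07 strategy is to privately compute a safe upper bound on this scale first (using a smooth-sensitivity style procedure, or a staircase of candidate scales combined with a stability test), and only then to add Laplace/Gaussian noise at that scale to the average of the majority cluster. The small additive term $\sqrt{d}\, e^{-\Omega(\sqrt{n/m}/d)}$ in the radius absorbs two failure modes: the event that fewer than a majority of the $z_i$'s land in $B(c^\ast, r_{opt})$ (handled by the Chernoff tail above), and the event that the private scale-selection procedure picks a scale much larger than the true cluster diameter. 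Composing the scale-selection step with the coordinate-wise noisy averaging, each under a constant fraction of the privacy budget, completes the $(\epsilon,\delta)$-privacy guarantee. The remaining details -- quantifying the Chernoff deviation with the exact $\sqrt{n/m}/d$ exponent, and verifying that the NRS07 aggregator's sensitivity analysis still goes through when the inputs are the non-iid $z_i$'s rather than genuine iid samples -- are routine.
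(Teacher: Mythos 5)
This statement is not proved in the paper at all: it is an informal restatement of the result of Nissim, Raskhodnikova, and Smith \cite{NRS07}, justified only by citation together with the footnote recording that their aggregator returns a point whose error vector has magnitude $O(r_{opt})+e^{-\Omega(\sqrt{n/m}/d)}$ in each coordinate (the $\sqrt{d}$ in the radius being just the $L_\infty$-to-$L_2$ conversion). Your proposal reconstructs exactly the argument that citation points to -- partition into blocks of size $m$, observe that one row of $S$ affects one block output $z_i$, and privately aggregate the $z_i$'s with noise calibrated to a data-dependent (smooth-sensitivity-style) scale -- so in that sense you are on the same route as the source, and the triangle-inequality step at the end (output close to the stable point $c^\ast$, hence itself a stable point with a slightly larger radius) is sound and, unlike the paper's own Theorem~\ref{thm:SA}, needs no generalization argument. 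Two points in your sketch are loose, though neither is fatal for an informal statement. First, NRS07 do not run $d$ separate one-dimensional aggregators: a coordinate-wise scheme would pay a factor of $d$ (or $\sqrt{d}$ under advanced composition) in the privacy budget, which is not reflected in your error accounting; the actual construction runs a single $d$-dimensional center-of-attention aggregator and adds noise from a $d$-dimensional admissible distribution, which is where the $1/d$ inside the exponent $e^{-\Omega(\sqrt{n/m}/d)}$ comes from. Second, the additive term $\sqrt{d}\cdot e^{-\Omega(\sqrt{n/m}/d)}$ is not an ``absorbed failure probability'': a low-probability event cannot be traded for a radius bound. It is the noise floor of the smooth-sensitivity calibration over the bounded domain $X^d$ (diameter $\approx\sqrt{d}$), i.e., the scale the mechanism must add even when the cluster of $z_i$'s is tight; the Chernoff-type failure and the event that a majority of blocks miss $B(c^\ast,r_{opt})$ stay in the (constant) failure probability, which is why the conclusion is stated with parameter $0.51$ rather than with the radius inflated.
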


Note the following caveats in Theorem~\ref{thm:NRS07}: 
(1)~The function $f$ might only be $\left(m,r,0.51\right)$-stable on $S$ for very large values of $r$, in which case the similarity to (the non-private) $f(S)$ is lost. (2)~The error in the radius of the stable point grows with $\sqrt{d}$, which might be unacceptable in high dimensions.\footnote{
Let $c$ be an $(m,r,0.51)$-stable point of $f$ on $S$. In~\cite{NRS07}, the returned point $c'$ is s.t.\ that the error vector $(c-c')$ has magnitude $O(r_{opt})+e^{-\Omega\left(\sqrt{\frac{n}{m}}\frac{1}{d}\right)}$ in each coordinate.}
Using Theorem~\ref{thm:main} it is possible to avoid those two caveats.

\begin{theorem}\label{thm:SA}
Fix privacy parameters $\epsilon,\delta$, failure parameter $\beta$, and desired stability parameters $m,\alpha$. Let\\
$$\frac{n}{m}\geq O\left(\frac{\sqrt{d}\cdot \log\left(\frac{1}{\beta}\right)}{\alpha\cdot\min\{\alpha,\epsilon\}}\log\left(\frac{nd}{\alpha\beta\epsilon\delta}\right)\sqrt{\log\left(\frac{1}{\alpha\beta\epsilon\delta}\right)}\cdot 9^{\log^*(2|X|\sqrt{d})}\right).$$
There exists an efficient $(\epsilon,\delta)$-differentially private algorithm that given a database $S\in U^n$ and a function $f:U^*\rightarrow X^d$, with probability at least $(1-\beta)$, identifies an $(m,O(r_{opt}\cdot\sqrt{\log n}),\frac{\alpha}{8})$-stable point of $f$ on $S$, where $r_{opt}$ is the smallest $r$ s.t. $f$ is $(m,r,\alpha)$-stable on $S$.
\end{theorem}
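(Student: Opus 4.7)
The plan is to follow the classical sample-and-aggregate template of~\cite{NRS07}, but replace their aggregator by our 1-cluster algorithm from Theorem~\ref{thm:main}. On input $S\in U^n$, partition $S$ uniformly at random into $k=\lfloor n/m\rfloor$ disjoint sub-samples $S_1,\ldots,S_k$ of size $m$, compute $Y_i := f(S_i)\in X^d$, and invoke the 1-cluster algorithm on the database $(Y_1,\ldots,Y_k)$ with target cluster size $t:=\alpha k/2$ and privacy parameters $\epsilon'=\min\{\epsilon,\alpha\}/O(1)$, $\delta'$ a suitably small fraction of $\delta$. Return the center $c^*$ produced.

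Privacy is immediate. Because the $S_i$'s are disjoint, modifying one row of $S$ modifies at most one coordinate of $(Y_1,\ldots,Y_k)$, so the $(\epsilon',\delta')$-DP guarantee of the 1-cluster algorithm transfers verbatim; by our choice of $\epsilon'$ and $\delta'$, the composite algorithm is $(\epsilon,\delta)$-differentially private.

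For utility, fix a stability witness $c\in X^d$ with $\Pr_{S'}[\|f(S')-c\|_2\le r_{opt}]\ge\alpha$. Since each $S_i$ is a uniformly random size-$m$ subset of $S$, we have $\Pr[\|Y_i-c\|_2\le r_{opt}]\ge\alpha$ for every $i$, and a Chernoff/Hoeffding-type concentration for the (identically distributed, negatively correlated) indicators $\mathbf{1}[\|Y_i-c\|_2\le r_{opt}]$ shows that with probability $\ge 1-\beta/4$ at least $3\alpha k/4$ of the $Y_i$'s lie in $B(c,r_{opt})$. In particular the 1-cluster optimum on $(Y_1,\ldots,Y_k)$ at target $t=\alpha k/2$ is at most $r_{opt}$. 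Provided $t=\alpha k/2$ clears the admissibility threshold of Theorem~\ref{thm:main} for privacy parameter $\epsilon'$, a second $1-\beta/4$ event supplies a center $c^*$ such that the ball $B(c^*,R)$ with $R=O(r_{opt}\sqrt{\log k})=O(r_{opt}\sqrt{\log n})$ contains at least $t-\Delta\ge\alpha k/4$ of the $Y_i$'s. Substituting $\epsilon'=\min\{\epsilon,\alpha\}/O(1)$ into the threshold of Theorem~\ref{thm:main} reproduces precisely the lower bound on $n/m$ claimed in the statement (the $\min\{\alpha,\epsilon\}$ factor arises exactly from this choice of $\epsilon'$, while the $\log(1/\beta)/\alpha$ factor comes from Chernoff).

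The main obstacle is bridging the \emph{empirical} density guarantee ``at least $\alpha k/4$ of the $Y_i$ sit in $B(c^*,R)$'' to the \emph{distributional} $(m,R,\alpha/8)$-stability condition ``$\Pr_{S'}[\|f(S')-c^*\|_2\le R]\ge\alpha/8$''. I plan to handle this step through a generalization-from-differential-privacy argument: since $c^*$ is the output of an $\epsilon'$-DP mechanism whose inputs are (up to a negligible coupling error) iid samples from the pushforward distribution $\mathcal D$ of $f(S')$, a standard transfer theorem shows that for the ball-indicator query $q_{c^*}(Y):=\mathbf{1}[\|Y-c^*\|_2\le R]$ the true probability $\Pr_{Y\sim\mathcal D}[q_{c^*}(Y){=}1]$ is within $O(\epsilon'+\sqrt{\log(1/\beta)/k})$ of the empirical frequency $\frac{1}{k}\sum_i q_{c^*}(Y_i)$ with probability $\ge 1-\beta/4$. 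Our choice $\epsilon'\le\alpha/O(1)$ together with the lower bound on $k$ drives this slack below $\alpha/8$, so the distributional probability is at least $\alpha/4-\alpha/8=\alpha/8$, as required. A union bound over the three $1-\beta/4$ events completes the proof. (A minor technicality is that random partition blocks are not literally iid; this is remedied either by a total-variation coupling valid in the $n\gg m$ regime we are in, or by switching to iid sub-sampling combined with privacy amplification by sub-sampling.)
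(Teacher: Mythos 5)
Your overall strategy is the paper's: evaluate $f$ on many size-$m$ blocks, aggregate the outputs with the 1-cluster algorithm of Theorem~\ref{thm:main} at target $t=\alpha k/2$, and convert the empirical guarantee into distributional stability via the generalization property of differential privacy (the paper uses exactly the transfer theorem of Bassily et al., Theorem~\ref{thm:DPgeneralization}). The gap is in your sampling scheme, and it is not the ``minor technicality'' of your final parenthesis. Stability is defined with respect to $S'$ consisting of $m$ i.i.d.\ (with-replacement) samples from $S$, whereas your blocks come from a uniformly random partition of $S$ into disjoint sets, i.e.\ without-replacement samples that are moreover mutually dependent. Two steps break. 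First, the claim $\Pr[\|Y_i-c\|_2\le r_{opt}]\ge\alpha$ does not follow: $f$ is an arbitrary function, and the total-variation distance between with- and without-replacement samples of size $m$ is only $O(m^2/n)$, which is not controlled in the permitted regime (the hypothesis only lower-bounds $n/m$, so $m$ may be, say, $n^{0.9}$). Concretely, take $S$ with distinct rows and an $f$ that outputs a fixed point $c$ whenever its input contains a repeated row and a far-away grid point otherwise; when $m^2\gg n$ this $f$ is $(m,0,\alpha)$-stable with $\alpha$ close to $1$, yet every disjoint block maps to the far-away point, so your algorithm returns (essentially) that far point, which is not $(m,\cdot,\alpha/8)$-stable. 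Second, even ignoring the marginal mismatch, the transfer theorem requires the mechanism's input to be i.i.d.\ from the pushforward distribution $\mathcal D$ of $f(S')$; under a disjoint partition the $Y_i$ are exchangeable but neither independent nor $\mathcal D$-distributed, so the generalization step does not apply as stated, and the coupling you gesture at fails for the same $m^2/n$ reason.

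Your fallback --- ``switching to iid sub-sampling combined with privacy amplification by sub-sampling'' --- is precisely what the paper does, and you should promote it to the main argument: Algorithm \texttt{SA} draws $n/9$ i.i.d.\ samples from $S$ and partitions \emph{those} into $k=n/(9m)$ blocks, so each block is exactly $m$ i.i.d.\ samples from $S$ (matching the stability definition) and the blocks are independent; the price is that one row of $S$ may now appear in several blocks, which is handled by first analyzing privacy with the sampled multiset as input (neighboring multisets change one block, hence one $y_i$) and then invoking the secrecy-of-the-sample lemma (Lemma~\ref{lem:iidSampling}). With that substitution, your Chernoff step, the use of $\Delta\le t/2$ to retain $\alpha k/4$ points, and the generalization step with privacy parameter $O(\alpha)$ all go through and coincide with the paper's Lemmas~\ref{lem:SAPrivacy} and~\ref{lem:SAutility}.
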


Theorem~\ref{thm:SA} is proved using algorithm \texttt{SA}.
Similarly to~\cite{NRS07}, the idea is to apply $f$ onto $k$ random subsamples of the input database $S$ (obtaining outputs $Y=\{y_1,y_2,\dots,y_k\}$ in $X^d$), and then to privately identify a point $z\in X^d$ which is close to points in $Y$. We will use the following lemma to argue that the random subsampling step maintains privacy: 

\begin{lemma}[\cite{KLNRS08,BNSV15}]\label{lem:iidSampling}
Fix $\epsilon\leq1$ and let $\cal A$ be an $(\eps, \delta)$-differentially private algorithm operating on databases of size $m$.
For $n\geq2m$, construct an algorithm $\tilde{\cal A}$ that on input a database $D$ of size $n$ subsamples (with replacement) $m$ rows from $D$ and runs $\cal A$ on the result. Then $\tilde{\cal A}$ is $( \tilde{\eps} , \tilde{\delta} )$-differentially private for
$\tilde{\eps}=6\eps m /n$ and $\tilde{\delta}= \exp(6\eps m/n)\frac{4m}{n}\cdot\delta.$
\end{lemma}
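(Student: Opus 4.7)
The plan is to couple the randomness driving $\tilde{\cal A}(D)$ and $\tilde{\cal A}(D')$ and then invoke group privacy of $\cal A$. Fix neighboring $D,D'\in U^n$ differing only in some row $i^*$, and sample a single index vector $I=(i_1,\ldots,i_m)\in[n]^m$ i.i.d.\ uniformly together with one realization of the internal coins of $\cal A$; let $S=(D_{i_1},\ldots,D_{i_m})$ and $S'=(D'_{i_1},\ldots,D'_{i_m})$. Then $S$ and $S'$ agree outside the positions $j$ with $i_j=i^*$, so they are at Hamming distance exactly $K := |\{j:i_j=i^*\}|\sim\mathrm{Bin}(m,1/n)$. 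Applying $K$-fold group privacy of $(\epsilon,\delta)$-DP conditionally on $I$, for any event $T$,
\[
\Pr[{\cal A}(S)\in T\mid I] \;\le\; e^{K\epsilon}\,\Pr[{\cal A}(S')\in T\mid I] \;+\; \tfrac{e^{K\epsilon}-1}{e^\epsilon-1}\,\delta.
\]
Taking expectation over $I$ gives $\Pr[\tilde{\cal A}(D)\in T]\le\E[e^{K\epsilon}\,g(I)] + \delta\cdot\E[(e^{K\epsilon}-1)/(e^\epsilon-1)]$, where $g(I):=\Pr[{\cal A}(S')\in T\mid I]\in[0,1]$ satisfies $\E[g(I)]=\Pr[\tilde{\cal A}(D')\in T]$.

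The $\delta$-term is controlled by the binomial moment generating function: since $K\sim\mathrm{Bin}(m,1/n)$, one has $\E[e^{K\epsilon}]=(1+(e^\epsilon-1)/n)^m$. Combining $\epsilon\le 1$ (so $e^\epsilon-1\le 2\epsilon$) with $n\ge 2m$ (so $2\epsilon m/n\le 1$), together with the elementary inequality $e^x\le 1+2x$ on $[0,1]$, yields $\E[e^{K\epsilon}]\le e^{2\epsilon m/n}\le 1+4\epsilon m/n$. Hence $\E[(e^{K\epsilon}-1)/(e^\epsilon-1)]\le (4\epsilon m/n)/\epsilon = 4m/n$, so the $\delta$-term contributes at most $(4m/n)\delta$, comfortably inside the claimed $\tilde\delta=e^{6\epsilon m/n}(4m/n)\delta$.

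For the main term $\E[e^{K\epsilon}g(I)]$ I plan to split $K$ at a threshold $K^*\approx 6m/n$. On $\{K\le K^*\}$ we have $e^{K\epsilon}\le e^{K^*\epsilon}\le e^{\tilde\epsilon}$, so that piece is immediately bounded by $e^{\tilde\epsilon}\E[g(I)]=e^{\tilde\epsilon}\Pr[\tilde{\cal A}(D')\in T]$. For the tail $\{K>K^*\}$ I intend to use $g(I)\le 1$ together with an exponential Markov estimate on the tilted binomial MGF $(1+(e^{\epsilon+\lambda}-1)/n)^m$: choosing $\lambda$ so that the Chernoff factor $e^{-K^*\lambda}$ on the tail is comparable to the multiplicative slack $(e^{\tilde\epsilon}-1)$, one absorbs $\E[e^{K\epsilon}\,\mathbf{1}_{K>K^*}]$ into the $\tilde\delta$ budget by a second invocation of group privacy at level $K^*\epsilon$ (which is why $\tilde\delta$ carries the factor $e^{\tilde\epsilon}$ rather than being an absolute constant times $\delta$). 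Summing the three contributions---the $\{K\le K^*\}$ bound, the tail bound, and the $\delta$-term from the first paragraph---then gives the stated $(\tilde\epsilon,\tilde\delta)$-DP guarantee.

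The hard part is precisely this last balancing step. A naive application of the joint convexity of the hockey-stick divergence (viewing $\tilde{\cal A}(D)$ as a mixture indexed by $K$ and bounding each conditional by group privacy) leaves an unabsorbed $O(\epsilon m/n)$ additive term that does \emph{not} scale with $\delta$, so it cannot fit inside a $\tilde\delta$ that is linear in $\delta$. The tighter analysis I sketched---threshold in $K$, Chernoff on the tail, and a second group-privacy step to convert the tail into a $\delta$-scaled loss---is what forces the concrete constants in the statement: the factor of $6$ in $\tilde\epsilon=6\epsilon m/n$ arises from choosing $K^*$ large enough that both the ``typical''-range inflation $e^{K^*\epsilon}$ and the Chernoff tail decay $e^{-\Omega(K^*\epsilon)}$ are simultaneously controllable, and the exponential prefactor $e^{6\epsilon m/n}$ in $\tilde\delta$ is exactly the group-privacy blow-up incurred when re-expressing the tail contribution as $\delta$-mass.
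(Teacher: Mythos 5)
The paper does not prove this lemma at all---it is imported verbatim from \cite{KLNRS08,BNSV15}---so your proposal has to stand on its own against the known proofs of subsampling amplification, and as written it has a genuine gap in the main (multiplicative) term. Your first step and your $\delta$-term are fine: coupling through a common index vector $I$, applying $K$-fold group privacy, and bounding $\E\bigl[(e^{K\epsilon}-1)/(e^\epsilon-1)\bigr]\le 4m/n$ via the binomial MGF is correct. But the threshold-plus-Chernoff plan for $\E[e^{K\epsilon}g(I)]$ does not work, and the proposed repair (``a second invocation of group privacy at level $K^*\epsilon$'') is not a valid move. Once you bound the tail using $g\le 1$, the quantity $\E[e^{K\epsilon}\mathbf{1}_{K>K^*}]$ is of order $\Pr[K>K^*]\,e^{O(K^*\epsilon)}\approx (m/n)^{K^*+1}$: it is a fixed positive number that scales with neither $\delta$ nor $\Pr[\tilde{\cal A}(D')\in T]$. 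Since $\delta$ may be negligible and $T$ may be chosen so that $\Pr[\tilde{\cal A}(D')\in T]$ is itself of order $\delta$, no choice of $K^*$ lets this term be absorbed into $e^{\tilde\epsilon}\Pr[\tilde{\cal A}(D')\in T]+\tilde\delta$ with $\tilde\delta$ linear in $\delta$. Group privacy cannot rescue this: it only relates output probabilities of runs on databases at bounded Hamming distance, and has no mechanism for converting sampling-tail probability mass into $\delta$-scaled mass. You correctly diagnosed this obstruction for the ``naive'' averaging argument, but your fix is exactly the step that fails, and the story you tell about where the constants $6$ and $e^{6\epsilon m/n}$ come from is therefore not the actual source of those constants.

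The missing idea, which is how the cited proofs (and modern treatments of amplification by subsampling) proceed, is to re-anchor the comparison: condition on the multiplicity $K=k$ of the differing index and compare the conditional law of ${\cal A}(S)$ given $K=k$ not to the same-$I$ run on $D'$, but---via a coupling that overwrites the $k$ affected positions---to the conditional law given $K=0$. The $K=0$ component is identical under $D$ and $D'$ and constitutes a $(1-1/n)^m\ge$ constant fraction of the \emph{same} mixture $\tilde{\cal A}(D')$, so each factor $e^{k\epsilon}$ from $k$-fold group privacy gets weighted by $\Pr[K=k]$ and summed exactly through the MGF $\bigl(1+(e^\epsilon-1)/n\bigr)^m\le e^{2\epsilon m/n}$ that you already computed; no truncation of $K$ and no $\delta$-free leftover term ever appears, and the constants in $\tilde\epsilon=6\epsilon m/n$ and $\tilde\delta=e^{6\epsilon m/n}\frac{4m}{n}\delta$ fall out of this conditioning argument (including the $(1-1/n)^{-m}$-type renormalization), not out of any threshold choice. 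To complete your write-up you would need to replace the entire third paragraph with an argument of this form.
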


\begin{lemma}\label{lem:SAPrivacy}
Algorithm \texttt{SA} is $(\epsilon,\delta)$-differentially private.
\end{lemma}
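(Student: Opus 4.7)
The plan is to combine the privacy guarantee of the 1-cluster algorithm from Theorem~\ref{thm:main} with the subsampling amplification provided by Lemma~\ref{lem:iidSampling}. First I would identify the ``inner'' private computation: Algorithm \texttt{SA} draws $k$ independent subsamples of size $m$ from $S$ (with replacement), applies the non-private function $f$ to each subsample to obtain a vector $Y=(y_1,\dots,y_k)\in (X^d)^k$, and then invokes the $(\epsilon_0,\delta_0)$-differentially private 1-cluster algorithm of Theorem~\ref{thm:main} on $Y$ to return a stable point. Here $\epsilon_0,\delta_0$ are internal parameters that I will set at the end so that the amplified privacy matches the target $(\epsilon,\delta)$.

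Next I would consider the auxiliary algorithm $\mathcal{A}'$ that takes as input a database $T\in U^{km}$ of size $km$, partitions it deterministically into $k$ blocks of size $m$, applies $f$ to each block to form $Y$, and then runs the private 1-cluster algorithm on $Y$. Because replacing one row of $T$ affects exactly one block, the resulting $Y$ differs in at most one coordinate between neighboring $T$'s. Since the 1-cluster algorithm is $(\epsilon_0,\delta_0)$-differentially private when $Y$ is viewed as its input database, and applying $f$ blockwise is merely (data-independent) preprocessing of a single row of $Y$ from each row of $T$, it follows that $\mathcal{A}'$ is $(\epsilon_0,\delta_0)$-differentially private in its input $T$.

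Now I would observe that drawing $k$ independent subsamples of size $m$ with replacement from $S$ is distributionally identical to drawing a single multiset $T$ of $km$ samples with replacement from $S$ and then partitioning it deterministically. Hence Algorithm \texttt{SA} is equivalent to ``subsample $km$ rows from $S$ with replacement, then run $\mathcal{A}'$.'' Applying Lemma~\ref{lem:iidSampling} with sample size $km$ and database size $n$, \texttt{SA} is $(\tilde{\epsilon},\tilde{\delta})$-differentially private with $\tilde{\epsilon}=6\epsilon_0\cdot km/n$ and $\tilde{\delta}=\exp(6\epsilon_0\cdot km/n)\cdot(4km/n)\cdot\delta_0$.

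Finally, I would choose $\epsilon_0$ and $\delta_0$ so that $\tilde{\epsilon}\leq\epsilon$ and $\tilde{\delta}\leq\delta$; concretely, setting $\epsilon_0=\Theta(\epsilon n/(km))$ and $\delta_0=\Theta(\delta n/(km))$ suffices. The hypothesis on $n/m$ in Theorem~\ref{thm:SA} provides enough slack for these choices while simultaneously satisfying the sample-complexity lower bound on $k$ required by Theorem~\ref{thm:main} (this is what utility asks for, not privacy, but the two budgets must be consistent). The main obstacle is not conceptual but rather the parameter bookkeeping: one must verify that the amplification factor $km/n$ and the lower bound on $k$ needed for the inner 1-cluster algorithm to be meaningful are simultaneously compatible with the stated bound on $n/m$, which I expect to follow directly from the choice of constants.
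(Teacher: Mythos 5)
Your proposal is correct and follows essentially the same route as the paper: argue that the non-sampling part (blockwise application of $f$ followed by the private 1-cluster mechanism) is differentially private because neighboring inputs yield neighboring databases $Y$, and then invoke Lemma~\ref{lem:iidSampling} for the iid subsampling step. The only cosmetic difference is that you leave the inner parameters $(\epsilon_0,\delta_0)$ generic and rescale at the end, whereas the paper runs $\cal M$ at $(\epsilon,\delta)$ itself and lets the subsample size $n/9$ absorb the amplification constants.
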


\begin{proof}
Let $\cal A$ denote an algorithm identical to \texttt{SA}, except without the iid sampling on step~\ref{step:SA_sample} (the input to $\cal A$ is the database $D$), and observe that $\cal A$ preserve $(\epsilon,\delta)$-differential privacy. To see this, let $D,D'$ be two neighboring databases, and consider the execution of $\cal A$ on $D$ and $D'$.
Next, note that there is at most one index $i$ s.t.\ $D_i$ differs from $D'_i$. Hence, $Y$ and $Y'$ are neighboring databases, and privacy is preserved by the properties of algorithm $\cal M$.

Algorithm \texttt{SA} (including the iid sampling) is $(\epsilon,\delta)$-differentially private by Lemma~\ref{lem:iidSampling} (iid sampling).
\end{proof}

\begin{algorithm*}[t]
\caption{\texttt{SA}}

\begin{enumerate}[rightmargin=10pt,itemsep=1pt]

\item[] {\bf Input:} Database $S$ containing $n$ elements from $U$, function $f:U^*\rightarrow X^d$, privacy parameters $\epsilon,\delta$, failure parameter $\beta$, and desired stability parameters $m,\alpha$. We denote $k\triangleq\frac{n}{9m}$.

\item[] {\bf Algorithm used:} An $(\epsilon\leq\frac{\alpha}{72},\delta\leq\frac{\beta\epsilon}{3})$-private algorithm $\cal M$ for solving the 1-cluster problem $(X^d,k,t)$ for every $t\geq t_{\rm min}$, with parameters $(\Delta{\leq}t/2,w)$ and error probability $\frac{\beta}{3}$.


\item\label{step:SA_sample} Let $D$ be the outcome of $n/9$ iid samples from $S$. Partition $D$ into $k$ databases $D_1,D_2,...,D_k$ of size $m$ each.

\item\label{step:SA_Y} Let $Y=\{y_1=f(D_1),y_2=f(D_2),\dots,y_k=f(D_k)\}$.

\item Apply $\cal M$ on the database $Y$ with parameter $t=\frac{\alpha k}{2}$ to get a point $z$. Output $z$.

\end{enumerate}
\end{algorithm*}

In the utility analysis of algorithm \texttt{SA} we first show that a ball around the returned point $z\in X^d$ contains $\gtrsim \alpha$ fraction of points in $Y=\{y_1=f(D_1),y_2=f(D_2),\dots,y_k=f(D_k)\}$. Afterwards, we will argue that this ball also contains $\gtrsim \alpha$ mass of the underlying distribution (i.e., the distribution defined by applying $f$ on a random subsample of $S$). The straightforward approach for such an argument would be to use VC bounds stating that for any ball (in particular, the ball around $z$) it holds that the fraction of points from $Y$ in it is close to the weight of the ball w.r.t.\ the underlying distribution. However, for this argument to go through we would need $|Y|$ to be as big as the VC dimension of the class of $d$-dimensional balls, which is $d+1$. This seems wasteful since our private algorithm for locating the ball only requires the generation of a small cluster of size $\approx\sqrt{d}$. Instead, will make use of the inherent generalization properties of differential privacy, first proven by Dwork et al.~\cite{DFHPRR14}. Specifically, we will use the following theorem of Bassily et al.~\cite{BNSSSU15} stating that any predicate computed with differential privacy automatically provides generalization:

\begin{theorem}[\cite{BNSSSU15}] \label{thm:DPgeneralization}
Let $\epsilon \in (0,1/3)$, $\delta \in (0,\epsilon/4)$, and $n\geq\frac{1}{\epsilon^2}\log(\frac{4\epsilon}{\delta})$.
Let $\AAA:U^n\rightarrow2^U$ be an $(\epsilon,\delta)$-differentially private algorithm that operates on a database of size $n$ and outputs a predicate $h:U\rightarrow\{0,1\}$.
Let $\DDD$ be a distribution over $U$, let $S$ be a database containing $n$ i.i.d.\ elements from $\DDD$, and let $h\leftarrow \AAA(S)$.
Then,
$$
\Pr_{S,\AAA}\left[|h(S)-h(\DDD)| > 18\epsilon\right]\leq \frac{\delta}{\eps},
$$
where $h(S)$ is the empirical average of $h$ on $S$, and $h(\DDD)$ is the expectation of $h$ over $\DDD$.
\end{theorem}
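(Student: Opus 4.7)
The plan is to exploit differential privacy as a form of algorithmic stability: since $h = \AAA(S)$ is an $(\eps,\delta)$-DP function of $S$, the predicate cannot depend too strongly on any single coordinate of $S$, so its empirical mean $h(S)$ must track $h(\DDD)$. The argument naturally splits into two stages: first an expectation-level bias bound of the form $|\E[h(S) - h(\DDD)]| = O(\eps) + \delta$, then an amplification to a high-probability statement via a \emph{monitor} in the style of Bassily et al.

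For the bias bound, I would use a resampling/coupling argument. Consider $n+1$ i.i.d.\ samples $X_1,\dots,X_n,Y \sim \DDD$, and set $h = \AAA(X_1,\dots,X_n)$. By exchangeability of the sample, $\E[h(S)] = \E[h(X_1)]$; since $Y$ is independent of $h$, $\E[h(Y)] = \E[h(\DDD)]$. Swapping the roles of $X_1$ and $Y$ (a distribution-preserving relabeling of the $(n+1)$-tuple) further gives $\E[h(Y)] = \E[\AAA(Y,X_2,\dots,X_n)(X_1)]$. The two expectations $\E[\AAA(X_1,X_2,\dots,X_n)(X_1)]$ and $\E[\AAA(Y,X_2,\dots,X_n)(X_1)]$ now evaluate a DP algorithm on neighboring inputs at the \emph{same} test point $X_1$; applying $(\eps,\delta)$-DP conditionally on $(X_1,Y,X_2,\dots,X_n)$ and integrating yields $|\E[h(X_1)] - \E[h(Y)]| \leq (e^\eps - 1) + \delta = O(\eps) + \delta$, which is the desired bias-level bound.

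For the high-probability step, I would use the monitor technique. Assume for contradiction that $p := \Pr[h(S) - h(\DDD) > 18\eps]$ exceeds $\delta/\eps$. Run $\AAA$ independently on $k \approx 1/p$ fresh samples $S^{(1)},\dots,S^{(k)} \sim \DDD^n$ to obtain predicates $h_1,\dots,h_k$, and let $\BBB$ output $(j^*, h_{j^*})$ with $j^* = \argmax_j [h_j(S^{(j)}) - h_j(\DDD)]$. A union bound over the $k$ independent copies shows that for $k \gtrsim 1/p$, with constant probability at least one of them witnesses the bad event, so $\E[h_{j^*}(S^{(j^*)}) - h_{j^*}(\DDD)]$ exceeds a constant times $18\eps$. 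On the other hand, $\BBB$ is a post-processing of $k$ independent DP runs and is therefore itself $(\eps,\delta)$-DP with respect to any single row of the concatenated input, so the stage-one expectation bound applies to $\BBB$ and caps this same gap at $O(\eps) + \delta$. Calibrating $k$ so that the two estimates are incompatible yields the contradiction and pins down $p \leq \delta/(2\eps)$; the two-sided statement follows by applying the argument symmetrically to $1 - h$.

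The main obstacle will be making the monitor step fully rigorous: the $\argmax$ operation is not differentially private on its own, so some care is required to verify that the output of $\BBB$ inherits the right stability under a single row-change, and to transfer the stage-one bias bound from the \emph{averaged} gap $h_{j^*}(D) - h_{j^*}(\DDD)$ over the whole input $D = (S^{(1)},\dots,S^{(k)})$ to the \emph{selected} gap $h_{j^*}(S^{(j^*)}) - h_{j^*}(\DDD)$. The sample-size hypothesis $n \geq \eps^{-2}\log(4\eps/\delta)$ enters precisely through this calibration: it ensures the $\delta$-leakage accumulated across the $k$ copies stays dominated by the $\eps$-term in the bias bound, which is what lets the constants pin down exactly the $18\eps$ deviation at failure probability $\delta/\eps$.
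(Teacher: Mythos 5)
This theorem is not proved in the paper at all: it is imported verbatim from Bassily et al.\ \cite{BNSSSU15} and used as a black box in the analysis of algorithm \texttt{SA}, so there is no in-paper argument to compare against. Your sketch is, in outline, the monitor-based proof from that cited work (expectation-level stability via a swap argument, then amplification to a high-probability statement by running $\AAA$ on $k$ independent copies and selecting the worst one), so the roadmap is the right one.

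As a proof, however, it has a genuine gap, and it is exactly the point you half-acknowledge at the end. The claim that $\BBB$ is ``a post-processing of $k$ independent DP runs and therefore itself $(\eps,\delta)$-DP'' is false as stated: the selection $j^*=\argmax_j\left[h_j(S^{(j)})-h_j(\DDD)\right]$ touches the raw data again through the empirical evaluations $h_j(S^{(j)})$, not only through the DP outputs $h_1,\dots,h_k$. Changing a single row of $S^{(\ell)}$ shifts $h_\ell(S^{(\ell)})$ by up to $1/n$ even when $h_\ell$ itself is unchanged, and this can deterministically flip a near-tie in the argmax; for instance, if $\AAA$ ignores its input (hence is $(0,0)$-DP) one can arrange the sub-datasets so that the output index changes with probability $1$ under a one-row change, so the pair $(j^*,h_{j^*})$ is not $(\eps,\delta)$-DP for any reasonable parameters. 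Consequently your stage-one bound cannot simply be ``applied to $\BBB$''; the selection must itself be made private (e.g., choosing among the $k$ candidates with a DP selection rule whose score is the empirical gap, of sensitivity $1/n$), or the argument restructured, and this is precisely where the hypothesis $n\geq\frac{1}{\eps^2}\log(\frac{4\eps}{\delta})$ does real work (the selection error scales like $\frac{\log k}{n\eps}$ with $k\lesssim\eps/\delta$, and the hypothesis makes it $O(\eps)$). Your stated explanation of that hypothesis---controlling accumulated $\delta$-leakage across the $k$ copies---is not convincing, since that leakage is handled by taking $k\lesssim\eps/\delta$ and nothing in your sketch otherwise uses $n$. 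A second, smaller gap: the expectation bound you actually prove is for a single dataset and a query-only output, whereas the monitor needs the multi-dataset version for algorithms outputting a (query, index) pair, of the form $\E\left[h_{j^*}(S^{(j^*)})-h_{j^*}(\DDD)\right]\leq (e^{\eps}-1)+k\delta$, which requires its own coordinate-wise swap argument over the $k$ sub-datasets. Until these two steps are carried out, the proposal is a faithful summary of the approach of \cite{BNSSSU15} rather than a proof.
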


\begin{lemma}\label{lem:SAutility}
Let \texttt{SA} be executed on a function $f$ and on a database $S$ of size $n$ such that $f$ is $(m,r,\alpha)$-stable on $S$. Assume \texttt{SA} has access to an $(\epsilon\leq\frac{\alpha}{72},\delta\leq\frac{\beta\epsilon}{3})$-private algorithm $\cal M$ for solving the 1-cluster problem $(X^d,k,t)$ for every $t\geq t_{\rm min}$ with parameters $(\Delta{\leq}t/2,w)$ and error probability $\frac{\beta}{3}$.
With probability at least $(1-\beta)$, the output $z$ is an $(m,wr,\frac{\alpha}{8})$-stable point of $f$ on $S$, provided that 
$$n\geq m\cdot O\left(\frac{t_{\rm min}}{\alpha}+\frac{1}{\alpha^2}\log\left(\frac{12}{\beta}\right)\right).$$
\end{lemma}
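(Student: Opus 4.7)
The plan is to decompose the argument into three stages: (i) a concentration step showing that the sub-database outputs $Y$ contain many points near the guaranteed stability center, (ii) a utility invocation of $\mathcal{M}$ to extract a center $z$ that is close to many points of $Y$, and (iii) a differential-privacy generalization step that converts the empirical statement about $Y$ into a distributional statement about $f$ applied to a fresh sub-sample from $S$. The key observation that enables stage (iii) is that $D$ is drawn by i.i.d.\ sampling from $S$ and then partitioned into \emph{disjoint} blocks $D_1,\ldots,D_k$, so the points $y_i=f(D_i)$ are truly i.i.d.\ samples from the distribution $\mathcal{D}=f(S^m)$ on $X^d$ (this is exactly the distribution appearing in the definition of stability).

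For stage (i), since $f$ is $(m,r,\alpha)$-stable on $S$, there is some $c\in X^d$ with $\Pr_{D_i}[\|y_i-c\|_2\le r]\ge\alpha$. Applying a multiplicative Chernoff bound to the indicators $\mathbf{1}[\|y_i-c\|_2\le r]$, as long as $k=n/(9m)=\Omega(\log(1/\beta)/\alpha)$, with probability at least $1-\beta/3$ the database $Y$ contains at least $\alpha k/2=t$ points inside the ball of radius $r$ around $c$. In particular, $r_{\mathrm{opt}}(Y,t)\le r$. For stage (ii), provided $t\ge t_{\min}$ (which forces the lower bound $n/m\ge O(t_{\min}/\alpha)$), the utility guarantee of $\mathcal{M}$ yields, with probability at least $1-\beta/3$, a point $z$ such that the ball of radius $w\cdot r_{\mathrm{opt}}(Y,t)\le wr$ around $z$ contains at least $t-\Delta\ge t/2=\alpha k/4$ points of $Y$. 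Equivalently, writing $h(y)=\mathbf{1}[\|y-z\|_2\le wr]$, the empirical mean $h(Y)\ge \alpha/4$.

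Stage (iii) is where the main subtlety lies. By Lemma~\ref{lem:SAPrivacy} (equivalently, by the privacy of $\mathcal{M}$ applied to $Y$), the map $Y\mapsto z$, and hence the map $Y\mapsto h$, is $(\epsilon,\delta)$-differentially private. Since $Y$ is a database of $k$ i.i.d.\ samples from $\mathcal{D}$, Theorem~\ref{thm:DPgeneralization} applies and gives, with probability at least $1-\delta/\epsilon\ge 1-\beta/3$, the bound $|h(Y)-h(\mathcal{D})|\le 18\epsilon$. Using $\epsilon\le\alpha/72$ (tightening the constant as needed), $18\epsilon$ is a sufficiently small fraction of $\alpha$ that we may conclude $h(\mathcal{D})\ge\alpha/8$. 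By the very definition of $\mathcal{D}$, this says $\Pr_{D'\sim S^m}[\|f(D')-z\|_2\le wr]\ge\alpha/8$, so $z$ is an $(m,wr,\alpha/8)$-stable point of $f$ on $S$.

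The main obstacle will be verifying the quantitative hypotheses of Theorem~\ref{thm:DPgeneralization}: we need $k=n/(9m)$ large enough for the theorem to be applicable (this forces $k\ge(1/\epsilon^2)\log(4\epsilon/\delta)=\Omega(\log(1/\beta)/\alpha^2)$, since $\epsilon$ scales with $\alpha$), and we need to combine this with the Chernoff bound of stage (i) and the threshold $t\ge t_{\min}$ of stage (ii). These three requirements, together with a union bound over the three failure events (each of probability at most $\beta/3$), yield the claimed sample complexity $n\ge m\cdot O(t_{\min}/\alpha+\log(1/\beta)/\alpha^2)$ and overall failure probability at most $\beta$. The remainder of the argument is bookkeeping of constants; no further new ideas are needed.
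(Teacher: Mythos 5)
Your proposal is correct and follows essentially the same route as the paper's proof: a Chernoff bound placing $t=\alpha k/2$ of the $y_i$'s near the stability center, the utility guarantee of $\mathcal{M}$ giving $h(Y)\ge\alpha/4$ for the ball predicate $h$ of radius $wr$ around $z$, and then Theorem~\ref{thm:DPgeneralization} (DP generalization, using that the $y_i$ are i.i.d.\ draws of $f$ on $m$-subsamples) to transfer this to $h(\mathcal{D})\ge\alpha/8$, with a union bound over three $\beta/3$ events yielding the stated sample complexity. Your remark that the constant in $\epsilon\le\alpha/72$ needs tightening (one really wants $18\epsilon\le\alpha/8$) is a fair observation that applies equally to the paper's own write-up and does not affect the structure of the argument.
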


\begin{proof}
Let algorithm \texttt{SA} be executed on a function $f$ and on a database $S$ such that $f$ is $(m,r,\alpha)$-stable on $S$. Let $c\in X^d$ be a stable point of $f$ on $S$, and let $\BBB_r(c,Y)$ denote the number of points in $Y$ within distance $r$ from $c$.

By the definition of the stable point $c$, for every $y_i$ (defined on step~\ref{step:SA_Y}) we have that $\Pr[\|y_i-c\|_2\leq r]\geq\alpha$. Hence, by the Chernoff bound, $\Pr[\BBB_r(c,Y)<\frac{\alpha k}{2}]\leq\exp(-\frac{\alpha k}{8})\leq\beta/3$. 

Assume that $\BBB_r(c,Y)\geq\frac{\alpha k}{2}$, i.e., there are at least $\frac{\alpha k}{2}=t$ points in $Y$ within distance $r$ from $c$. Hence, by the properties of algorithm $\cal M$, with probability at least $(1-\beta/3)$ the output $z$ is s.t.\ $\BBB_{wr}(z,Y)\geq \frac{t}{2}=\frac{\alpha k}{4}$.

We now argue that the ball of radius $wr$ around $z$ not only contains a lot of points from $Y$, but is also ``heavy'' w.r.t.\ the underlying distribution (i.e., the distribution defined by applying $f$ on a random subsample of $S$). To that end, consider the predicate $h:X^d\rightarrow\{0,1\}$ defined as $h(x)=1$ iff $\|x-z\|_2\leq wr$. That is, $h$ evaluates to 1 exactly on points inside the ball of radius $wr$ around $z$. So $h(Y)\geq\frac{\alpha}{4}$. By Theorem~\ref{thm:DPgeneralization}, assuming that $k\geq\frac{5184}{\alpha^2}\log(\frac{12}{\beta})$, with probability at least $1-\delta/\epsilon\geq1-\beta/3$ we have that for a random subsample $S'$ containing $m$ i.i.d.\ samples from $S$,  it holds that $\Pr[h(f(S'))=1]=\Pr[\|f(S')-z\|_2\leq wr]\geq\frac{\alpha}{4}-18\epsilon\geq\frac{\alpha}{8}$. 

All in all, provided that $n=9mk\geq\frac{18m}{\alpha} t_{\rm min}+\frac{46646m}{\alpha^2}\log(\frac{12}{\beta})$, with probability at least $(1-\beta)$, the output $z$ is an $(m,wr,\frac{\alpha}{8})$-stable point of $f$ on $S$.
\end{proof}

Theorem~\ref{thm:SA} now follows from combining lemmas~\ref{lem:SAPrivacy}, and~\ref{lem:SAutility} with Theorem~\ref{thm:main}.

\bibliographystyle{abbrv}

\appendix

\section{Noisy average of vectors in $\R^d$}\label{sec:NoisyAVG}

{
\renewcommand{\thetheorem}{\ref{thm:gauss}}
\begin{theorem}[The Gaussian Mechanism \cite{DKMMN06}]
Let $\epsilon,\delta\in(0,1)$, and let $f:X^* \rightarrow \R^d$ be a function of $L_2$-sensitivity $k$. Denote $\sigma\geq\frac{k}{\epsilon}\sqrt{2\ln(1.25/\delta)}$.
The mechanism $\AAA$ that on input $D\in X^*$ 
adds independently generated noise with distribution $\NNN(0,\sigma^2)$ to each of the $d$ output terms of $f(D)$ preserves $(\epsilon,\delta)$-differential privacy.
\end{theorem}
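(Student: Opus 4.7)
The plan is to follow the classical proof of the Gaussian mechanism via the privacy loss random variable. First, I would exploit the spherical symmetry of $\NNN(0,\sigma^2)^d$ to reduce to a one-dimensional calculation. Fix neighboring databases $D,D'$, let $u=f(D)$, $v=f(D')$, and $\Delta = \|u-v\|_2 \leq k$. Applying an orthogonal rotation that sends $v-u$ to $\Delta\cdot e_1$ leaves the isotropic noise distribution invariant, so the two output distributions become product distributions whose last $d-1$ coordinates are identical (and hence cancel in any likelihood ratio). It therefore suffices to prove $(\epsilon,\delta)$-differential privacy for the one-dimensional pair $p_0 = \NNN(0,\sigma^2)$ versus $p_\Delta = \NNN(\Delta,\sigma^2)$ on $\R$, uniformly over $\Delta\in[0,k]$.

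Next, I would analyze the privacy loss $Z(x) = \ln(p_0(x)/p_\Delta(x)) = (\Delta^2 - 2\Delta x)/(2\sigma^2)$. When $X\sim p_0$, the event $\{Z(X)>\epsilon\}$ is equivalent to $\{X < \tfrac{\Delta}{2} - \tfrac{\sigma^2\epsilon}{\Delta}\}$, and by symmetry of the centered Gaussian this has the same probability as $\Pr[X > \tfrac{\sigma^2\epsilon}{\Delta} - \tfrac{\Delta}{2}]$. Plugging in the hypothesis $\sigma\geq(k/\epsilon)\sqrt{2\ln(1.25/\delta)}$ (hence $\sigma \geq (\Delta/\epsilon)\sqrt{2\ln(1.25/\delta)}$), I would apply the Mill's-ratio bound $\Pr[X > s] \leq (\sigma/(s\sqrt{2\pi}))\exp(-s^2/(2\sigma^2))$ and verify, by expanding $(\sigma\epsilon/\Delta - \Delta/(2\sigma))^2$ and using $\epsilon\leq 1$, that the resulting tail is bounded by $\delta$. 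The same argument applied with $D$ and $D'$ swapped handles the symmetric tail $\{Z(X) < -\epsilon\}$.

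Finally, I would close the argument with the standard conversion from a high-probability bound on the privacy loss to an $(\epsilon,\delta)$-DP guarantee: on the event $\{Z(X)\leq\epsilon\}$ we have $p_0(x)\leq e^{\epsilon}p_\Delta(x)$ pointwise, so for every measurable $T\subseteq\R$,
\[
p_0(T) \;\leq\; \int_{T\cap\{Z\leq\epsilon\}} e^{\epsilon}p_\Delta(x)\,dx \;+\; \Pr_{X\sim p_0}[Z(X)>\epsilon] \;\leq\; e^{\epsilon}\,p_\Delta(T) + \delta.
\]
Lifting this one-dimensional inequality back through the rotation from the first step yields the theorem. The main obstacle is the tail computation: one must verify that the ``$-\Delta/2$'' correction in the threshold $\tfrac{\sigma^2\epsilon}{\Delta}-\tfrac{\Delta}{2}$ does not destroy the bound, and this is precisely where the numerical constant $1.25$ (rather than $1$) in the definition of $\sigma$ earns its keep, since it supplies the slack needed to absorb the Mill's-ratio prefactor and the $e^{\epsilon/2}$ factor coming from expanding the square.
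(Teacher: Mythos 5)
This theorem is imported by the paper from prior work ([DKMMN06]; see also the treatment the paper itself points to in [DR14]) and is not proved in the paper. Your argument is the standard proof of that cited result -- rotation-invariance to reduce to one dimension, the privacy-loss random variable $Z(x)=(\Delta^2-2\Delta x)/(2\sigma^2)$, a Gaussian tail bound showing $\Pr[Z>\epsilon]\le\delta$ under $\sigma\ge\frac{k}{\epsilon}\sqrt{2\ln(1.25/\delta)}$ with $\epsilon\le 1$, and the conversion of a high-probability privacy-loss bound into $(\epsilon,\delta)$-differential privacy -- and it is correct as proposed.
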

\addtocounter{theorem}{-1}
}

We will use the Gaussian mechanism to obtain (noisy) averages of vectors in $\R^d$: 
Let $g:\R^d\rightarrow\{0,1\}$ be a predicate over vectors in $\R^d$, and denote 
$$\Delta_g=\max_{
\begin{array}{c} v\in\R^d \text{ s.t.}\\ g(v)=1 \end{array}}\|v\|_2.$$
Given a multiset of vectors $V$, we are interested in approximating $\frac{\sum_{v\in V:g(v)=1}v}{|\{v\in V:g(v)=1\}|}$. Since both the numerator and the denominator have bounded $L_2$-sensitivity, we could estimate each of them separately using the Gaussian mechanism.
Alternatively, we could directly analyze the $L_2$-sensitivity of $g(V)\triangleq\frac{\sum_{v\in V:g(v)=1}v}{|\{v\in V:g(v)=1\}|}$ and apply the Gaussian mechanism directly to $g(V)$. This has the advantage of having only an additive Gaussian error, rather than a noisy numerator/denominator. It is easier to analyze.

We now bound the $L_2$ sensitivity of $g(V)$.
Let $V$ be a multiset of vectors in $\R^d$ and let $V'=V\cup\{u\}$ for some vector $u$.
If $g(u)=0$ then $g(V)=g(V')$ and $\|g(V)-g(V')\|_2=0$. We assume therefore that $g(u)=1$.

Let $m=|\{v\in V:g(v)=1\}|$.
We need to analyze the $L_2$ norm of the vector $\nu=\frac{1}{m}\left(\sum_{v\in V : g(v)=1}v\right) - \frac{1}{m+1}\left(u+\sum_{v\in V : g(v)=1}v\right)$.
To that end, observe that the $i^\text{th}$ coordinate of $\nu$ is
\begin{eqnarray*}
\nu_i &=& \frac{1}{m}\left(\sum_{v\in V : g(v)=1}v_i\right) - \frac{1}{m+1}\left(u_i+\sum_{v\in V : g(v)=1}v_i\right)\\
&=& \frac{\left(\sum_{v\in V : g(v)=1}v_i\right) - mu_i}{m(m+1)}\\
&=& \frac{g(V)_i - u_i}{m+1}.
\end{eqnarray*}
Hence,
$$
\|\nu\|_2 = \sqrt{(\nu_1)^2+\cdots(\nu_d)^2} 
=\frac{1}{m+1}\sqrt{\langle u,u \rangle - 2\langle u,g(V) \rangle + \langle g(V),g(V) \rangle }
\leq \frac{2\Delta_g}{m+1},
$$
where the last inequality is since 
\begin{eqnarray*}
\left| \langle u,g(V) \rangle \right|&=&\left| u_1\cdot g(V)_1 +\cdots+u_d\cdot g(V)_d \right|\\
&=&\left|  \frac{u_1}{m}\left(\sum_{v\in V : g(v)=1}v_1\right) +\cdots+ \frac{u_d}{m}\left(\sum_{v\in V : g(v)=1}v_d\right) \right|\\
&\leq& \frac{1}{m}\sum_{v\in V : g(v)=1}\left|\langle u,v \rangle\right|\\
&\leq& \frac{1}{m}\sum_{v\in V : g(v)=1}\|u\|_2\cdot \|v\|_2\leq(\Delta_g)^2,
\end{eqnarray*}
and similarly
$$
\left| \langle g(V),g(V) \rangle \right| = \left| \left\langle \frac{1}{m}\sum_{v\in V : g(v)=1}v ,g(V) \right\rangle \right|
= \frac{1}{m} \left|  \sum_{v\in V : g(v)=1} \langle v ,g(V) \rangle \right|\leq(\Delta_g)^2.
$$

So, for any $V$ and $V'=V\cup\{u\}$ we have that $\|g(V)-g(V')\|_2\leq \frac{2\Delta_g}{m+1}$, where $m=|\{v\in V:g(v)=1\}|$. Therefore, for any two neighboring sets $V_1=V\cup\{u\}$ and $V_2=V\cup\{v\}$ we have that $\|g(V_1)-g(V_2)\|_2\leq \frac{4\Delta_g}{m+1}$ by the triangle inequality.

We will use algorithm \texttt{NoisyAVG} (algorithm~\ref{alg:NoisyAVG}) for privately obtaining (noisy) averages of vectors in $\R^d$.

\begin{observation}
Let $V$ and $g$ be s.t.\ $m=|\{v\in V : g(v)=1\}|\geq\frac{16}{\epsilon}\ln(\frac{2}{\beta\delta})$. 
With probability at least $(1-\beta)$ algorithm $\texttt{NoisyAVG}(V)$ returns $g(V)+\eta$ where $\eta$ is a vector whose every coordinate is sampled i.i.d.\ from $\NNN(0,\sigma^2)$ for some $\sigma\leq\frac{16\Delta_g}{\epsilon m}\sqrt{2\ln(8/\delta)}$.
\end{observation}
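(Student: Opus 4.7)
Although the pseudocode of \texttt{NoisyAVG} is deferred, the observation constrains its structure: the algorithm must privately estimate $m=|\{v\in V:g(v)=1\}|$, use this estimate to calibrate a noise scale, and then release $g(V)$ with Gaussian noise of that scale in each coordinate. The plan is to show that (i) such an estimation step can be carried out with the privacy budget available, and (ii) whenever $m$ is as large as the hypothesis demands, the calibrated noise scale is at most the claimed $\sigma$ with probability $1-\beta$.

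First, I would split the privacy budget $(\epsilon,\delta)$ into two halves: one half is used by a Laplace estimator of $m$, and the other half by the Gaussian mechanism applied to $g(V)$. Since $m$ has sensitivity $1$, the estimator $\tilde m = m + \Lap(2/\epsilon)$ is $(\epsilon/2,0)$-differentially private. Define a safe lower bound $\hat m = \tilde m - \tfrac{2}{\epsilon}\ln(2/\delta)$; by the Laplace tail, $\Pr[\hat m > m] \le \delta/2$, so with probability at least $1-\delta/2$ we have $\hat m \le m$, hence the sensitivity bound $\tfrac{4\Delta_g}{m+1} \le \tfrac{4\Delta_g}{\hat m}$ derived in the preceding discussion is valid. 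Conditioning on this event and applying Theorem~\ref{thm:gauss} to $g(V)$ with $L_2$-sensitivity $\tfrac{4\Delta_g}{\hat m}$ and privacy parameters $(\epsilon/2,\delta/2)$ yields the Gaussian noise vector $\eta$ with per-coordinate variance
\[
\sigma^2 \;=\; \Bigl(\tfrac{4\Delta_g}{\hat m}\cdot \tfrac{2}{\epsilon}\Bigr)^2 \cdot 2\ln(2.5/(\delta/2)) \;=\; \Bigl(\tfrac{8\Delta_g}{\hat m\,\epsilon}\Bigr)^2 \cdot 2\ln(5/\delta).
\]

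Second, for the utility analysis, I would use the hypothesis $m \ge \tfrac{16}{\epsilon}\ln(2/(\beta\delta))$ to argue that with probability $\ge 1-\beta/2$, the Laplace noise is at most $\tfrac{2}{\epsilon}\ln(4/\beta)$ in absolute value. Combining this with the deterministic shift $\tfrac{2}{\epsilon}\ln(2/\delta)$ applied to form $\hat m$, we obtain
\[
\hat m \;\ge\; m - \tfrac{2}{\epsilon}\bigl[\ln(4/\beta) + \ln(2/\delta)\bigr] \;\ge\; m - \tfrac{4}{\epsilon}\ln(4/(\beta\delta)) \;\ge\; m/2,
\]
where the last inequality uses the hypothesized lower bound on $m$. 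Substituting $\hat m \ge m/2$ into the expression for $\sigma$ gives $\sigma \le \tfrac{16\Delta_g}{m\,\epsilon}\sqrt{2\ln(5/\delta)} \le \tfrac{16\Delta_g}{m\,\epsilon}\sqrt{2\ln(8/\delta)}$, which is exactly the claimed bound. A union bound over the two failure events (large Laplace noise; bad Gaussian conditioning) is absorbed into $\beta$, completing the argument.

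The main obstacle I anticipate is not technical but bookkeeping: matching the various constants $(16,8,2)$ inside the logarithms and the factor $16$ in front of $\Delta_g/(\epsilon m)$ requires choosing the split of $(\epsilon,\delta)$ and the shift defining $\hat m$ somewhat carefully so that the slack from ``$\hat m \ge m/2$'' and from $\sqrt{2\ln(5/\delta)} \le \sqrt{2\ln(8/\delta)}$ together absorb all losses. Everything else is a direct application of the Laplace tail bound and Theorem~\ref{thm:gauss} to the sensitivity bound $\tfrac{4\Delta_g}{m+1}$ already established in the preceding paragraphs of the appendix.
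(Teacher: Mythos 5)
Your proposal is correct and takes essentially the paper's (implicit) route: the actual \texttt{NoisyAVG} sets $\hat m = |\{v\in V: g(v)=1\}| + \Lap(2/\epsilon) - \frac{2}{\epsilon}\ln(2/\delta)$ and $\sigma = \frac{8\Delta_g}{\epsilon \hat m}\sqrt{2\ln(8/\delta)}$, exactly the structure you reconstructed (up to the constant inside the logarithm), and the observation then follows from the Laplace tail bound giving $\hat m \geq m/2 > 0$ with probability at least $1-\beta$, hence $\sigma \leq \frac{16\Delta_g}{\epsilon m}\sqrt{2\ln(8/\delta)}$, as you argue. The privacy-budget bookkeeping and the event $\hat m \le m$ that you include are not needed for this purely utility statement (privacy is handled separately in Theorem~\ref{thm:NoisyAVG}), but they do no harm.
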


\begin{observation}
The requirement that $\max_{v: g(v)=1 }\|v\|_2\leq\Delta_g$ could easily be replaced by
$$\max_{u,v: g(u)=g(v)=1}\|u-v\|_2\leq\Delta_g.$$
That is, the predicate $g$ may define a subset of $\R^d$ of diameter $\leq\Delta_g$, not necessarily around the origin.
\end{observation}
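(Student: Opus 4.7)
The plan is to reduce the new (diameter) hypothesis back to the original (norm) hypothesis by a translation argument. If the set $\{v \in \R^d : g(v) = 1\}$ is empty then $g(V)$ is undefined and there is nothing to prove, so I may fix a reference point $v_0 \in \R^d$ with $g(v_0) = 1$. Under the diameter hypothesis, every $v$ with $g(v) = 1$ satisfies $\|v - v_0\|_2 \leq \Delta_g$.

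Next I introduce the shifted quantity $\tilde{g}(V) := g(V) - v_0 = \frac{1}{m}\sum_{v \in V,\; g(v)=1}(v - v_0)$, where $m = |\{v \in V : g(v)=1\}|$. Because each vector $v - v_0$ with $g(v) = 1$ has $L_2$-norm at most $\Delta_g$, the sensitivity computation already carried out in the excerpt applies verbatim to $\tilde{g}$, giving $\|\tilde{g}(V_1) - \tilde{g}(V_2)\|_2 \leq \tfrac{4\Delta_g}{m+1}$ for any two neighboring multisets $V_1, V_2$. Since the translation by $v_0$ cancels in any difference, $g(V_1) - g(V_2) = \tilde{g}(V_1) - \tilde{g}(V_2)$, and the same sensitivity bound holds for $g$ itself. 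The Gaussian mechanism used inside \texttt{NoisyAVG} therefore still preserves $(\epsilon,\delta)$-differential privacy with the same noise scale $\sigma \leq \tfrac{16\Delta_g}{\epsilon m}\sqrt{2\ln(8/\delta)}$.

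The key point to emphasize is that $v_0$ is used purely as an analytic device: it is a fixed, data-independent vector in $\R^d$, and neither \texttt{NoisyAVG} nor its caller has to know it or compute it. Consequently the algorithm and its input parameter $\Delta_g$ are left unchanged. I do not anticipate any real obstacle; the only item requiring a quick check is that the derivation in the excerpt invokes the norm bound only through inequalities such as $|\langle u,v\rangle| \leq \|u\|_2\cdot\|v\|_2 \leq \Delta_g^2$ (and the analogous bound on $\langle g(V),g(V)\rangle$), each of which continues to hold after replacing $u, v$ by $u - v_0, v - v_0$.
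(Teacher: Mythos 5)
Your proof is correct. It rests on the same elementary fact as the paper's argument---fix any point $v_0$ with $g(v_0)=1$; the diameter hypothesis makes every qualifying vector have norm at most $\Delta_g$ after translating by $v_0$---but you deploy that fact differently. The paper's proof is a black-box algorithmic reduction: it runs \texttt{NoisyAVG} on the translated multiset $\tilde{V}=\{v-c:v\in V\}$ with the translated predicate $\tilde{g}(v)=g(v+c)$ and adds $c$ back to the output, which requires the caller to know an explicit point $c$ in the support of $g$ (in the application inside \texttt{GoodCenter} this is the known center of the ball $C$). You instead leave the mechanism untouched and use $v_0$ purely inside the analysis: since $g(V)$ equals $v_0$ plus the average of the shifted qualifying vectors (whenever at least one such vector is present), the shift cancels in $g(V_1)-g(V_2)$, so the bound $\|g(V_1)-g(V_2)\|_2\leq \frac{4\Delta_g}{m+1}$---which is the only way the norm hypothesis enters the privacy proof of \texttt{NoisyAVG}---holds verbatim under the diameter hypothesis, and the Gaussian noise scale is unchanged. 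What your route buys is that the unmodified algorithm is already $(\epsilon,\delta)$-differentially private and no reference point needs to be known or computed; what the paper's route buys is that the sensitivity computation never has to be reopened, since the new case reduces wholesale to the already-proved one via data pre-processing and output post-processing, both of which preserve the neighboring relation and differential privacy. Either way the output distribution is exactly $g(V)+\eta$ with the same $\sigma$, so the accompanying utility observation is unaffected.
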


\begin{proof}
Let $c\in\R^d$ be s.t.\ $g(c)=1$ and $\|c-v\|_2\leq\Delta_g$ for every $v\in\R^d$ s.t.\ $g(v)=1$.
Given a vector set $V$, we could apply algorithm \texttt{NoisyAVG} on the set $\tilde{V}=\{v-c:v\in V\}$ with the predicate $\tilde{g}:\R^d\rightarrow\{0,1\}$ s.t.\ $\tilde{g}(v)=g(v+c)$, and add the vector $c$ to the result.
\end{proof}

\begin{algorithm*}[t]
\caption{\texttt{NoisyAVG}}\label{alg:NoisyAVG}
{\bf Input:} Multiset $V$ of vectors in $\R^d$, predicate $g$, parameters $\epsilon,\delta$.
\begin{enumerate}[rightmargin=10pt,itemsep=1pt]
\item Set $\hat{m} = |\{v\in V : g(v)=1\}| + \Lap(2/\epsilon) - \frac{2}{\epsilon}\ln(2/\delta)$. If $\hat{m}\leq 0$ then output $\bot$ and halt.
\item Denote $\sigma = \frac{8\Delta_g}{\epsilon\hat{m}}\sqrt{2\ln(8/\delta)}$, and let $\eta\in\R^d$ be a random noise vector with each coordinate sampled independently from $\NNN(0,\sigma^2)$. Return $g(V)+\eta$.
\end{enumerate}
\end{algorithm*}

\begin{theorem}\label{thm:NoisyAVG}
Algorithm \texttt{NoisyAVG} is $(\epsilon,\delta)$-differentially private.
\end{theorem}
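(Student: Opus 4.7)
The proof plan is to treat \texttt{NoisyAVG} as a two-phase mechanism---a Laplace-based noisy count $\hat m$ followed by a Gaussian release whose scale depends on $\hat m$---and to establish privacy via a conditional composition argument. Phase~1 is $(\epsilon/2,0)$-DP since the query $|\{v\in V : g(v)=1\}|$ has sensitivity $1$ and we add $\Lap(2/\epsilon)$ noise (Theorem~\ref{thm:lap}). For Phase~2, I would show that on the ``good'' Laplace event $\{Z\le\phi\}$, where $\phi=\tfrac{2}{\epsilon}\ln(2/\delta)$ is the deterministic offset used in the algorithm, the scale $\sigma(\hat m)=\tfrac{8\Delta_g}{\epsilon\hat m}\sqrt{2\ln(8/\delta)}$ is large enough to make the Gaussian release $(\epsilon/2,\delta/2)$-DP at the sensitivity of $g(\cdot)$ derived in the preceding discussion.

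Fix neighbors $V_1,V_2$ with counts $m_1,m_2$ (differing by at most $1$). The sensitivity calculation preceding the theorem gives $\|g(V_1)-g(V_2)\|_2\le 4\Delta_g/(m+1)$ with $m+1\ge\max(m_1,m_2)$, so by Theorem~\ref{thm:gauss} the Gaussian step achieves $(\epsilon/2,\delta/2)$-DP whenever $\sigma(\hat m)\ge \tfrac{8\Delta_g}{\epsilon\max(m_1,m_2)}\sqrt{2\ln(2.5/\delta)}$. Substituting the algorithm's choice of $\sigma$ reduces this to $\hat m\le \max(m_1,m_2)\sqrt{\ln(8/\delta)/\ln(2.5/\delta)}$, which the simpler condition $\hat m_i\le m_i$ already satisfies (since $m_i\le\max(m_1,m_2)$ and the radical exceeds $1$).

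Next I would couple the Laplace draws of the two executions so that both use the same $Z$, and split the analysis by $\{Z\le\phi\}$ versus $\{Z>\phi\}$. On the good event we have $\hat m_i=m_i+Z-\phi\le m_i$ for both $i$, so the pointwise Gaussian-DP bound above applies at every fixed $\hat m$; on the bad event, the Laplace tail gives $\Pr[Z>\phi]\le \tfrac12 e^{-\epsilon\phi/2}=\delta/4$, which we absorb into the $\delta$ budget. For $T\subseteq \R^d$ (the non-$\bot$ outputs), writing $\Pr[M(V_1)\in T]$ as an integral over $z$ against the Laplace density $f_Z$, discarding the $z>\phi$ tail at cost $\delta/4$, and applying the Gaussian $(\epsilon/2,\delta/2)$-DP bound pointwise for each $z\le\phi$ yields an upper bound of the form $e^{\epsilon/2}\int f_Z(z)\,\mathbf{1}[\hat m_1(z)>0]\,\Pr[g(V_2)+\NNN(0,\sigma(\hat m_1(z))^2 I)\in T]\,dz + \tfrac{3\delta}{4}$. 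A change of variables $z\mapsto z-(m_2-m_1)$ inside the integral, combined with the Laplace density ratio bound $f_Z(z)/f_Z(z-\Delta_m)\le e^{\epsilon/2}$, converts this into $e^{\epsilon/2}\Pr[M(V_2)\in T]$, yielding $(\epsilon,3\delta/4)$-DP on non-$\bot$ outputs.

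The main obstacle is that the ``safe'' event $\hat m\le \min(m_1,m_2)$ is data-dependent, so one cannot plug it into black-box composition; the deterministic offset $-\phi$ is the key device that lets us couple the two executions and phrase the safe event purely in terms of the shared $Z$, at the cost of the Laplace tail $\delta/4$. For $T=\{\bot\}$, the probability $\Pr[\hat m_i\le 0]$ depends only on the noisy count and directly inherits the Laplace $(\epsilon/2,0)$-DP of Phase~1. Combining the $\bot$ and non-$\bot$ cases by a union over $T_\bot\cup T_R$ completes the proof of $(\epsilon,\delta)$-differential privacy.
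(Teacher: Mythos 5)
Your proposal is correct and is essentially the paper's own proof: both arguments truncate the event that the noisy count exceeds the true count (Laplace tail, absorbed into $\delta$), apply the pointwise Gaussian-mechanism bound at the realized scale $\sigma(\hat{m})$ using the $4\Delta_g/(m+1)$ sensitivity bound, and charge an extra $e^{\epsilon/2}$ for the Laplace count via a density-ratio argument, finishing with the same separate treatment of the $\bot$ output. Your change of variables $z\mapsto z-(m_2-m_1)$ in the raw noise is just a reparameterization of the paper's comparison of $\Pr[\hat{m}(V)=\tilde{m}]$ with $\Pr[\hat{m}(V')=\tilde{m}]$ at the same value $\tilde{m}$, so the two proofs coincide up to constants in the $\delta$ bookkeeping.
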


\begin{proof}
For a given parameter $\sigma$, let us denote by $\eta(\sigma)$ a random vector whose every coordinate is sampled iid from $\NNN(0,\sigma^2)$.
Now fix two neighboring sets $V,V'$ and a predicate $g$. Denote $m=|\{v\in V:g(v)=1\}|$, and recall that $\|g(V)-g(V')\|_2\leq\frac{4\Delta_g}{m}$.
The standard analysis of the Gaussian mechanism (see, e.g.,~\cite{DR14}) shows that for any $F\subseteq\R^d$ and for any $\sigma\geq\frac{8\Delta_g}{\epsilon m}\sqrt{2\ln(8/\delta)}$ it holds that
$$
\Pr_{\eta(\sigma)}[g(V)+\eta(\sigma)\in F]\leq e^{\epsilon/2} \cdot \Pr_{\eta(\sigma)}[g(V')+\eta(\sigma)\in F]+\frac{\delta}{6}.
$$

We will use $\hat{m}(V)$ to denote $\hat{m}$ as it is on step~1 of the execution of \texttt{NoisyAVG} on $V$. 
Note that $\Pr[\hat{m}(V)>m]=\Pr[\Lap(\frac{2}{\epsilon})>\frac{2}{\epsilon}\ln(2/\delta)]\leq\frac{\delta}{2}$. Hence, for any set of outputs $F$ s.t.\ $\bot\notin F$ we have that
\begin{eqnarray*}
&&\Pr[\texttt{NoisyAVG}(V)\in F]\\
&&\qquad\qquad = \int_{0}^{\infty} \Pr[\hat{m}(V)=\tilde{m}] \cdot \Pr[\texttt{NoisyAVG}(V)\in F | \hat{m}(V)=\tilde{m}] d\tilde{m}\\
&&\qquad\qquad\leq \frac{\delta}{2} + \int_{0}^{m} \Pr[\hat{m}(V)=\tilde{m}] \cdot \Pr[\texttt{NoisyAVG}(V)\in F | \hat{m}(V)=\tilde{m}] d\tilde{m}\\
&&\qquad\qquad= \frac{\delta}{2} +  \int_{0}^{m} \Pr[\hat{m}(V)=\tilde{m}] \cdot \Pr\left[g(V)+\eta\left(\frac{8\sqrt{2}}{\epsilon\tilde{m}}\sqrt{\ln(\frac{8}{\delta})}\right) \in F\right] d\tilde{m}\\
&&\qquad\qquad\leq \frac{\delta}{2} + \int_{0}^{m} e^{\epsilon/2}\cdot \Pr[\hat{m}(V')=\tilde{m}]\cdot\left(e^{\epsilon/2}\Pr\left[g(V')+\eta\left(\frac{8\sqrt{2}}{\epsilon\tilde{m}}\sqrt{\ln(\frac{8}{\delta})}\right) \in F\right]+\frac{\delta}{6}\right) d\tilde{m}\\
&&\qquad\qquad\leq \frac{\delta}{2} + \int_{0}^{\infty} e^{\epsilon/2}\cdot \Pr[\hat{m}(V')=\tilde{m}]\cdot\left(e^{\epsilon/2}\Pr\left[\texttt{NoisyAVG}(V')\in F | \hat{m}(V')=\tilde{m}\right]+\frac{\delta}{6}\right) d\tilde{m}\\
&&\qquad\qquad\leq \delta+ e^\epsilon\cdot\Pr[\texttt{NoisyAVG}(V')\in F].
\end{eqnarray*}

For a set of outputs $F$ s.t.\ $\bot\in F$ we have
\begin{eqnarray*}
\Pr[\texttt{NoisyAVG}(V)\in F] &=& \Pr[\texttt{NoisyAVG}(V)=\bot] + \Pr[\texttt{NoisyAVG}(V)\in F\setminus\{\bot\}]\\
&\leq& e^\epsilon\Pr[\texttt{NoisyAVG}(V')=\bot] + e^\epsilon\Pr[\texttt{NoisyAVG}(V')\in F\setminus\{\bot\}]+\delta\\
&=& e^\epsilon\Pr[\texttt{NoisyAVG}(V')\in F]+\delta.
\end{eqnarray*}
\end{proof}

\remove{

\section{Proof of Lemma~\ref{lem:RandomRotation}}
\label{sec:pfrandomrotation}

\paragraph{Lemma~\ref{lem:RandomRotation}.} {\em
Let $P \in (\R^d)^m$ be a set of $m$ points in the $d$ dimensional Euclidean space, and let $Z=(z_1,\ldots,z_d)$ be a random orthonormal basis for $\R^d$. Then,
$$\Pr_Z \left[\forall x,y \in P:\; \forall 1\leq i\leq d: \; \left| \langle x-y,  z_i\rangle\right| \leq 2\sqrt{\ln(dm/\beta)/d}\cdot \|x-y\|_2\right]\geq1-\beta.$$
}
\begin{proof}
The proof (taken from~\cite{VaziraniRao}) is based on the following simple observation: 
If $y=(y[1],\dots,y[d])$ is a random unit vector in $\R^d$, then with overwhelming probability one coordinate is at most $\log d$ times bigger than another. To see this, observe that the probability of choosing $y$ s.t.\ $|y[1]|\geq\frac{t}{\sqrt{d}}$ is exactly the ratio of the area of two $d$-dimensional caps of radius $R_{\rm cap}=\sqrt{1-\frac{t^2}{d}}$ to the total area of a unit $d$-dimensional sphere.
We can upperbound the area of these two caps by the area of a whole sphere of the same radius, and therefore,
$$
\Pr\left[|y[1]|>\frac{t}{\sqrt{d}}\right]\leq\frac{\text{area of a sphere of radius } R_{\rm cap}}{\text{area of a sphere of radius } 1}=\left(R_{\rm cap}\right)^{d-1}=\left(1-\frac{t^2}{d}\right)^{(d-1)/2}\leq2\exp(-t^2/2).
$$

Let us now return to proving Lemma~\ref{lem:RandomRotation}.
Fix $x,y\in P$ and an index $1\leq i\leq d$.
With the above observation in mind, we can now analyze the probability that the projection of $(x-y)$ onto $z_i$ is large:
$$
\Pr\left[\left|\left\langle x-y,z_i \right\rangle\right|\geq \frac{t}{\sqrt{d}}\cdot \|x-y\|_2\right] =
\Pr\left[\left|\left\langle \frac{x-y}{\|x-y\|_2},z_i \right\rangle\right|\geq \frac{t}{\sqrt{d}} \right]
=
\Pr\left[\left|\left\langle e_1 ,z_i \right\rangle\right|\geq \frac{t}{\sqrt{d}} \right]
\leq2 \exp(-t^2/2).
$$
Therefore, by the union bound, we get that 
$$
\Pr\left[ \exists x,y\in P\; \exists z_i\in Z \text{ s.t.\ } \left|\left\langle x-y,z_i \right\rangle\right|\geq \frac{t}{\sqrt{d}}\cdot \|x-y\|_2] \right]\leq 2m^2 d \exp(-t^2/2).
$$
The lemma now follows by setting $t=2\sqrt{\ln\left(\frac{d m}{\beta}\right)}$.
\end{proof}

}


\end{document}